\newtheorem{corollary}{Corollary}
\newtheorem{lemma}{Lemma}
\newtheorem{proposition}{Proposition}
\newtheorem{remark}{Remark}
\newtheorem{theorem}{Theorem}
\def\be{\begin{equation}}
\def\ee{\end{equation}}
\def\bee{\begin{eqnarray}}
\def\ene{\end{eqnarray}}
\def\bes{\begin{subequations}}
\def\ees{\end{subequations}}
\def\d{\displaystyle}
\def\v{\vspace{0.1in}}
\def\no{{\nonumber}}
\begin{document}

\baselineskip=13pt
\renewcommand {\thefootnote}{\dag}
\renewcommand {\thefootnote}{\ddag}
\renewcommand {\thefootnote}{ }

\pagestyle{plain}

\begin{center}
\baselineskip=16pt \leftline{} \vspace{-.3in} {\Large \bf Inverse Scattering Transforms for the Focusing and Defocusing MKdV Equations with Nonzero Boundary Conditions} \\[0.2in]
\end{center}

\begin{center}
Guoqiang Zhang$^{1,2}$ and Zhenya Yan$^{1,2,*}$\footnote{$^{*}${\it Email address}: zyyan@mmrc.iss.ac.cn}  \\[0.03in]
{\it \small $^{1}$Key Laboratory of Mathematics Mechanization, Academy of Mathematics and Systems Science, \\ Chinese Academy of Sciences, Beijing 100190, China \\
 $^{2}$School of Mathematical Sciences, University of Chinese Academy of Sciences, Beijing 100049, China} \\
 %(Date:\,\, \today)
\end{center}

\vspace{0.5in}

{\baselineskip=14pt

%\begin{tabular}{p{16cm}}
% \hline \\
%\end{tabular}

\vspace{-0.18in}

%\begin{abstract} \small \baselineskip=12pt
\noindent {\bf Abstract:}\, We explore systematically a rigorous theory of the inverse scattering transforms with matrix Riemann-Hilbert problems for both focusing and defocusing modified Korteweg-de Vries (mKdV) equations with non-zero boundary conditions (NZBCs) at infinity. Using a suitable uniformization variable, the direct and inverse scattering problems are proposed on a complex plane instead of a two-sheeted Riemann surface. For the direct scattering problem, the analyticities, symmetries and asymptotic behaviors of the Jost solutions and  scattering matrix, and discrete spectra are established. The inverse problems are formulated and solved with the aid of the Riemann-Hilbert problems, and the reconstruction formulae, trace formulae, and theta conditions are also found. In particular, we present the general solutions for the focusing mKdV equation with NZBCs and both simple and double poles, and for the defocusing mKdV equation with NZBCs and simple poles. Finally, some representative reflectionless potentials are in detail studied to illustrate distinct wave structures for both focusing and defocusing mKdV equations with NZBCs.

%\vspace{0.1in} \noindent MSC: 37K15; 35Q53; 35Q15; 37K40

%\vspace{0.1in} \noindent {\it Keywords:}  focusing and defocusing mKdV equations;  non-zero boundary conditions; direct and inverse scattering; Riemann-Hilbert problem; solitons; breathers

%\end{*abstract}

%\vspace{-0.05in}
%\begin{tabular}{p{16cm}}
%  \hline \\
%\end{tabular}

\vspace{0.15in}

\section{Introduction}

We consider the inverse scattering transforms and general solutions for both focusing and defocusing mKdV equations with non-zero boundary conditions (NZBCs) at infinity, namely
\bee\label{mKdV}
\begin{array}{l}
 q_t-6\,\sigma q^2q_x+q_{xxx}=0, \quad (x,t)\in \mathbb{R}^2,\v\\
\displaystyle\lim_{x\to\pm\infty}q(x,t)=q_{\pm},\quad \left|q_{\pm}\right|=q_0\ne0
\end{array}
\ene
based on the generalized matrix Riemann-Hilbert problem, not the usual Gel'fand-Leviton-Machenko integral equation~\cite{ist1,ist2}, where $q=q(x, t)\in\mathbb{R}$, $\sigma=-1, 1$ denote, respectively, the focusing and defocusing mKdV. Eq.~(\ref{mKdV}) possesses the scaling symmetry $q(x,t)\to \alpha q(\alpha x, \alpha^3t)$ with $\alpha$ being a non-zero parameter, and can be rewritten as a Hamiltonian system under the non-zero condition
\bee
 q_t=\frac{\partial}{\partial x}\frac{\delta H[q]}{\delta q},\qquad H[q]=\frac12\int_{-\infty}^{\infty}\left(\sigma q^4+q_x^2-\sigma q_0^4\right)\,\mathrm{d}x.
\ene
Eq.~(\ref{mKdV}) arises in many different physical contexts, such as acoustic wave and phonons in a certain anharmonic lattice \cite{Zabusky1967, Ono1992}, Alfv\'{e}n wave in a cold collision-free plasma \cite{Kakutani1969, Khater1998}, thin elastic rods \cite{Matsutani1991}, meandering ocean currents \cite{Ralph1994}, dynamics of traffic flow \cite{Komatsu1995, Ge2005}, hyperbolic surfaces \cite{Schief1995}, slag-metallic bath interfaces \cite{Agop1998}, and Schottky barrier transmission lines \cite{Ziegler2001}. The well-known Miura transform~\cite{ist2, miura} $u(x,t)=-\sigma q^2(x,t)+\sqrt{\sigma} q_x(x,t)$ established the relation between Eq.~(\ref{mKdV}) and the KdV equation $u_t+6uu_x+u_{xxx}=0$. The Hodograph transform~\cite{ht,ht2} $r(y,t)=e^{1/2\int^x q(x,t)dx},\, x=\int^y u^{-1}(y,t)dy$ was found between Eq.~(\ref{mKdV}) and the Harry-Dym equation $u_t=u^3u_{yyy}$. Moreover, there exists a transformation~\cite{Alejo18} $p(x,t)=q(x-6\sigma \nu^2t, t)-\nu$ between  Eq.~(\ref{mKdV}) equation and the Gardner equation (also called the combined KdV-mKdV equation)~\cite{ist2, geq} with BCs
\bee\label{geq}
\begin{array}{l}
 p_t -6\,\sigma (p^2+2\nu p)p_x+p_{xxx}=0, \quad \nu\in \mathbb{R}\backslash \{0\}, \,\, (x,t)\in \mathbb{R}^2, \v\\
\displaystyle\lim_{x\to\pm\infty}p(x,t)=p_{\pm},\quad p_{\pm}=q_{\pm}-\nu
\end{array}
\ene

Since the inverse scattering transform (IST) method was presented to solve the initial-value problem for the integrable KdV equation by Gardner, Greene, Kruskal and Miura~\cite{Gardner1967}, and for the nonlinear Schr\"odinger equation by Zakharov and Shabat~\cite{Shabat1972,Zakharov1973}, there have been some results on the IST for the mKdV equation. For instance, Wadati studied the focusing mKdV equation with zero boundary conditions (ZBCs) and derived simple-pole, double-pole and triple-pole solutions~\cite{Wadati1973, Wadati1982}, after which the $N$-soliton solutions and breather solutions for the focusing mKdV equation with ZBCs were found~\cite{Demontis2011}. Deift and Zhou first presented the long-time asymptotics of the defocusing mKdV equation with ZBCs by using the powerful steepest descent method~\cite{dz1993}. Recently, Germain {\it et al.} studied a full asymptotic stability for solitons of the Cauchy problem for the focusing mKdV equation~\cite{germain16}. The focusing mKdV equation with NZBCs was also studied such that the $N$-soliton solutions were obtained for the simple-pole case with pure imaginary discrete spectra~\cite{Au-Yeung1984, Yeung1988}, and the breather solutions were found for the simple-pole case with a pair of conjugate complex discrete spectra~\cite{Alejo2012}. The Hamiltonian formalism of the defocusing mKdV equation with special NZBCs $q_+=q_-$ was given~\cite{hej}. Recently, the long-time asymptotics of the simple-pole solution for the focusing mKdV equation with step-like NZBCS $q_- \neq q_+=0$ was studied~\cite{bal}.

To the best of our knowledge, there still are the following open questions on the ISTs for the defocusing mKdV equation with NZBCs and even for the focusing mKdV equation with NZBCs:

\begin{itemize}

{\bf\item Though some special simple-pole solutions of the focusing mKdV equation with NZBCs were given~\cite{Au-Yeung1984, Yeung1988,Alejo2012},  there still exists a natural problem whether it admits a general simple-pole solution with mixed pairs of conjugate complex discrete spectra and pure imaginary discrete spectra, i.e., $N_1$-breather-$N_2$-soliton solutions ($N_1+N_2=N$).

\item For the focusing mKdV equation with NZBCs, the multi-pole solutions, i.e., the  solutions corresponding to multiple-pole of the reflection coefficients, were not proposed yet. Especially, the general double-pole solutions with pairs of conjugate complex and pure imaginary discrete spectra were also unknown yet.

\item The inverse problem of the focusing mKdV equation with NZBCs was solved by using the Gel'fand-Leviton-Machenko equation before~\cite{bal}, rather than formulated in terms of a matrix Riemann-Hilbert problem via a suitable uniformization variable.

\item Though there are some partial results on the IST for the focusing mKdV equation with NZBCs, a more rigorous theory of the IST for the  focusing mKdV equation with NZBCs remains open, such as the Riemann surface, uniformaization variable, analyticity, the symmetries and the asymptotic of Jost solutions and scattering matrix, reconstruction formula, trace formula and theta conditions.

\item There are almost no known results on the IST for the defocusing mKdV equation with NZBCs.}
\end{itemize}

 Recently, Ablowitz, Biondini, Demontis, Prinary, {\it et al} presented a powerful approach to study the ISTs for some nonlinear Schr\"{o}dinger (NLS)-type equations with NZBCs at infinity~\cite{Prinari2006, Ablowitz2007, Prinari2011, Demontis2013, Demontis2014, Biondini2014, Biondini2014a, Kraus2015, Prinari2015, Prinari2015a, Mee2015, Biondini2015, Biondini2016, Biondini2016a, Pichler2017, Ablowitz2018, Prinari2018}, in which the inverse problems were formulated in terms of the suitable Riemann-Hilbert problems by defining uniformization variables. Inspired by the above-mentioned idea, in this paper we would like to develop a more general theory to study systematically the ISTs for both focusing and defocusing mKdV equations with NZBCs (\ref{mKdV}) to solve affirmatively those above-mentioned open issues in turn. It should be pointed out that the mKdV equation differs from the NLS equation for two main reasons: i) symmetries of their Lax pairs are different; ii) the potential in the NLS equation is complex while one in the mKdV equation is real such that more complicated conditions are required for the mKdV equation. Moreover, based on the above-mentioned relations between the mKdV equation Eq.~(\ref{mKdV}) and other physically important nonlinear wave equations, the obtained results can also be applied to these nonlinear wave equations.

   The rest of this paper is organized as follows. In Sec. II, we present a rigorous theory of the IST for the focusing mKdV equation with NZBCs and simple poles. Moreover, the inverse problem of the focusing mKdV equation with NZBCs was formulated in terms of a matrix Riemann-Hilbert problem by a suitable uniformization variable. As a results, a general simple-pole solution with pairs of conjugate complex discrete spectra and pure imaginary discrete spectra, i.e., $N_1$-breather-$N_2$-soliton solutions, are found. In Sec. III, we derive the IST for the focusing mKdV equation with NZBCs and double poles such that a general $N_1$-(breather, breather)-$N_2$-(bright, dark)-soliton solutions are found. In Sec. IV, the IST for the defocusing mKdV equation with NZBCs and simple poles is presented to generate multi-dark-soliton-kink solutions for some special cases. Finally, we give the conclusions and discussions in Sec. V.

It is well-known that the focusing or defocusing mKdV equation (\ref{mKdV}) is the compatibility condition, $X_t-T_x+[X, T]=0$, of  the ZS-AKNS
scattering problem (i.e., Lax pair)~\cite{Ablowitz1973}
\begin{alignat}{2}
\label{lax-x}
\varPhi_x&=X\varPhi, &\quad X(x ,t; k)&=ik\sigma_3+Q,  \\ \label{lax-t}
\varPhi_t&=T\varPhi, &\quad T(x, t; k)&=4k^2X-2ik\sigma_3\left(Q_x-Q^2\right)+2Q^3-Q_{xx},
\end{alignat}
where $k$ is an iso-spectral parameter, the eigenfunction $\varPhi(x, t; k)$ is chosen as a $2\times 2$ matrix, the potential matrix $Q$ is given by
\begin{align}
Q=Q(x, t)=
\begin{bmatrix}
0&q(x, t) \\
\sigma q(x, t)&0
\end{bmatrix},
\end{align}
and $\sigma=-1$ and $\sigma=1$ correspond to the focusing and defocusing mKdV equations, respectively.

\begin{remark}	The complex conjugate and conjugate transpose are denoted by $*$ and $\dag$, respectively. The three Pauli matrices are defined as
\begin{align}
\sigma_1=\begin{bmatrix}
0&1\\1&0
\end{bmatrix},\quad
\sigma_2=\begin{bmatrix}
0&-i\\i&0
\end{bmatrix},\quad
\sigma_3=\begin{bmatrix}
1&0\\
0&-1
\end{bmatrix},
\end{align}
and $e^{\alpha\widehat \sigma_3}A:=e^{\alpha\sigma_3}Ae^{-\alpha\sigma_3}$ with $A$ being a $2\times 2$ matrix and $\alpha$ being a scalar variable.
\end{remark}

\section{The focusing mKdV equation with NZBCs: simple poles}

\subsection{Direct scattering problem with NZBCs}

The direct scattering process can determine the analyticity and the asymptotic of the scattering eigenfunctions, symmetries, and asymptotics of the scattering matrix, discrete spectrum, and residue conditions. Though defining a suitable uniformization variable~\cite{Faddeev1987}, the two-sheeted Riemann surface for $k$ can be transformed into the standard complex $z$-plane, on which the scattering problem is  discussed conveniently.

\subsubsection{Riemann surface and uniformization variable}

Considering the following asymptotic scattering problem ($x\to \pm\infty$) of the focusing Lax pair (\ref{lax-x}) and (\ref{lax-t}) with $\sigma=-1$:
\begin{align}\label{alax}
\begin{aligned}
\varPhi_x&=X_{\pm}(k)\varPhi, & X_{\pm}(k)&=\lim_{x\to\pm \infty}X(k)=ik\sigma_3+Q_{\pm}, \\[0.04in]
\varPhi_t&=T_{\pm}(k)\varPhi, &T_{\pm}(k)&=\lim_{x\to\pm \infty}T(k)=\left(4k^2-2q_0^2\right)X_{\pm}(k),
\end{aligned}
\end{align}
with
$Q_{\pm}=\lim\limits_{x\to\pm \infty} Q(x,t)=
\begin{bmatrix}
0&q_{\pm}\\
-q_{\pm}&0
\end{bmatrix},$ one can obtain the fundamental matrix solution of Eq.~(\ref{alax}) as
\begin{align}
\varPhi^{bg}(x, t; k)=
\left\{
\begin{alignedat}{2}
&E_{\pm}(k)\,\mathrm{e}^{i\theta(x, t; k)\sigma_3}, &&k\ne\pm iq_0,\\[0.04in]
&I+\left(x-6\,q_0^2\,t\right)X_{\pm}(k),&\quad&k=\pm iq_0,
\end{alignedat}\right.
\end{align}
where
\begin{align} \label{lp}
E_{\pm}(k)=
\begin{bmatrix}
1&\dfrac{iq_{\pm}}{k+\lambda} \vspace{0.05in} \\
\dfrac{iq_{\pm}}{k+\lambda}&1
\end{bmatrix}=I+\dfrac{iq_{\pm}}{k+\lambda}\sigma_1,\quad
\theta(x, t; k)=\lambda\left[x+\left(4k^2-2q_0^2\right)t\right], \quad \lambda(k)=\sqrt{k^2+q_0^2}.
\end{align}

To discuss the analyticity of the Jost solutions, one needs to determine the regions where Im$\,\lambda(k)>0\,(<0)$ in $\theta(x, t; k)$ (cf. Eq.~(\ref{lp}), see, e.g., Refs.~\cite{Biondini2014,Pichler2017})). Since the defined $\lambda(k)$ is doubly branched, where the branch points are $k=\pm iq_0$, thus one should introduce a two-sheeted Riemann surface such that $\lambda(k)$ is a single-valued function on its each sheet. Let $k\mp iq_0=r_{\pm}\,\mathrm{e}^{i\theta_{\mp}+2im_{\mp}\pi}$ with the arguments $-\pi/2\le\theta_{\mp}<3\pi/2$ and $m_{\mp}\in \mathbb{Z}$, one obtains two single-valued  branches $\lambda_1(k)=\sqrt{r_-r_+}\,\mathrm{e}^{i\frac{\theta_-+\theta_+}{2}}$ and $\lambda_2(k)=-\lambda_1(k)$, respectively, on Sheet-I and Sheet-II, where the branch cut is the segment $i\left[-q_0,q_0\right]$. The region where Im $\lambda(k)>0$ is the upper-half plane (UHP) on Sheet-I and the lower-half plane (LHP) on Sheet-II. The region where Im $\lambda(k)<0$ is the LHP on Sheet I and UHP on Sheet II. Besides, $\lambda(k)$ is real-valued on real $k$ axis and the branch cut (see Fig.~\ref{f}(left)).

\begin{remark} When $q_0=0$, i.e., $q_{\pm}=0$, the NZBCs reduces to the ZBCs and $\lambda=\pm k$.
\end{remark}

\begin{figure}[!t]
\centering
\includegraphics[scale=0.42]{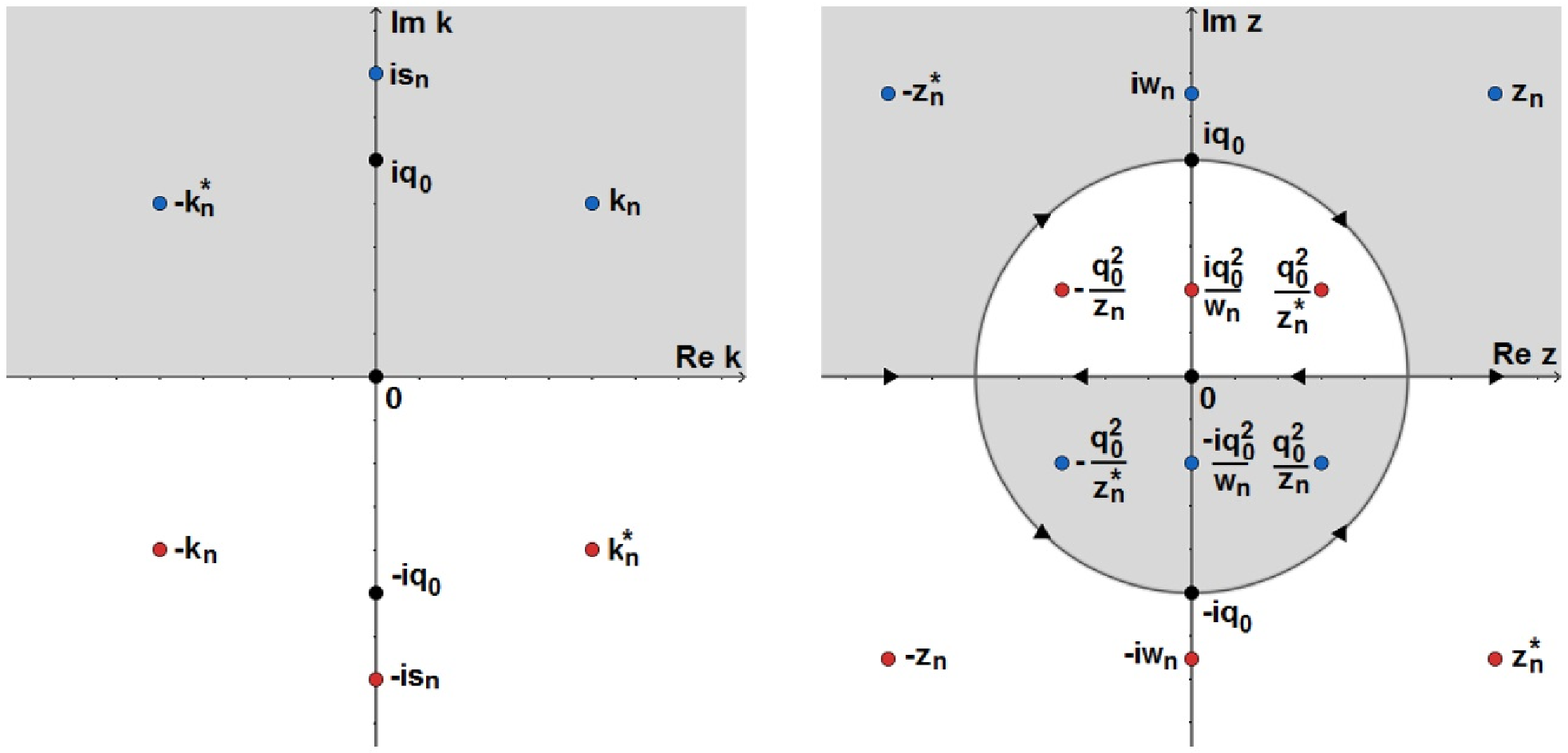}
\caption{Fousing mKdV equation with NZBCs. Left: the first sheet of the Riemann surface, showing the discrete spectrums,  the region where $\mathrm{Im}\, \lambda>0$ (grey) and the region where $\mathrm{Im}\, \lambda<0$ (white). Right: the complex $z$-plane, showing the discrete spectrums [zeros of $s_{11}(z)$ (blue) in grey region and those of $s_{22}(z)$ (red) in white region], the region $D_+$ where  $\mathrm{Im}\,\lambda>0$ (grey), the region $D_-$ where $\mathrm{Im}\,\lambda<0$ (white) and the orientation of the contours for the Riemann-Hilbert problem.}
\label{f}
\end{figure}

Before we continue to study the properties of the Jost solutions and  scattering datas, it is convenient to introduce a uniformization variable $z$  defined by the conformal mapping \cite{Faddeev1987}:
\begin{align}\label{uvz}
z=k+\lambda=k+\sqrt{k^2+q_0^2},
\end{align}
whose inverse mapping is derived as
\begin{align}\label{kz}
k=\frac{1}{2}\left(z-\frac{q_0^2}{z}\right),\quad \lambda=\frac{1}{2}\left(z+\frac{q_0^2}{z}\right).
\end{align}
The mapping relation between the two-sheeted Riemann $k$-surface (Fig.~\ref{f}(left)) and complex $z$-plane (Fig.~\ref{f}(right)) is observed as follows:
\begin{itemize}

\item On the Sheet-I of the Riemann surface, $z\to \infty$ as $k\to \infty$, while on the  Sheet-II of the Riemann surface, $z\to 0$ as $k\to \infty$;

\item The Sheet-I  and Sheet-II, excluding the branch cut, are mapped onto the exterior and interior of the circle of radius $q_0$, respectively;

\item The branch cut $i\left[-q_0,q_0\right]$ is mapped into the circle of radius $q_0$. In particular, the segment $i\left[0,q_0\right]$ of the Sheet-I (Sheet-II) is mapped onto the part in the first (second) quadrant of complex $z$-plane, and the segment $i\left[-q_0,0\right]$ of Sheet-I (Sheet-II) is mapped onto the part in the third (fourth) quadrant of complex $z$-plane;

\item The real $k$ axis is mapped onto the real $z$ axis. In particular, $\left[0,+\infty\right]$ of the Sheet-I (Sheet-II)  is mapped onto $\left[q_0,+\infty\right)$ ($\left[-q_0,0\right]$) and $\left(-\infty, 0\right]$ of the Sheet-I (Sheet-II)  is mapped onto $\left(-\infty,-q_0\right]$ ($\left[0,q_0\right]$);

\item The region for Im $\lambda>0$ (Im $\lambda<0$) of the Riemann surface is mapped onto the grey (white) domain in the complex $z$-plane. In particular, the UHP of the Sheet-I (Sheet-II) is mapped onto the grey (white) domain of the UHP in the complex $z$-plane, and the LHP of the Sheet-I (Sheet-II) is mapped onto the white (grey) domain of the LHP in the complex $z$-plane.
\end{itemize}

For convenience, we denote the grey and white domains in Fig.~\ref{f} (right) by
%\bee \begin{array}{l}
$ D_+=\{z\in\mathbb{C}: {\rm Im}(\lambda(z))=\frac12
{\rm Im}(z)(1-q_0^2/|z|^2)>0\}$ and
$D_-=\{z\in\mathbb{C}: {\rm Im}(\lambda(z))=\frac12
{\rm Im}(z)(1-q_0^2/|z|^2)<0\},$
%\end{array}
%\ene
 respectively. In the following, we will consider our problem on the complex $z$-plane instead of $k$-plane. With the help of the inverse mapping (\ref{kz}), one can rewrite the fundamental matrix solution of the asymptotic scattering problem as $\varPhi(x, t; z)=E_{\pm}(z)\exp{\left[i\,\theta(x, t, z)\,\sigma_3\right]}$,
where
\begin{align}
E_{\pm}\left(z\right)=\begin{bmatrix}
1&\frac{iq_{\pm}}{z} \vspace{0.05in}\\\frac{iq_{\pm}}{z}&1
\end{bmatrix},\quad
\theta\left(x, t; z\right)=\frac{1}{2}\left(z+\frac{q_0^2}{z}\right)\left\{x+\left[\left(z-\frac{q_0^2}{z}\right)^2-2q_0^2\right]t\right\}, \quad z\not=\pm iq_0,
\end{align}

\begin{remark} As $z=\pm iq_0$, ${\rm det} E_{\pm}(z)\equiv 0$, whereas $z\not=\pm iq_0$, ${\rm det} E_{\pm}(z)=1+\frac{q_0^2}{z^2}=\gamma_f(z)\not=0$ in which its inverse matrix exists.
Moreover, we find that $ X_{\pm}T_{\pm}= T_{\pm} X_{\pm}=(4k^2-2q_0^2)X_{\pm}^2=(2q_0^2-4k^2)(q_0^2+k^2)I$, and
  \bee\label{xt}
  \begin{array}{l}
   X_{\pm}E_{\pm}(z)=\d i\lambda E_{\pm}(z)\sigma_3=\frac{i}{2}\left(z+\frac{q_0^2}{z}\right)E_{\pm}(z)\sigma_3,\\
   T_{\pm}E_{\pm}(z)=i\lambda (4k^2-2q_0^2) E_{\pm}(z)\sigma_3
   =\d\frac{i}{2}\left(z+\frac{q_0^2}{z}\right)\left[\left(z-\frac{q_0^2}{z}\right)^2-2q_0^2\right]E_{\pm}(z)\sigma_3,
   \end{array}
     \ene
which allows one to define the Jost solutions as simultaneous solutions of both parts of the Lax pair (\ref{lax-x}) and (\ref{lax-t}).
\end{remark}

\subsubsection{Properties of Jost solutions}

As usual, the continuous spectrum is the set of all values of $z$ satisfying $\lambda(z)\in\mathbb{R}$ \cite{Biondini2014}. Let $C_0=\{z\in\mathbb{C}: |z|=q_0\}$ be the circle of radius $q_0$ (see Fig.~\ref{f} (right)). Then, the continuous spectrum is $\left(\mathbb{R}\backslash\{0\}\right)\cup C_0$ denoted by $\Sigma$. We will seek for the simultaneous solutions $\varPhi_{\pm}(x, t; z)$ of the Lax pair (\ref{lax-x}, \ref{lax-t}), i.e., the so-called Jost solutions, such that
\begin{align}\label{Jost-asy}
\varPhi_{\pm}(x, t; z)=E_{\pm}(z)\,\mathrm{e}^{i\theta(x, t; z)\sigma_3}+o\left(1\right),\quad z\in\Sigma,\quad \mathrm{as}\,\, x\to\pm\infty.
\end{align}
The modified Jost solutions is introduced though dividing by the asymptotic exponential oscillations
\begin{align}\label{bianjie}
\mu_{\pm}(x, t; z)=\varPhi_{\pm}(x, t; z)\,\mathrm{e}^{-i\theta(x, t; z)\sigma_3},
\end{align}
such that
\begin{align}\label{mJost-asy}
\lim_{x\to\pm\infty}\mu_{\pm}(x, t; z)=E_{\pm}(z).
\end{align}
Then the Jost integral equation can be obtained from Eqs.~(\ref{lax-x}) by the constant variation approach
\begin{align}\label{Jost-int}
\mu_{\pm}(x, t; z)=E_{\pm}(z)+\left\{
\begin{aligned}
&\int_{\pm\infty}^xE_{\pm}(z)\,\mathrm{e}^{i\lambda(z)(x-y)\widehat\sigma_3}\left[{E_{\pm}^{-1}(z)}\Delta Q_{\pm}(y, t)\,\mu_{\pm}(y, t; z)\right]\,\mathrm{d}y, && z\ne\pm iq_0,\\[0.05in]
&\int_{\pm\infty}^x\left[I+\left(x-y\right)\left(Q_{\pm}\mp q_0\,\sigma_3\right)\right]\Delta Q_{\pm}(y, t)\,\mu_{\pm}(y, t; z)\,\mathrm{d}y, &&  z=\pm iq_0,
\end{aligned}\right.
\end{align}
where $\Delta Q_{\pm}(x, t)=Q(x, t)-Q_{\pm}$.

\begin{lemma}\label{isjie}
Given a series $\sum_{n=0}^{\infty}A_n(x)$ and a function $B(x)$ on an interval $\mathbb{D}\subset\mathbb{R}$, where $A_n(x)$ and $B(x)$ are matrix-valued functions. If $\sum_{n=0}^{\infty}\left|\left|A_n(x)\right|\right|_1$ converges uniformly on the interval $\mathbb{D}$ and $\left|\left|B(x)\right|\right|_1$, $\left|\left|B(x)\,A_n(x)\right|\right|_1\in L^1\left(\mathbb{D}\right)$, then $\left|\left|B(x)\sum_{n=0}^{\infty}A_n(x)\right|\right|_1\in L^1\left(\mathbb{D}\right)$  and
%\begin{align} \no
 $       \int_\mathbb{D}B(x)\sum_{n=0}^{\infty}A_n(x)\mathrm{d}x=\sum_{n=0}^\infty\int_\mathbb{D}B(x)\,A_n(x)\,\mathrm{d}x.$
%\end{align}
\end{lemma}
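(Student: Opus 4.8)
The plan is to reduce everything to the partial sums of the series and then push a limit through the integral, using the uniform convergence of the norm series to obtain a uniform tail bound and the integrability of $\|B(x)\|_1$ to control the possibly infinite domain $\mathbb{D}$. Write $S_N(x)=\sum_{n=0}^N A_n(x)$ and $S(x)=\sum_{n=0}^\infty A_n(x)$. First I would record two consequences of the hypothesis that $\sum_n\|A_n(x)\|_1$ converges uniformly on $\mathbb{D}$. Since the space of matrices is complete and the norm series converges at each point, the series $\sum_n A_n(x)$ converges absolutely, so $S(x)$ is well defined pointwise. Moreover, by the triangle inequality $\|S(x)-S_N(x)\|_1\le\sum_{n>N}\|A_n(x)\|_1$, and uniform convergence means that
\[
\varepsilon_N:=\sup_{x\in\mathbb{D}}\sum_{n>N}\|A_n(x)\|_1\longrightarrow 0\qquad(N\to\infty),
\]
so that $\|S(x)-S_N(x)\|_1\le\varepsilon_N$ holds uniformly in $x$.

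Next I would dispose of the integrability claim. The key algebraic input is submultiplicativity of the matrix norm, $\|MN\|_1\le\|M\|_1\,\|N\|_1$, which $\|\cdot\|_1$ enjoys. Using it,
\[
\bigl\|B(x)\bigl(S(x)-S_N(x)\bigr)\bigr\|_1\le\|B(x)\|_1\,\|S(x)-S_N(x)\|_1\le\varepsilon_N\,\|B(x)\|_1 ,
\]
and since $\|B\|_1\in L^1(\mathbb{D})$ the left-hand side lies in $L^1(\mathbb{D})$. Because $B(x)S_N(x)=\sum_{n=0}^N B(x)A_n(x)$ is a finite sum of $L^1$ functions it is itself in $L^1(\mathbb{D})$, so adding the two pieces gives $B(x)S(x)=B(x)S_N(x)+B(x)\bigl(S(x)-S_N(x)\bigr)\in L^1(\mathbb{D})$, which is the first assertion.

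Finally I would carry out the interchange. For each fixed $N$, linearity of the integral gives $\int_{\mathbb{D}}B\,S_N\,\mathrm{d}x=\sum_{n=0}^N\int_{\mathbb{D}}B\,A_n\,\mathrm{d}x$, while the estimate above yields
\[
\Bigl\|\int_{\mathbb{D}}B\,S\,\mathrm{d}x-\int_{\mathbb{D}}B\,S_N\,\mathrm{d}x\Bigr\|_1\le\int_{\mathbb{D}}\bigl\|B(x)\bigl(S-S_N\bigr)\bigr\|_1\,\mathrm{d}x\le\varepsilon_N\int_{\mathbb{D}}\|B(x)\|_1\,\mathrm{d}x .
\]
Letting $N\to\infty$, the right-hand side vanishes because $\varepsilon_N\to0$ and $\int_{\mathbb{D}}\|B\|_1\,\mathrm{d}x<\infty$, whence $\int_{\mathbb{D}}B\,S\,\mathrm{d}x=\lim_N\sum_{n=0}^N\int_{\mathbb{D}}B\,A_n\,\mathrm{d}x=\sum_{n=0}^\infty\int_{\mathbb{D}}B\,A_n\,\mathrm{d}x$, as required.

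The main obstacle I anticipate is not any single hard step but the need to be careful about why uniform convergence alone is insufficient: in the intended application $\mathbb{D}$ is a half-line coming from the Jost integral equation, and on an unbounded domain uniform convergence does not by itself permit swapping limit and integral. It is precisely the pairing of the uniform tail bound $\varepsilon_N$ with the hypothesis $\|B\|_1\in L^1(\mathbb{D})$ that renders the final estimate summable and finite. A secondary point worth stating explicitly is the submultiplicativity of $\|\cdot\|_1$, since this is exactly what collapses the matrix estimate to the scalar majorant $\varepsilon_N\,\|B(x)\|_1$ that makes the argument go through.
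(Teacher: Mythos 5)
Your proof is correct and complete: the uniform tail bound $\varepsilon_N=\sup_{x\in\mathbb{D}}\sum_{n>N}\|A_n(x)\|_1\to 0$, combined with submultiplicativity of $\|\cdot\|_1$ and the integrable majorant $\|B(x)\|_1\in L^1(\mathbb{D})$, is exactly what is needed to justify both the membership $B\sum_n A_n\in L^1(\mathbb{D})$ and the term-by-term integration, and your closing remark correctly identifies why uniform convergence alone would not suffice on the unbounded intervals arising from the Jost integral equation. The paper states Lemma~\ref{isjie} without any proof, so there is no argument to compare against; your dominated-tail argument is the standard one the authors evidently had in mind.
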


\begin{proposition} \label{jiexi-m1}
Suppose $q-q_{\pm}\in L^1\left(\mathbb{R^{\pm}}\right)$, then the  Jost integral equation (\ref{Jost-int}) has unique solutions $\mu_{\pm}(x, t; z)$ defined by Eq.~(\ref{bianjie}) in $\Sigma^0:=\Sigma\backslash\{\pm iq_0\}$. Moreover, the columns $\mu_{+1}(x, t; z)$ and $\mu_{-2}(x, t; z)$ can be extended analytically to $D_{+}$ and continuously to $D_{+}\cup\Sigma^0$, and $\mu_{-1}(x, t; z)$ and $\mu_{+2}(x, t; z)$ can be extended analytically to $D_{-}$ and continuously to $D_{-}\cup\Sigma^0$, where $\mu_{\pm j}(x, t; z)\, (j=1,2)$ is the $j$-th column of $\mu_{\pm}(x, t; z)$.
\end{proposition}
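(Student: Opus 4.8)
The plan is to solve the Volterra integral equation~(\ref{Jost-int}) by a Neumann (Picard) iteration on $\Sigma^0$ and then transfer analyticity from the individual iterates to their sum. First I would set $\mu_\pm^{(0)}(x,t;z)=E_\pm(z)$ and define recursively
\[
\mu_\pm^{(n+1)}(x,t;z)=\int_{\pm\infty}^x E_\pm(z)\,\mathrm{e}^{i\lambda(z)(x-y)\widehat\sigma_3}\big[E_\pm^{-1}(z)\,\Delta Q_\pm(y,t)\,\mu_\pm^{(n)}(y,t;z)\big]\,\mathrm{d}y,
\]
so that formally $\mu_\pm=\sum_{n\ge0}\mu_\pm^{(n)}$. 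The crucial observation for convergence is that $\mathrm{e}^{i\lambda(x-y)\widehat\sigma_3}$ acts only on the off-diagonal entries of the bracket, multiplying them by $\mathrm{e}^{\pm2i\lambda(x-y)}$. Taking the first column of the $+$ problem (integration over $y\ge x$) produces the factor $\mathrm{e}^{-2i\lambda(x-y)}=\mathrm{e}^{2i\lambda(y-x)}$, whose modulus $\mathrm{e}^{-2\,\mathrm{Im}\,\lambda\,(y-x)}$ is bounded by $1$ exactly when $\mathrm{Im}\,\lambda\ge0$, i.e. on $D_+\cup\Sigma^0$; the analogous sign check delivers $\mu_{-2}$ on $D_+$ and $\mu_{-1},\mu_{+2}$ on $D_-$. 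With the relevant exponentials bounded by $1$, a standard induction gives $\|\mu_\pm^{(n)}(x,t;z)\|_1\le c(z)\,(n!)^{-1}\big(\int_{I_\pm(x)}\|\Delta Q_\pm(y,t)\|_1\,\mathrm{d}y\big)^{n}$, where $I_+(x)=[x,+\infty)$, $I_-(x)=(-\infty,x]$, and $c(z)$ is locally bounded away from $z=0,\pm iq_0$. The hypothesis $q-q_\pm\in L^1(\mathbb{R}^\pm)$ makes the integral finite, so $\sum_n\|\mu_\pm^{(n)}\|_1$ is dominated by an exponential series and converges absolutely and uniformly on compact subsets.

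Second, I would verify that the sum actually solves~(\ref{Jost-int}). This is where Lemma~\ref{isjie} enters: the uniform convergence of $\sum_n\|\mu_\pm^{(n)}\|_1$ together with the $L^1$ bounds on $\Delta Q_\pm$ and on $\Delta Q_\pm\,\mu_\pm^{(n)}$ licenses interchanging the summation with the integral, so that applying the integral operator to $\sum_n\mu_\pm^{(n)}$ reproduces $\sum_{n\ge1}\mu_\pm^{(n)}=\mu_\pm-E_\pm$, confirming the fixed-point identity. Uniqueness then follows from a Gronwall/Volterra argument: the difference of two solutions satisfies the homogeneous equation, and iterating the same bound $n$ times and letting $n\to\infty$ forces it to vanish.

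Finally, for the analyticity and continuity claims, each iterate $\mu_\pm^{(n)}(x,t;z)$ is analytic in $z$ on the appropriate open domain because $E_\pm(z)$, $E_\pm^{-1}(z)$ and $\mathrm{e}^{i\lambda(z)(x-y)\widehat\sigma_3}$ are analytic there (the singular points $z=0$ and $z=\pm iq_0$, where $E_\pm^{-1}$ or $\det E_\pm$ degenerate, are excluded from $\Sigma^0$) and the integrand is dominated uniformly; analyticity of the integral follows by Morera's theorem with dominated convergence, and the uniform convergence of the series on compacta transfers analyticity to the limit by the Weierstrass theorem. Continuity up to $\Sigma^0$ comes from the boundary modulus of the exponential being exactly $1$ on $\Sigma^0$ while $\Delta Q_\pm$ stays integrable. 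I expect the main obstacle to be the bookkeeping that pins each column to the correct half-domain: one must carefully track, column by column and sign by sign, which entries of the integrand carry $\mathrm{e}^{+2i\lambda(x-y)}$ versus $\mathrm{e}^{-2i\lambda(x-y)}$ and match this against the direction of integration ($+\infty\to x$ versus $-\infty\to x$), since precisely this matching separates the $D_+$-analytic columns $\mu_{+1},\mu_{-2}$ from the $D_-$-analytic columns $\mu_{-1},\mu_{+2}$.
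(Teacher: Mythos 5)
Your proposal is correct and follows essentially the same route the paper intends: the Neumann series $\mu_\pm=\sum_n\mu_\pm^{[n]}$ with the factorial bound from the Volterra structure, Lemma~\ref{isjie} to justify termwise integration, and the column-by-column tracking of the factors $\mathrm{e}^{\pm 2i\lambda(x-y)}$ against the direction of integration to assign $\mu_{+1},\mu_{-2}$ to $D_+$ and $\mu_{-1},\mu_{+2}$ to $D_-$ (indeed, the paper explicitly reuses ``the Neumann series used in Proposition~\ref{jiexi-m1}'' in the asymptotics subsection). Your sign checks and the exclusion of $z=0,\pm iq_0$ are accurate, so no gaps remain.
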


\begin{corollary}\label{jiexi-1}
Suppose $q-q_{\pm}\in L^1\left(\mathbb{R^{\pm}}\right)$, then Eq.~(\ref{lax-x}) has unique solutions $\varPhi_{\pm}(x, t; z)$  defined by Eq.~(\ref{Jost-asy}) in $\Sigma^0$. Moreover, $\varPhi_{+1}(x, t; z)$ and $\varPhi_{-2}(x, t; z)$ can be extended analytically to $D_{+}$ and continuously to $D_{+}\cup\Sigma^0$, and $\varPhi_{-1}(x, t; z)$ and $\varPhi_{+2}(x, t; z)$ can be extended analytically to $D_{-}$ and continuously to $D_{-}\cup\Sigma^0$, where $\varPhi_{\pm j}(x, t; z)\, (j=1,2)$  is the $j$-th column of $\varPhi_{\pm}(x, t; z)$.
\end{corollary}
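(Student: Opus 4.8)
The plan is to obtain the corollary directly from Proposition~\ref{jiexi-m1} through the defining relation~(\ref{bianjie}), which reads $\varPhi_{\pm}=\mu_{\pm}\,\mathrm{e}^{i\theta(x,t;z)\sigma_3}$. First I would record the elementary but crucial observation that $\mathrm{e}^{i\theta\sigma_3}=\mathrm{diag}\!\left(\mathrm{e}^{i\theta},\mathrm{e}^{-i\theta}\right)$ is diagonal with unit determinant, and that $\theta(x,t;z)$ is a Laurent polynomial in $z$ whose only singularity is the pole at $z=0$. Hence the scalar factors $\mathrm{e}^{\pm i\theta(x,t;z)}$ are holomorphic and nonvanishing on all of $\mathbb{C}\setminus\{0\}$, in particular on each of the regions $D_{\pm}$ and on $D_{\pm}\cup\Sigma^{0}$.

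Next I would establish existence and uniqueness. Since $\mathrm{e}^{i\theta\sigma_3}$ is invertible, the map $\mu\mapsto\mu\,\mathrm{e}^{i\theta\sigma_3}$ is a bijection, so the existence and uniqueness of $\mu_{\pm}$ on $\Sigma^{0}$ granted by Proposition~\ref{jiexi-m1} transfer verbatim to $\varPhi_{\pm}$. That $\varPhi_{\pm}$ actually solves the $x$-part~(\ref{lax-x}) is built into the construction: the integral equation~(\ref{Jost-int}) for $\mu_{\pm}$ was produced from~(\ref{lax-x}) by the substitution~(\ref{bianjie}) together with variation of parameters, so reversing that substitution shows $\varPhi_{\pm}$ satisfies~(\ref{lax-x}) with the required normalization. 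The asymptotics~(\ref{Jost-asy}) then follow from $\lim_{x\to\pm\infty}\mu_{\pm}=E_{\pm}$ in~(\ref{mJost-asy}): multiplying by the unimodular factor $\mathrm{e}^{i\theta\sigma_3}$ — note that $\theta\in\mathbb{R}$ on $\Sigma$ because $\lambda\in\mathbb{R}$ there, so $\bigl|\mathrm{e}^{\pm i\theta}\bigr|=1$ — preserves the $o(1)$ remainder and gives $\varPhi_{\pm}=E_{\pm}\,\mathrm{e}^{i\theta\sigma_3}+o(1)$ as $x\to\pm\infty$, which is exactly~(\ref{Jost-asy}).

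For the analyticity statement I would pass to the columns. Writing~(\ref{bianjie}) column-wise yields $\varPhi_{\pm1}=\mathrm{e}^{i\theta}\mu_{\pm1}$ and $\varPhi_{\pm2}=\mathrm{e}^{-i\theta}\mu_{\pm2}$. By Proposition~\ref{jiexi-m1}, $\mu_{+1}$ and $\mu_{-2}$ extend analytically to $D_{+}$ and continuously to $D_{+}\cup\Sigma^{0}$; since the multiplying factors $\mathrm{e}^{\pm i\theta}$ enjoy the same regularity on these regions, the products $\varPhi_{+1}$ and $\varPhi_{-2}$ inherit analyticity on $D_{+}$ and continuity on $D_{+}\cup\Sigma^{0}$. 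The identical argument over $D_{-}$ handles $\varPhi_{-1}$ and $\varPhi_{+2}$, completing the claim.

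There is no serious obstacle here; the one point meriting care is the pole of $\theta$ at $z=0$. However, $z=0$ is excluded from the continuous spectrum $\Sigma=(\mathbb{R}\setminus\{0\})\cup C_0$, hence from $\Sigma^{0}$, and it lies on the common boundary of $D_{+}$ and $D_{-}$ rather than in their interiors, so the factors $\mathrm{e}^{\pm i\theta}$ remain holomorphic throughout the closed regions $D_{\pm}\cup\Sigma^{0}$ relevant to the statement. Consequently the corollary is an immediate consequence of Proposition~\ref{jiexi-m1}, with analyticity and continuity transported by the entire, invertible factor $\mathrm{e}^{i\theta\sigma_3}$.
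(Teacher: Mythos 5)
Your proposal is correct and follows exactly the route the paper intends: the corollary is left without an explicit proof precisely because it is the immediate transfer of Proposition~\ref{jiexi-m1} to $\varPhi_{\pm}=\mu_{\pm}\,\mathrm{e}^{i\theta(x,t;z)\sigma_3}$ via the invertible diagonal factor, whose entries $\mathrm{e}^{\pm i\theta}$ are holomorphic away from $z=0$ (a point excluded from $D_{\pm}$ and $\Sigma^{0}$) and scale the columns exactly as you describe. Your attention to the pole of $\theta$ at $z=0$ and to $\theta$ being real on $\Sigma$ (where both $\lambda$ and $k^2$ are real) covers the only points requiring care.
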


\begin{proposition} \label{jiexi-m2}
Suppose $\left(1+\left|x\right|\right)\left(q-q_{\pm}\right)\in L^1\left(\mathbb{R^{\pm}}\right)$, then the Jost integral equation (\ref{Jost-int}) has unique solutions $\mu_{\pm}(x, t; z)$ defined by Eq.~(\ref{bianjie}) in $\Sigma$. Besides, $\mu_{+1}(x, t; z)$ and $\mu_{-2}(x, t; z)$ can be extended analytically to $D_{+}$ and continuously to $D_{+}\cup\Sigma$ while $\mu_{-1}(x, t; z)$ and $\mu_{+2}(x, t; z)$ can be extended analytically to $D_{-}$ and continuously to $D_{-}\cup\Sigma$.
\end{proposition}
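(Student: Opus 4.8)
The plan is to extend the Neumann-series construction underlying Proposition~\ref{jiexi-m1} so that it survives at the two branch points $z=\pm iq_0$, which are precisely the points of $\Sigma$ omitted from $\Sigma^0$. For every $z\in\Sigma^0$ the assertions are already supplied by Proposition~\ref{jiexi-m1} (the stronger hypothesis only reinforces the same estimates), so the entire task is to solve the degenerate form of (\ref{Jost-int}) at $z=\pm iq_0$ and to upgrade the continuity statement from $D_\pm\cup\Sigma^0$ to $D_\pm\cup\Sigma$.

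First I would fix $z=\pm iq_0$ and organize the second line of (\ref{Jost-int}) as a Volterra equation, setting $\mu_\pm^{(0)}=E_\pm$ and
\[
\mu_\pm^{(n+1)}(x,t;z)=\int_{\pm\infty}^x\big[I+(x-y)(Q_\pm\mp q_0\sigma_3)\big]\,\Delta Q_\pm(y,t)\,\mu_\pm^{(n)}(y,t;z)\,\mathrm{d}y .
\]
Because the kernel now grows linearly through the factor $I+(x-y)(Q_\pm\mp q_0\sigma_3)$, the natural control is by the \emph{weighted} tail integral $\int_{\pm\infty}^x(1+|x-y|)\,\|\Delta Q_\pm(y,t)\|_1\,\mathrm{d}y$, which is finite exactly under the hypothesis $(1+|x|)(q-q_\pm)\in L^1(\mathbb{R}^\pm)$. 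A routine Volterra induction then yields a factorial bound $\|\mu_\pm^{(n)}\|_1\le C\,(C')^n/n!$, so the series converges absolutely and uniformly on half-lines in $x$; Lemma~\ref{isjie} licenses the term-by-term integration needed to verify that the sum solves the integral equation, and the same contraction estimate gives uniqueness. This furnishes the unique $\mu_\pm$ at $z=\pm iq_0$ and completes the solvability claim on all of $\Sigma$.

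For the continuity up to and including the branch points I would show that the solution generated for $z\in\Sigma^0$ by the \emph{generic} first line of (\ref{Jost-int}) converges, as $z\to\pm iq_0$, to the degenerate solution just built. The decisive point is that although $E_\pm^{-1}(z)$ blows up as $z\to\pm iq_0$ (since $\det E_\pm=\gamma_f(z)=1+q_0^2/z^2\to0$) while $\lambda(z)\to0$, the singularities cancel: using $\sigma_3^2=I$ together with the relation $X_\pm E_\pm=i\lambda E_\pm\sigma_3$ from (\ref{xt}) one obtains the closed identity
\[
E_\pm(z)\,e^{i\lambda(x-y)\sigma_3}\,E_\pm^{-1}(z)=\cos\!\big(\lambda(x-y)\big)\,I+\frac{\sin\!\big(\lambda(x-y)\big)}{\lambda}\,X_\pm ,
\]
whose right-hand side is entire in $\lambda$ and hence extends continuously to $\lambda=0$, where it equals $I+(x-y)X_\pm=I+(x-y)(Q_\pm\mp q_0\sigma_3)$; together with the trailing factor $e^{-i\lambda(x-y)\sigma_3}\to I$ this shows the generic kernel tends to the degenerate kernel. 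Since on $\Sigma$ one has $|\sin(\lambda(x-y))/\lambda|\le|x-y|$, the weight $(1+|x|)$ dominates the whole family of kernels uniformly for $z$ in a neighborhood of $\pm iq_0$ in $\overline{D_\pm}$, so the limit may be passed inside the uniformly convergent Neumann series. Analyticity in the open regions $D_\pm$ is untouched, because $\pm iq_0\in C_0\subset\Sigma$ lie on the boundary, not the interior; it is inherited exactly as in Proposition~\ref{jiexi-m1} from the locally uniform convergence of the series of analytic iterates via the Morera--Weierstrass argument.

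I expect the crux to be this limit $z\to\pm iq_0$: proving rigorously that the conjugated product $E_\pm(z)\,e^{i\lambda(x-y)\sigma_3}\,E_\pm^{-1}(z)$ carries no residual $1/\lambda$ singularity and converges uniformly to $I+(x-y)X_\pm$, and that the weighted $L^1$ bound makes this convergence uniform in $z$ throughout a full neighborhood of the branch point within $\overline{D_\pm}$ (where $\lambda$ is complex, so that $\sin(\lambda(x-y))/\lambda$ must be controlled against the decaying exponential carried by the analytic column). Once this uniform kernel estimate is secured, continuity of $\mu_\pm$ at $\pm iq_0$, and therefore on all of $D_\pm\cup\Sigma$, follows immediately.
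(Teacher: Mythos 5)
The paper states Proposition~\ref{jiexi-m2} (like Proposition~\ref{jiexi-m1}) without giving any proof, implicitly deferring to the standard treatments of Refs.~\cite{Demontis2013,Biondini2014}; so there is no in-paper argument to compare against, and your proposal must be judged on its own. Judged so, it follows exactly that standard route and its skeleton is sound: the new content relative to Proposition~\ref{jiexi-m1} is indeed (i) solvability of the degenerate Volterra equation at $z=\pm iq_0$, which your weighted Neumann series handles correctly under the hypothesis $(1+|x|)(q-q_{\pm})\in L^1(\mathbb{R}^{\pm})$, and (ii) continuity up to the branch points. Your key identity
\[
E_{\pm}(z)\,\mathrm{e}^{i\lambda(x-y)\sigma_3}\,E_{\pm}^{-1}(z)=\mathrm{e}^{(x-y)X_{\pm}}=\cos\bigl(\lambda(x-y)\bigr)I+\frac{\sin\bigl(\lambda(x-y)\bigr)}{\lambda}\,X_{\pm}
\]
is correct (it follows from $E_{\pm}^{-1}X_{\pm}E_{\pm}=i\lambda\sigma_3$ and $X_{\pm}^2=-(k^2+q_0^2)I=-\lambda^2I$), so the apparent singularity of $E_{\pm}^{-1}$ at $\pm iq_0$ does cancel. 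The one step you flag but do not execute is genuinely where the remaining work lies: for complex $\lambda$ the matrix bound degrades to $|\sin(\lambda s)/\lambda|\le\sinh(|\lambda|\,|s|)/|\lambda|$, which grows exponentially and cannot be dominated by $(1+|s|)$. The fix is to argue column by column rather than with the full matrix kernel: for, say, $\mu_{+1}$ the conjugation $\mathrm{e}^{i\lambda(x-y)\widehat\sigma_3}$ contributes only the scalar factors $1$ and $\mathrm{e}^{-2i\lambda(x-y)}$, the singular part of the kernel reduces to $\bigl(\mathrm{e}^{-2i\lambda(x-y)}-1\bigr)/\lambda$, and the representation $(\mathrm{e}^{iw}-1)/w=i\int_0^1\mathrm{e}^{iws}\,\mathrm{d}s$ gives $\bigl|(\mathrm{e}^{iw}-1)/w\bigr|\le1$ whenever $\mathrm{Im}\,w\ge0$ --- which is exactly the sign configuration produced by $y\ge x$ and $\mathrm{Im}\,\lambda\ge0$ for that column (and symmetrically for the other three). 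This yields the uniform bound $C\,(1+|x-y|)\,\|\Delta Q_{\pm}(y,t)\|_1$ on a closed neighborhood of $\pm iq_0$ in $\overline{D_{\pm}}$, after which your dominated-convergence passage to the limit in the Neumann series closes the continuity claim. With that columnwise estimate made explicit, the proof is complete.
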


\begin{corollary}\label{jiexi-2}
Suppose $\left(1+\left|x\right|\right)\left(q-q_{\pm}\right)\in L^1\left(\mathbb{R^{\pm}}\right)$, then  Eq.~(\ref{lax-x}) has unique solutions $\varPhi_{\pm}(x, t; z)$ defined by (\ref{Jost-asy}) in $\Sigma$. Besides, $\varPhi_{+1}(x, t; z)$ and $\varPhi_{-2}(x, t; z)$ can be extended analytically to $D_{+}$ and continuously to $D_{+}\cup\Sigma$ while $\varPhi_{-1}(x, t; z)$ and $\varPhi_{+2}(x, t; z)$ can be extended analytically to $D_{-}$ and continuously to $D_{-}\cup\Sigma$.
\end{corollary}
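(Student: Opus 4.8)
The plan is to obtain this corollary directly from Proposition~\ref{jiexi-m2} through the invertible substitution (\ref{bianjie}). First I would invert the defining relation $\mu_{\pm}(x,t;z)=\varPhi_{\pm}(x,t;z)\,\mathrm{e}^{-i\theta(x,t;z)\sigma_3}$ into $\varPhi_{\pm}=\mu_{\pm}\,\mathrm{e}^{i\theta\sigma_3}$, which is legitimate because the diagonal factor $\mathrm{e}^{i\theta\sigma_3}=\mathrm{diag}(\mathrm{e}^{i\theta},\mathrm{e}^{-i\theta})$ is invertible at every $z$ where $\theta$ is finite. Since the Jost integral equation (\ref{Jost-int}) for $\mu_{\pm}$ was derived from (\ref{lax-x}) by the constant-variation method, a matrix $\varPhi_{\pm}$ solves (\ref{lax-x}) with the asymptotics (\ref{Jost-asy}) if and only if $\mu_{\pm}=\varPhi_{\pm}\,\mathrm{e}^{-i\theta\sigma_3}$ solves (\ref{Jost-int}) with (\ref{mJost-asy}). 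Existence and uniqueness of $\varPhi_{\pm}$ on $\Sigma$ therefore transfer verbatim from the existence and uniqueness of $\mu_{\pm}$ established in Proposition~\ref{jiexi-m2}.

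Next I would carry the analyticity across the substitution. The scalar phase $\theta(x,t;z)=\frac12\!\left(z+\frac{q_0^2}{z}\right)\!\left\{x+\left[\left(z-\frac{q_0^2}{z}\right)^2-2q_0^2\right]t\right\}$ is a rational function of $z$ whose only singularity is the pole at $z=0$; as $0\notin D_+\cup D_-\cup\Sigma$, the factors $\mathrm{e}^{\pm i\theta}$ are analytic on $D_{\pm}$, continuous up to $D_{\pm}\cup\Sigma$, and nowhere vanishing. Multiplying an analytic (respectively continuous) column by a scalar analytic (respectively continuous) nonvanishing factor preserves analyticity (respectively continuity), so the analyticity domains carry over unchanged. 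Because $\mathrm{e}^{i\theta\sigma_3}$ is diagonal, the first and second columns of $\mu_{\pm}$ are merely rescaled by $\mathrm{e}^{i\theta}$ and $\mathrm{e}^{-i\theta}$, so the column-by-column structure is untouched: $\varPhi_{+1}=\mu_{+1}\mathrm{e}^{i\theta}$ and $\varPhi_{-2}=\mu_{-2}\mathrm{e}^{-i\theta}$ inherit analyticity on $D_+$ and continuity on $D_+\cup\Sigma$, while symmetrically $\varPhi_{-1},\varPhi_{+2}$ inherit the corresponding properties on $D_-$.

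Finally, to confirm (\ref{Jost-asy}) I would invoke that $\theta$ is real on the whole continuous spectrum $\Sigma$. Indeed $\lambda(z)\in\mathbb{R}$ there by definition of $\Sigma$, and a short check on the real axis and on the circle $C_0$ (where $z=q_0\mathrm{e}^{i\phi}$ gives $k=iq_0\sin\phi$ purely imaginary and $\lambda=q_0\cos\phi$ real) shows that $4k^2-2q_0^2$ is real, so $\theta=\lambda\left[x+(4k^2-2q_0^2)t\right]\in\mathbb{R}$ by (\ref{lp}). Consequently $\mathrm{e}^{i\theta\sigma_3}$ is unitary on $\Sigma$, whence $\varPhi_{\pm}-E_{\pm}\mathrm{e}^{i\theta\sigma_3}=(\mu_{\pm}-E_{\pm})\mathrm{e}^{i\theta\sigma_3}$ has the same norm as $\mu_{\pm}-E_{\pm}$, which is $o(1)$ as $x\to\pm\infty$ by (\ref{mJost-asy}); this is precisely (\ref{Jost-asy}).

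I do not expect a genuine obstacle. All the analytic heavy lifting, namely the uniform convergence of the Neumann series for (\ref{Jost-int}) under the weighted hypothesis $(1+|x|)(q-q_{\pm})\in L^1(\mathbb{R}^{\pm})$ that is needed to reach the branch points $z=\pm iq_0\in\Sigma$, is already done in Proposition~\ref{jiexi-m2}. The only point demanding slight care is confirming that $\theta$ is holomorphic on $D_{\pm}$ and real on $\Sigma$, so that multiplication by $\mathrm{e}^{\pm i\theta}$ neither destroys analyticity inside the domains nor inflates the remainder on the boundary; both facts are immediate from the explicit rational form of $\theta$.
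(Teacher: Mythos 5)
Your proposal is correct and follows essentially the same route as the paper, which states this corollary as an immediate consequence of Proposition~\ref{jiexi-m2}: the passage $\varPhi_{\pm}=\mu_{\pm}\,\mathrm{e}^{i\theta\sigma_3}$ with the scalar factors $\mathrm{e}^{\pm i\theta}$ analytic and nonvanishing away from $z=0\notin D_+\cup D_-\cup\Sigma$, and unitary on $\Sigma$ where $\theta$ is real. Your added checks (realness of $\theta$ on $C_0$ and preservation of the $o(1)$ remainder) are exactly the details the paper leaves implicit.
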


\begin{lemma}[Liouville's formula]\label{Liouville}
Consider an $n$-dimensional first-order homogeneous linear ordinary differential equation,
$y'=A(x)\,y$, on an interval $\mathbb{D}\subset\mathbb{R}$, where $A(x)$ denotes a complex square matrix of order $n$.
Let $\varPhi$ be a matrix-valued solution of this equation. If the trace $\mathrm{tr}\,A(x)$ is a continuous function, then one has
%\begin{align} \no
$\mathrm{det}\, \varPhi(x)=\mathrm{det}\, \varPhi(x_0)\,\exp\left[\int_{x_0}^x\mathrm{tr}A(\xi)\,\mathrm{d}\xi\right],\, x, x_0\in\mathbb{D}.$
%\end{align}
\end{lemma}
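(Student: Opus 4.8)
The plan is to reduce the matrix statement to a scalar first-order linear ODE for the Wronskian $W(x):=\det\varPhi(x)$ and then integrate. First I would differentiate $W$ using the multilinearity of the determinant in the columns of $\varPhi$. Writing $\varPhi=[\phi_1,\dots,\phi_n]$ with columns $\phi_j=\phi_j(x)$, the product rule for a multilinear map gives $W'(x)=\sum_{j=1}^n\det[\phi_1,\dots,\phi_{j-1},\phi_j',\phi_{j+1},\dots,\phi_n]$. Since each column of $\varPhi$ solves $y'=A(x)\,y$, one has $\phi_j'=A(x)\phi_j$, so the $j$-th term is the determinant of $\varPhi$ with its $j$-th column replaced by $A\phi_j$.

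The heart of the argument is to show $\sum_j \det[\dots,A\phi_j,\dots]=\mathrm{tr}\,A(x)\cdot W(x)$. At any point $x$ where $\varPhi(x)$ is invertible its columns form a basis, so $A\phi_j=\sum_m B_{mj}\phi_m$ with $B:=\varPhi^{-1}A\varPhi$; expanding by multilinearity and using that a determinant with a repeated column vanishes, only the $m=j$ contribution survives, giving $\det[\dots,A\phi_j,\dots]=B_{jj}\,W$. Summing over $j$ yields $W'=\mathrm{tr}(B)\,W=\mathrm{tr}(\varPhi^{-1}A\varPhi)\,W=\mathrm{tr}(A)\,W$ by cyclicity of the trace. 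To cover points where $\varPhi(x)$ is singular I would instead invoke Jacobi's formula $W'(x)=\mathrm{tr}\!\left(\mathrm{adj}(\varPhi(x))\,\varPhi'(x)\right)$, which holds for every square matrix; substituting $\varPhi'=A\varPhi$ and using $\mathrm{adj}(\varPhi)\,\varPhi=\det(\varPhi)\,I$ together with cyclicity of the trace gives the same identity $W'(x)=\mathrm{tr}\,A(x)\cdot W(x)$ with no invertibility assumption.

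Finally, $W$ satisfies the scalar linear equation $W'(x)=\mathrm{tr}\,A(x)\,W(x)$, whose right-hand coefficient is continuous because $\mathrm{tr}\,A$ is; multiplying by the integrating factor $\exp[-\int_{x_0}^x\mathrm{tr}\,A]$ and integrating from $x_0$ to $x$ yields $W(x)=W(x_0)\exp\!\left[\int_{x_0}^x\mathrm{tr}\,A(\xi)\,\mathrm{d}\xi\right]$, which is the claim. The only point requiring care is the differentiation-of-the-determinant step, and in particular making it valid uniformly in $x$ even where $\varPhi$ fails to be invertible; using the adjugate form of Jacobi's formula removes that difficulty, and the continuity hypothesis on $\mathrm{tr}\,A$ is precisely what guarantees that the integrating-factor solution of the scalar ODE is well defined.
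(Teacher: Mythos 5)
Your proof is correct and complete: the reduction of the matrix statement to the scalar ODE $W'=\mathrm{tr}\,A(x)\,W$ via column-wise differentiation of the determinant, the use of Jacobi's formula with the adjugate to handle points where $\varPhi(x)$ is singular, and the integrating-factor step are all sound, and the care you take at non-invertible points is exactly the right refinement of the naive argument. The paper itself states this lemma as a classical fact and supplies no proof, so there is no in-paper argument to compare against; your write-up is the standard textbook derivation (Abel--Liouville) and would serve as a valid proof of the statement as given.
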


\begin{proposition}[Time evolution of the Jost solutions]
The Jost solutions $\varPhi_{\pm}(x, t; z)$ are the simultaneous solutions of both parts of the Lax pair (\ref{lax-x},  \ref{lax-t}).
\end{proposition}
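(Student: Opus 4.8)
The plan is to exploit the compatibility (zero-curvature) condition $X_t-T_x+[X,T]=0$, which is equivalent to Eq.~(\ref{mKdV}) and hence holds identically, in order to transfer the validity of the $t$-part from the asymptotic background to the full Jost solutions. By construction the $\varPhi_{\pm}$ already solve the scattering problem $\varPhi_x=X\varPhi$ together with the boundary condition (\ref{Jost-asy}); what remains is to show $(\varPhi_{\pm})_t=T\varPhi_{\pm}$. To this end I would introduce the auxiliary matrix $\Psi_{\pm}:=(\varPhi_{\pm})_t-T\varPhi_{\pm}$ and compute its $x$-derivative. Differentiating $\varPhi_x=X\varPhi$ in $t$ gives $(\varPhi_t)_x=X_t\varPhi+X\varPhi_t$, so that $(\Psi_{\pm})_x=X_t\varPhi_{\pm}+X\varPhi_{\pm,t}-T_x\varPhi_{\pm}-TX\varPhi_{\pm}$. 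Substituting $X_t=T_x-XT+TX$ (the compatibility condition) collapses this to $(\Psi_{\pm})_x=X(\varPhi_{\pm,t}-T\varPhi_{\pm})=X\Psi_{\pm}$. Thus $\Psi_{\pm}$ solves the same linear $x$-ODE as $\varPhi_{\pm}$.

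Next I would establish that $\varPhi_{\pm}$ is a fundamental (invertible) matrix solution. Since $\mathrm{tr}\,X=\mathrm{tr}(ik\sigma_3+Q)=0$, Liouville's formula (Lemma~\ref{Liouville}) shows that $\det\varPhi_{\pm}(x,t;z)$ is independent of $x$; evaluating at $x\to\pm\infty$ via (\ref{Jost-asy}) fixes its value to $\det E_{\pm}(z)=\gamma_f(z)\neq0$ for $z\neq\pm iq_0$. Because $\varPhi_{\pm}$ is invertible and both $\varPhi_{\pm}$ and $\Psi_{\pm}$ satisfy the same equation, $\partial_x(\varPhi_{\pm}^{-1}\Psi_{\pm})=-\varPhi_{\pm}^{-1}X\Psi_{\pm}+\varPhi_{\pm}^{-1}X\Psi_{\pm}=0$, so $\Psi_{\pm}=\varPhi_{\pm}\,C_{\pm}(t;z)$ for some matrix $C_{\pm}$ that is constant in $x$.

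It then suffices to pin down $C_{\pm}$ by letting $x\to\pm\infty$. Using the asymptotics $\varPhi_{\pm}\sim E_{\pm}\mathrm{e}^{i\theta\sigma_3}$, $T\to T_{\pm}$, and $\theta_t=\lambda(4k^2-2q_0^2)$, the leading behavior of $(\varPhi_{\pm})_t$ is $i\theta_t\,E_{\pm}\sigma_3\,\mathrm{e}^{i\theta\sigma_3}$, while the relation $T_{\pm}E_{\pm}=i\lambda(4k^2-2q_0^2)E_{\pm}\sigma_3$ from Eq.~(\ref{xt}) shows that $T\varPhi_{\pm}$ has exactly the same leading behavior. Hence $\Psi_{\pm}\to0$ and $C_{\pm}=\lim_{x\to\pm\infty}\varPhi_{\pm}^{-1}\Psi_{\pm}=0$, so $\Psi_{\pm}\equiv0$, i.e.\ $(\varPhi_{\pm})_t=T\varPhi_{\pm}$ on all of $\Sigma$ (and by analytic continuation on $D_{\pm}$). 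Together with $\varPhi_x=X\varPhi$ this proves that the Jost solutions are simultaneous solutions of both parts of the Lax pair.

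The delicate point I expect to be the main obstacle is the final limit, namely justifying that the \emph{full} limit (not merely the leading-order term) of $\varPhi_{\pm}^{-1}\Psi_{\pm}$ vanishes. This requires that the convergence in the boundary condition (\ref{Jost-asy}) be strong enough to control $(\varPhi_{\pm})_t$ as $x\to\pm\infty$, which in turn rests on the fact --- encoded precisely in Eq.~(\ref{xt}) --- that the background matrix $E_{\pm}\mathrm{e}^{i\theta\sigma_3}$ solves the asymptotic $t$-part as well as the asymptotic $x$-part. It is exactly this consistency of the boundary data with the time flow that forces $C_{\pm}=0$; without it the argument would only yield $\Psi_{\pm}=\varPhi_{\pm}C_{\pm}$ with an undetermined $C_{\pm}(t;z)$.
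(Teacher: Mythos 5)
Your argument is correct and is the standard one for this proposition: use the zero-curvature identity to show $\Psi_{\pm}=(\varPhi_{\pm})_t-T\varPhi_{\pm}$ solves the same $x$-ODE, deduce $\Psi_{\pm}=\varPhi_{\pm}C_{\pm}(t;z)$ from invertibility of $\varPhi_{\pm}$ (via Liouville's formula), and kill $C_{\pm}$ by matching the $x\to\pm\infty$ asymptotics through Eq.~(\ref{xt}). The paper states this proposition without proof, so there is nothing to contrast with; the one technical point you rightly flag --- that the $o(1)$ term in (\ref{Jost-asy}) must be differentiable in $t$ with derivative still $o(1)$ --- is the only place where extra care beyond the paper's level of rigor would be needed.
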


\subsubsection{Scattering matrix, scattering coefficients, and reflection coefficients}

In this subsection, the scattering matrix is introduced. Since Tr$\,X(z)=\mathrm{Tr}\,T(z)=0$, Liouville's formula yields $({\rm det}\varPhi_{\pm})_x=({\rm det} \varPhi_{\pm})_t=0$. Thus ${\rm det}\,\varPhi_{\pm}={\rm det}\, E_{\pm}(z)$. It follows from Lemma \ref{Liouville} that $\varPhi_{\pm}(x, t; z)$ are two fundamental matrix solutions of Lax pair (\ref{lax-x}, \ref{lax-t}) for $\forall z\in\Sigma^0$, Thus there exists a constant matrix $S(z)$ (not depend on $x$ and $t$) such that
\begin{align}\label{Jostchuandi}
\varPhi_+(x, t; z)=\varPhi_-(x, t; z)\,S(z),
\end{align}
where $S(z)=\left(s_{ij}(z)\right)_{2\times 2}$ is referred to the scattering matrix and its entries $s_{ij}(z)$ as the scattering coefficients.
\begin{proposition}\label{jiexiS}
Suppose $q-q_{\pm}\in L^1\left(\mathbb{R^{\pm}}\right)$. Then $s_{11}(z)$ can be extended analytically to $D_+$ and continuously to $D_+\cup\Sigma^0$ while $s_{22}(z)$ can be extended analytically to $D_-$ and continuously to $D_-\cup\Sigma^0$. Moreover, both $s_{12}(z)$ and $s_{21}(z)$ are continuous in $\Sigma^0$.
\end{proposition}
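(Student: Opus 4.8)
The plan is to represent each scattering coefficient as a Wronskian of columns of the Jost solutions and then to read off its analyticity and continuity directly from the column properties already proved in Corollary \ref{jiexi-1}. The starting observation, noted just before the statement, is that $\mathrm{Tr}\,X(z)=\mathrm{Tr}\,T(z)=0$, so Liouville's formula (Lemma \ref{Liouville}) gives $\det\varPhi_{\pm}(x,t;z)=\det E_{\pm}(z)=\gamma_f(z)=1+q_0^2/z^2$. This factor is analytic and nonvanishing on $D_{+}\cup D_{-}\cup\Sigma^0$, since its only zeros $z=\pm iq_0$ lie on $C_0$ and are excluded from $\Sigma^0$; this nonvanishing is exactly what permits the scattering relation to be solved coefficient by coefficient.

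First I would write the scattering relation (\ref{Jostchuandi}) columnwise, $\varPhi_{+1}=s_{11}\varPhi_{-1}+s_{21}\varPhi_{-2}$ and $\varPhi_{+2}=s_{12}\varPhi_{-1}+s_{22}\varPhi_{-2}$, and take Wronskians. Writing $\mathrm{Wr}(u,v):=\det[u,v]$ for the determinant of the matrix with columns $u,v$, forming $\mathrm{Wr}(\varPhi_{+1},\varPhi_{-2})$ annihilates the $s_{21}$ term because $\mathrm{Wr}(\varPhi_{-2},\varPhi_{-2})=0$, and using $\mathrm{Wr}(\varPhi_{-1},\varPhi_{-2})=\det\varPhi_{-}=\gamma_f(z)$ yields
\[
s_{11}(z)=\frac{\mathrm{Wr}(\varPhi_{+1},\varPhi_{-2})}{\gamma_f(z)},\qquad
s_{22}(z)=\frac{\mathrm{Wr}(\varPhi_{-1},\varPhi_{+2})}{\gamma_f(z)},
\]
\[
s_{12}(z)=\frac{\mathrm{Wr}(\varPhi_{+2},\varPhi_{-2})}{\gamma_f(z)},\qquad
s_{21}(z)=\frac{\mathrm{Wr}(\varPhi_{-1},\varPhi_{+1})}{\gamma_f(z)}.
\]
These identities hold a priori only on $\Sigma^0$, where both Jost solutions are defined and (\ref{Jostchuandi}) is valid.

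Next I would invoke Corollary \ref{jiexi-1}: $\varPhi_{+1},\varPhi_{-2}$ extend analytically to $D_{+}$ and continuously to $D_{+}\cup\Sigma^0$, while $\varPhi_{-1},\varPhi_{+2}$ extend analytically to $D_{-}$ and continuously to $D_{-}\cup\Sigma^0$. The numerator of $s_{11}$ pairs two columns both analytic on $D_{+}$, so $\mathrm{Wr}(\varPhi_{+1},\varPhi_{-2})$ is analytic on $D_{+}$ and continuous on $D_{+}\cup\Sigma^0$; division by the analytic, nonvanishing $\gamma_f$ preserves these properties, giving the claim for $s_{11}$. The symmetric argument with the pair $(\varPhi_{-1},\varPhi_{+2})$, both analytic on $D_{-}$, yields the claim for $s_{22}$. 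For the off-diagonal coefficients, each Wronskian mixes a column analytic on $D_{+}$ with one analytic on $D_{-}$, so there is no common domain of analyticity; however both columns are continuous on $\Sigma^0$, whence $s_{12}$ and $s_{21}$ are continuous there, which is precisely what is asserted.

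The main difficulty is a bookkeeping point rather than a computation: one must be sure that the Wronskian formulas, derived only on $\Sigma^0$, truly furnish \emph{analytic} continuations of $s_{11},s_{22}$ and not merely continuous extensions. This is secured because the right-hand sides are themselves analytic in $z$ on the stated domains (ratios of analytic column entries by the analytic nonzero factor $\gamma_f^{-1}$), so the continuation is forced by uniqueness of analytic continuation from $\Sigma^0$. The only ancillary care point is confirming $\gamma_f(z)\neq 0$ on $D_{\pm}\cup\Sigma^0$, which holds since the sole zeros $\pm iq_0$ are the excluded branch points, as already recorded in the remark on $\det E_{\pm}$.
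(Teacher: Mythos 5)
Your proof is correct and follows essentially the same route as the paper: both derive the Wronskian representations of the $s_{ij}$ from the scattering relation (\ref{Jostchuandi}) together with $\det\varPhi_{\pm}=\gamma_f(z)$, and then read off the analyticity and continuity from Corollary \ref{jiexi-1}. Your version simply spells out the columnwise elimination and the nonvanishing of $\gamma_f$ on $D_{\pm}\cup\Sigma^0$, which the paper leaves implicit.
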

\begin{proof}
It can from Eq.~(\ref{Jostchuandi}) that 
one has the following Wronskian representations for the scattering coefficients:
\begin{align}\label{S-lie}
\begin{aligned}
s_{11}(z)=\frac{{\rm Wr}(\varPhi_{+1}(x, t; z), \varPhi_{-2}(x, t; z))}{\gamma_f(z)},\quad s_{22}(z)=\frac{{\rm Wr}(\varPhi_{-1}(x, t; z), \varPhi_{+2}(x, t; z))}{\gamma_f(z)},\\[0.05in]
s_{12}(z)=\frac{{\rm Wr}(\varPhi_{+2}(x, t; z), \varPhi_{-2}(x, t; z))}{\gamma_f(z)},\quad s_{21}(z)=\frac{{\rm Wr}(\varPhi_{-1}(x, t; z), \varPhi_{+1}(x, t; z))}{\gamma_f(z)},
\end{aligned}
\end{align}
where ${\rm Wr}(\bm\cdot, \bm\cdot)$ denotes the Wronskian determinant and $\gamma_f(z):={\rm det} E_{\pm}(z)=1+\frac{q_0^2}{z^2}$. The analyticity properties of the scattering coefficients can be established from Corollary \ref{jiexi-1}.
\end{proof}

With the  help of  Proposition \ref{jiexi-2} and Eq.~(\ref{S-lie}), one can establish the following Corollary.
\begin{corollary}
Suppose $\left(1+\left|x\right|\right)\left(q-q_{\pm}\right)\in L^1\left(\mathbb{R^{\pm}}\right)$. Then $\lambda(z)\,s_{11}(z)$ can be extended analytically to $D_+$ and continuously to $D_+\cup\Sigma$ while $\lambda(z)\,s_{22}(z)$ can be extended analytically to $D_-$ and continuously to $D_-\cup\Sigma$. Moreover, both $\lambda(z)\,s_{12}(z)$ and $\lambda(z)\,s_{21}(z)$ are continuous in $\Sigma$.
\end{corollary}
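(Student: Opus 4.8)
The plan is to start from the Wronskian representations of the scattering coefficients in Eq.~(\ref{S-lie}) and to exploit the exact cancellation of the branch-point singularity that the factor $\lambda(z)$ produces against the denominator $\gamma_f(z)$. The decisive first step is the purely algebraic identity
\begin{align}
\frac{\lambda(z)}{\gamma_f(z)}=\frac{(z^2+q_0^2)/(2z)}{(z^2+q_0^2)/z^2}=\frac{z}{2},
\end{align}
obtained by inserting $\lambda(z)=\frac12\left(z+\frac{q_0^2}{z}\right)=\frac{z^2+q_0^2}{2z}$ and $\gamma_f(z)=1+\frac{q_0^2}{z^2}=\frac{z^2+q_0^2}{z^2}$. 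The common factor $z^2+q_0^2$, whose zeros are precisely the branch points $z=\pm iq_0$, cancels, so that $\lambda(z)/\gamma_f(z)=z/2$ is entire with no singularity at $\pm iq_0$.

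Second, I would multiply each relation in Eq.~(\ref{S-lie}) by $\lambda(z)$ and substitute this identity, which turns the coefficients into
\begin{align}
\lambda(z)\,s_{11}(z)=\tfrac{z}{2}\,{\rm Wr}\!\left(\varPhi_{+1},\varPhi_{-2}\right),\qquad
\lambda(z)\,s_{22}(z)=\tfrac{z}{2}\,{\rm Wr}\!\left(\varPhi_{-1},\varPhi_{+2}\right),
\end{align}
and likewise $\lambda\,s_{12}=\tfrac{z}{2}\,{\rm Wr}(\varPhi_{+2},\varPhi_{-2})$ and $\lambda\,s_{21}=\tfrac{z}{2}\,{\rm Wr}(\varPhi_{-1},\varPhi_{+1})$. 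The third step is to invoke Corollary~\ref{jiexi-2}: under the strengthened decay hypothesis $(1+|x|)(q-q_\pm)\in L^1(\mathbb{R}^\pm)$, the columns $\varPhi_{+1},\varPhi_{-2}$ extend analytically to $D_+$ and continuously to $D_+\cup\Sigma$, while $\varPhi_{-1},\varPhi_{+2}$ extend analytically to $D_-$ and continuously to $D_-\cup\Sigma$, where $\Sigma$ now contains the branch points. Since a Wronskian determinant is a bilinear polynomial in the entries of its two columns, ${\rm Wr}(\varPhi_{+1},\varPhi_{-2})$ inherits analyticity on $D_+$ and continuity on $D_+\cup\Sigma$, and ${\rm Wr}(\varPhi_{-1},\varPhi_{+2})$ inherits the corresponding properties on $D_-\cup\Sigma$. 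Multiplying by the entire factor $z/2$ preserves these properties and yields the assertions for $\lambda s_{11}$ and $\lambda s_{22}$. For the off-diagonal coefficients the two columns in each Wronskian are analytic in opposite half-domains ($\varPhi_{+2}$ in $D_-$ against $\varPhi_{-2}$ in $D_+$, and symmetrically for $s_{21}$), so their product is only guaranteed continuous on the common set $\Sigma$; hence $\lambda s_{12}$ and $\lambda s_{21}$ are continuous on $\Sigma$.

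The only delicate point, and the reason the hypothesis of Proposition~\ref{jiexiS} must be strengthened, is the behaviour at $z=\pm iq_0$, where $\lambda$ vanishes while $1/\gamma_f$ blows up. The whole argument rests on the fact that these two effects cancel exactly, so that the apparent singularity of $s_{ij}(z)$ at the branch points is removable after multiplication by $\lambda(z)$. The weaker condition $q-q_\pm\in L^1$ only controls the Jost columns on $\Sigma^0=\Sigma\setminus\{\pm iq_0\}$, whereas the condition $(1+|x|)(q-q_\pm)\in L^1$ is exactly what Corollary~\ref{jiexi-2} needs to push the continuity of the columns, and therefore of the Wronskians, up to and including $\pm iq_0$. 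Once that continuity is established, the conclusion is immediate from the factorization $\lambda s_{ij}=\tfrac{z}{2}\,{\rm Wr}(\cdot,\cdot)$, so I expect no further technical obstacle beyond verifying this cancellation.
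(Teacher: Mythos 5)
Your proof is correct and follows essentially the same route the paper intends: the corollary is deduced from the Wronskian representations in Eq.~(\ref{S-lie}) together with the extended analyticity/continuity of the Jost columns guaranteed under the strengthened decay hypothesis. Your explicit computation of the cancellation $\lambda(z)/\gamma_f(z)=z/2$, which removes the apparent singularity at the branch points $z=\pm iq_0$, is exactly the detail the paper leaves implicit, and the rest (bilinearity of the Wronskian, multiplication by the entire factor $z/2$) is the standard argument.
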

Note that one cannot exclude the possible presence of zeros for $s_{11}(z)$ and $s_{22}(z)$ along $\Sigma^0$. To solve the Riemann-Hilbert problem in the inverse process, we restrict our consideration to potentials without spectral singularities \cite{Zhou1989}, i.e., $s_{11}(z)\ne 0,\, s_{22}(z)\ne 0$ for $z\in\Sigma$. The so-called reflection coefficients $\rho(z)$ and $\tilde\rho(z)$ are defined as
\begin{align}\label{fanshe}
\rho(z)=\frac{s_{21}(z)}{s_{11}(z)}
=\frac{{\rm Wr}(\varPhi_{-1}(x, t; z), \varPhi_{+1}(x, t; z))}{{\rm Wr}(\varPhi_{+1}(x, t; z), \varPhi_{-2}(x, t; z))}, \,\,
\tilde\rho(z)=\frac{s_{12}(z)}{s_{22}(z)}=\frac{{\rm Wr}(\varPhi_{+2}(x, t; z), \varPhi_{-2}(x, t; z))}{{\rm Wr}(\varPhi_{-1}(x, t; z), \varPhi_{+2}(x, t; z))},\,\, z\in\Sigma^0.
\end{align}

\subsubsection{Symmetries}

In this subsection, we study the symmetries of the Jost solutions $\varPhi_{\pm}(x, t; z)$ and scattering matrix $S(z)$. To this aim, we first pose the reduction conditions of the Lax pair on the complex $z$-plane. There are two involutions in $k$-plane: $(k, \lambda)\to (k^*,\, \lambda^*)$ and
$(k, \lambda)\to (k,\, -\lambda)$, which generate the corresponding two involutions in $z$-plane:  $z\to z^*$ and
$z\to -q_0^2/z$.

\begin{proposition}[Reduction conditions]\label{3RC}
The $X(x, t; z)$ and $T(x, t; z)$ in the Lax pair (\ref{lax-x}, \ref{lax-t}) keep  the following three reduction conditions on $z$-plane.
\begin{itemize}
\item The first reduction condition
\begin{align}\label{first-RC}
X(x, t; z)=\sigma_2\,X(x, t; z^*)^*\,\sigma_2, \quad T(x, t; z)=\sigma_2\,T(x, t; z^*)^*\,\sigma_2.
\end{align}
\item The second reduction condition
\begin{align}\label{second-RC}
X(x, t; z)=X(x, t; -z^*)^*,\quad T(x, t; z)=T(x, t; -z^*)^*.
\end{align}
\item The third reduction condition
\begin{align}\label{third-RC}
X(x, t; z)=X\left(x, t; -\frac{q_0^2}{z}\right),\quad T(x, t; z)=T\left(x, t; -\frac{q_0^2}{z}\right).
\end{align}
\end{itemize}
\end{proposition}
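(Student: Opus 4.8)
The plan is to verify all three reduction conditions by direct computation, reducing everything to two ingredients: the reality of the potential and a pair of Pauli-matrix identities. Since $q(x,t)\in\mathbb{R}$, the matrix $Q$ and every quantity built from it by $x$-differentiation and matrix multiplication—in particular $Q_x$, $Q^2$, $Q^3$, $Q_{xx}$—are real, so complex conjugation fixes them. In the focusing case $\sigma=-1$ one has $Q=iq\,\sigma_2$, whence the two identities $\sigma_2\sigma_3\sigma_2=-\sigma_3$ and $\sigma_2 Q\sigma_2=iq\,\sigma_2^3=iq\,\sigma_2=Q$ (and therefore also $\sigma_2 Q_x\sigma_2=Q_x$, $\sigma_2 Q^2\sigma_2=Q^2$, $\sigma_2 Q^3\sigma_2=Q^3$, $\sigma_2 Q_{xx}\sigma_2=Q_{xx}$). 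Because $X$ and $T$ depend on $z$ only through $k(z)=\tfrac12(z-q_0^2/z)$, the whole proposition is governed by how $k$ transforms under the three involutions.

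First I would record the transformation rules for $k$. Using $k(z)=\tfrac12(z-q_0^2/z)$ with $q_0\in\mathbb{R}$, a one-line computation gives $k(z^*)=k(z)^*$, $k(-z^*)=-k(z)^*$, and $k(-q_0^2/z)=k(z)$ (while $\lambda(-q_0^2/z)=-\lambda(z)$). These three facts are the engine of the proof.

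For the first reduction condition I would start from $X(z^*)^*=(ik^*\sigma_3+Q)^*=-ik\sigma_3+Q$ (using $k(z^*)=k^*$ and the reality of $Q$) and conjugate by $\sigma_2$; the identity $\sigma_2\sigma_3\sigma_2=-\sigma_3$ flips the sign of the diagonal term and recovers $X(z)=ik\sigma_3+Q$. The same mechanism treats $T$ term by term: in $T(z^*)^*$ the scalar factor $4k^2$ survives conjugation unchanged while $\sigma_2 X(z^*)^*\sigma_2=X(z)$; the cross term $-2ik^*\sigma_3(Q_x-Q^2)$ picks up one sign from complex conjugation of $i$ and a second from $\sigma_2\sigma_3\sigma_2=-\sigma_3$, so the two cancel and it returns to $-2ik\sigma_3(Q_x-Q^2)$; and the real tail $2Q^3-Q_{xx}$ is fixed by both operations. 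The second reduction condition is even shorter: since $k(-z^*)=-k^*$, one finds directly $X(-z^*)^*=(-ik^*\sigma_3+Q)^*=ik\sigma_3+Q=X(z)$, and conjugating $T(-z^*)$ (no $\sigma_2$ is needed here) returns $T(z)$, again because the $i\to-i$ of conjugation combines correctly with $k(-z^*)=-k^*$ in the cross term. The third reduction condition is immediate: $k(-q_0^2/z)=k(z)$ forces $X(-q_0^2/z)=X(z)$ and $T(-q_0^2/z)=T(z)$, since neither $X$ nor $T$ sees $\lambda$ except through $k$.

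The computation carries no genuine difficulty; the only point requiring care is the cross term $-2ik\sigma_3(Q_x-Q^2)$ of $T$ under the first condition, where one must check that the sign produced by $\sigma_2\sigma_3\sigma_2=-\sigma_3$ exactly offsets the sign produced by the complex conjugation of $i$, leaving the term invariant. I would verify this single sign bookkeeping explicitly and then let the remaining terms follow by the same pattern.
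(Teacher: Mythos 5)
Your verification is correct and complete: the three transformation rules $k(z^*)=k(z)^*$, $k(-z^*)=-k(z)^*$, $k(-q_0^2/z)=k(z)$, together with the reality of $Q$ (and hence of $Q_x,Q^2,Q^3,Q_{xx}$) and the identities $\sigma_2\sigma_3\sigma_2=-\sigma_3$, $\sigma_2Q\sigma_2=Q$, are exactly what is needed, and your sign bookkeeping on the cross term $-2ik\sigma_3(Q_x-Q^2)$ is right. The paper states this proposition without proof, and your direct term-by-term computation is the intended argument.
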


\begin{proposition}
The three symmetries for the Jost solutions $\varPhi_{\pm}(x, t; z)$ in $z\in\Sigma$ are given as follows:
\begin{itemize}
\item The first symmetry
\begin{align}\label{Jduichen-1}
\varPhi_{\pm}(x, t; z)&=\sigma_2\,\varPhi_{\pm}(x, t; z^*)^*\,\sigma_2,
\quad i.e.,\quad
\varPhi_{\pm j}(x, t; z)=(-1)^{j+1}i\,\sigma_2\,\varPhi_{\pm (3-j)}(x, t; z^*)^*,\quad j=1,2.
\end{align}
\item The second symmetry
\begin{align}\label{Jduichen-2}
\varPhi_{\pm}(x, t; z)&=\varPhi_{\pm}(x, t; -z^*)^*.
\end{align}

\item The third symmetry
\begin{align}\label{Jduichen-3}
\varPhi_{\pm}(x, t; z)&=\frac{iq_{\pm}}{z}\,\varPhi_{\pm}\left(x, t; -\frac{q_0^2}{z}\right)\sigma_1,
\quad i.e.,\quad
\varPhi_{\pm j}(x, t; z)=\frac{iq_{\pm}}{z}\,\varPhi_{\pm (3-j)}\left(x, t; -\frac{q_0^2}{z}\right),\quad j=1,2.
\end{align}
\end{itemize}
\end{proposition}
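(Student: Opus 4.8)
The plan is to exploit the uniqueness of the Jost solutions established in Propositions~\ref{jiexi-m1} and~\ref{jiexi-m2} (equivalently Corollaries~\ref{jiexi-1} and~\ref{jiexi-2}). For each of the three involutions I would construct, out of $\varPhi_{\pm}$, a new matrix-valued function, show that it solves the \emph{same} Lax pair~(\ref{lax-x})--(\ref{lax-t}), and verify that it obeys the same normalization~(\ref{Jost-asy}) as $x\to\pm\infty$; uniqueness then forces it to coincide with $\varPhi_{\pm}$. Invariance under the Lax flow will come directly from the reduction conditions of Proposition~\ref{3RC}, so the whole argument reduces to checking that the seed matrix $E_{\pm}(z)\,e^{i\theta(x,t;z)\sigma_3}$ is reproduced under the transformation. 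Throughout I will use $\sigma_j^2=I$, the identities $\sigma_2\sigma_1\sigma_2=-\sigma_1$, $\sigma_2\sigma_3\sigma_2=-\sigma_3$, $\sigma_1\sigma_3\sigma_1=-\sigma_3$, the fact that $q_{\pm}$ and $q_0$ are real, and the commutation $[\,E_{\pm}(z),\sigma_1\,]=0$ coming from $E_{\pm}(z)=I+\tfrac{iq_{\pm}}{z}\sigma_1$.

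For the first symmetry I set $\widetilde\varPhi_{\pm}(x,t;z):=\sigma_2\,\varPhi_{\pm}(x,t;z^*)^*\,\sigma_2$. Conjugating $\partial_x\varPhi_{\pm}(x,t;z^*)=X(x,t;z^*)\varPhi_{\pm}(x,t;z^*)$ and inserting $\sigma_2^2=I$, the first reduction condition~(\ref{first-RC}) shows that $\widetilde\varPhi_{\pm}$ solves~(\ref{lax-x}) with $X(x,t;z)$, and the same for the $t$-part. For the asymptotics I would check $\sigma_2 E_{\pm}(z^*)^*\sigma_2=E_{\pm}(z)$ (the sign produced by conjugating $iq_{\pm}/z^*$ is compensated by $\sigma_2\sigma_1\sigma_2=-\sigma_1$) together with $\theta(x,t;z^*)=\theta(x,t;z)^*$, whence $\sigma_2\,(e^{i\theta(x,t;z^*)\sigma_3})^*\,\sigma_2=e^{i\theta(x,t;z)\sigma_3}$ via $\sigma_2\sigma_3\sigma_2=-\sigma_3$. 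The columnwise form of~(\ref{Jduichen-1}) then follows by reading off the action of right multiplication by $\sigma_2$ on the columns.

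The second symmetry is the mildest: with $\widehat\varPhi_{\pm}(x,t;z):=\varPhi_{\pm}(x,t;-z^*)^*$, the second reduction condition~(\ref{second-RC}) gives invariance under the Lax pair after complex conjugation, and the asymptotic check rests on $E_{\pm}(-z^*)^*=E_{\pm}(z)$ together with $\theta(x,t;-z^*)=-\theta(x,t;z)^*$, so that $(e^{i\theta(x,t;-z^*)\sigma_3})^*=e^{i\theta(x,t;z)\sigma_3}$ with no Pauli conjugation needed. Uniqueness then closes the argument for~(\ref{Jduichen-2}).

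I expect the third symmetry to be the main obstacle, since the involution $z\mapsto -q_0^2/z$ mixes the $z$-dependence of $E_{\pm}$ with the required extra factor $iq_{\pm}/z$ and the flip $\sigma_1$. I would set $\varPsi_{\pm}(x,t;z):=\tfrac{iq_{\pm}}{z}\varPhi_{\pm}(x,t;-q_0^2/z)\,\sigma_1$; since the prefactor and $\sigma_1$ are independent of $x$ and $t$, the third reduction condition~(\ref{third-RC}) immediately yields that $\varPsi_{\pm}$ solves the Lax pair. The delicate point is the normalization: I must show $\theta(x,t;-q_0^2/z)=-\theta(x,t;z)$, which follows because $-q_0^2/z+q_0^2/(-q_0^2/z)=-(z+q_0^2/z)$ while $-q_0^2/z-q_0^2/(-q_0^2/z)=z-q_0^2/z$ is merely sign-flipped and enters only through a square; and, crucially, that $\tfrac{iq_{\pm}}{z}E_{\pm}(-q_0^2/z)=E_{\pm}(z)\,\sigma_1$, where the constraint $q_{\pm}^2=q_0^2$ (from $|q_{\pm}|=q_0$ with $q_{\pm}\in\mathbb{R}$) is exactly what collapses the $\sigma_1$-coefficient to $1$. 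Combining these with $\sigma_1\,e^{-i\theta\sigma_3}\,\sigma_1=e^{i\theta\sigma_3}$ and $[\,E_{\pm}(z),\sigma_1\,]=0$ reproduces $E_{\pm}(z)\,e^{i\theta(x,t;z)\sigma_3}$, and uniqueness again gives $\varPsi_{\pm}=\varPhi_{\pm}$, hence~(\ref{Jduichen-3}).
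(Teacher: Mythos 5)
Your proof is correct and follows exactly the route the paper sets up (the reduction conditions of Proposition~\ref{3RC} combined with uniqueness of the Jost solutions fixed by the asymptotics~(\ref{Jost-asy})); the paper itself states this proposition without writing out the verification, and the three involutions do preserve $\Sigma$, so the uniqueness statement applies where needed. All the key identities you isolate --- $\sigma_2E_{\pm}(z^*)^*\sigma_2=E_{\pm}(z)$, $\theta(x,t;-z^*)=-\theta(x,t;z)^*$, $\theta(x,t;-q_0^2/z)=-\theta(x,t;z)$, and $\tfrac{iq_{\pm}}{z}E_{\pm}(-q_0^2/z)=E_{\pm}(z)\sigma_1$ via $q_{\pm}^2=q_0^2$ --- check out.
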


\begin{proposition}\label{Sduichen}
The three symmetries for the scattering matrix $S(z)$ in $z\in\Sigma$ are given as follows:
\begin{itemize}
\item  The first symmetry
\begin{align}\label{Sduichen-1}
S(z)=\sigma_2\,S(z^*)^*\,\sigma_2,\quad i.e.,\quad
s_{11}(z)=s_{22}(z^*)^*,\quad s_{12}(z)=-s_{21}(z^*)^*.
\end{align}
\item The second symmetry
\begin{align}\label{Sduichen-2}
S(z)=S(-z^*)^*, \quad i.e.,\quad
s_{ij}(z)=s_{ij}(-z^*)^*,\quad i, j=1, 2.
\end{align}
\item The third symmetry
\begin{align}\label{Sduichen-3}
S(z)=\frac{q_+}{q_-}\,\sigma_1S\left(-\frac{q_0^2}{z}\right)\sigma_1,
\quad i.e.,\quad
s_{11}(z)=\frac{q_+}{q_-}\,s_{22}\left(-\frac{q_0^2}{z}\right),\quad s_{12}(z)=\frac{q_+}{q_-}\,s_{21}\left(-\frac{q_0^2}{z}\right).
\end{align}
\end{itemize}
\end{proposition}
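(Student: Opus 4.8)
The plan is to derive all three symmetries of $S(z)$ from the single scattering relation (\ref{Jostchuandi}), $\varPhi_+(x,t;z)=\varPhi_-(x,t;z)S(z)$, by feeding in the three symmetries of the Jost solutions (\ref{Jduichen-1})--(\ref{Jduichen-3}). Since $\det\varPhi_\pm=\gamma_f(z)\neq0$ on $\Sigma^0$, the matrices $\varPhi_\pm$ are invertible there, so that $S(z)=\varPhi_-^{-1}(x,t;z)\,\varPhi_+(x,t;z)$. In each case it then suffices to re-express $\varPhi_\pm(x,t;z)$ via the relevant Jost symmetry, substitute into (\ref{Jostchuandi}) evaluated at the transformed spectral point, and read off the transformed scattering matrix by comparison, using invertibility of $\varPhi_-$ to cancel it from the left. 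The component identities follow at the end from direct Pauli-matrix multiplications, namely $\sigma_2A^*\sigma_2=\bigl(\begin{smallmatrix}a_{22}^*&-a_{21}^*\\-a_{12}^*&a_{11}^*\end{smallmatrix}\bigr)$ and $\sigma_1A\sigma_1=\bigl(\begin{smallmatrix}a_{22}&a_{21}\\a_{12}&a_{11}\end{smallmatrix}\bigr)$ for $A=(a_{ij})$.

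For the first symmetry I would write (\ref{Jostchuandi}) at $z^*$, take complex conjugates, and conjugate both sides by $\sigma_2$, using $\sigma_2^2=I$ to insert $\sigma_2\sigma_2$ between $\varPhi_-(x,t;z^*)^*$ and $S(z^*)^*$. The first Jost symmetry (\ref{Jduichen-1}), $\sigma_2\varPhi_\pm(x,t;z^*)^*\sigma_2=\varPhi_\pm(x,t;z)$, then recasts the relation as $\varPhi_+(x,t;z)=\varPhi_-(x,t;z)\,[\sigma_2S(z^*)^*\sigma_2]$, whence $S(z)=\sigma_2S(z^*)^*\sigma_2$ and the stated $s_{11}(z)=s_{22}(z^*)^*$, $s_{12}(z)=-s_{21}(z^*)^*$. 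The second symmetry is the cleanest: evaluating (\ref{Jostchuandi}) at $-z^*$ and conjugating gives $\varPhi_+(x,t;-z^*)^*=\varPhi_-(x,t;-z^*)^*S(-z^*)^*$, and the second Jost symmetry (\ref{Jduichen-2}), $\varPhi_\pm(x,t;-z^*)^*=\varPhi_\pm(x,t;z)$, immediately yields $S(z)=S(-z^*)^*$, i.e. $s_{ij}(z)=s_{ij}(-z^*)^*$, with no Pauli conjugation required.

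For the third symmetry I would start from (\ref{Jostchuandi}) at $-q_0^2/z$. The key preliminary is to invert the third Jost symmetry (\ref{Jduichen-3}): multiplying on the right by $\sigma_1$ and using $\sigma_1^2=I$ gives $\varPhi_\pm(x,t;-q_0^2/z)=\frac{z}{iq_\pm}\,\varPhi_\pm(x,t;z)\,\sigma_1$. Substituting these into the scattering relation at $-q_0^2/z$, the common factor $z$ cancels while the prefactors $1/q_+$ and $1/q_-$ do not, which is precisely the origin of the amplitude ratio. After clearing denominators and multiplying on the right by $\sigma_1$ once more, one obtains $\varPhi_+(x,t;z)=\frac{q_+}{q_-}\,\varPhi_-(x,t;z)\,\sigma_1S(-q_0^2/z)\sigma_1$, so that $S(z)=\frac{q_+}{q_-}\,\sigma_1S(-q_0^2/z)\sigma_1$, giving $s_{11}(z)=\frac{q_+}{q_-}s_{22}(-q_0^2/z)$ and $s_{12}(z)=\frac{q_+}{q_-}s_{21}(-q_0^2/z)$.

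I expect the third symmetry to be the main obstacle, exactly because the asymmetric scalar factors $iq_\pm/z$ attached to $\varPhi_+$ and $\varPhi_-$ must be tracked separately; their incomplete cancellation is what produces the nontrivial ratio $q_+/q_-$, and it is easy to misplace a factor of $\sigma_1$ or to conflate $q_+$ with $q_-$. By contrast, the first two symmetries are routine once the conjugation by $\sigma_2$ (respectively the plain complex conjugation) is organized around $\sigma_2^2=I$ and the invertibility of $\varPhi_-$ on $\Sigma^0$; the extension of all three identities from $\Sigma^0$ to $\Sigma$ is then handled by continuity of the scattering coefficients established in Proposition \ref{jiexiS}.
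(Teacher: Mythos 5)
Your proposal is correct and follows exactly the route the paper intends (the paper states Proposition \ref{Sduichen} without writing out the proof): substitute each Jost-solution symmetry (\ref{Jduichen-1})--(\ref{Jduichen-3}) into the scattering relation (\ref{Jostchuandi}) evaluated at the transformed spectral point, cancel the invertible $\varPhi_-$, and read off the componentwise identities from the Pauli conjugations, with the ratio $q_+/q_-$ arising from the asymmetric prefactors $iq_\pm/z$ in the third Jost symmetry. Your bookkeeping of the Pauli-matrix conjugations and of the $q_\pm$ factors is accurate, and the extension from $\Sigma^0$ to $\Sigma$ by continuity is handled appropriately.
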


Note that the above-obtained symmetries for the Jost solutions $\varPhi_{\pm}(x, t; z)$ and scattering matrix $S(z)$ are established in $z\in\Sigma^0$, but the symmetries of the columns $\varPhi_{\pm 1}(x, t; z)$, $\varPhi_{\pm 2}(x, t; z)$ and the scattering coefficients $s_{11}(z), s_{22}(z)$ can hold in their extended regions.

\begin{corollary}
Three symmetries for the reflection coefficients $\rho(z)$ and $\tilde\rho(z)$ in $z\in\Sigma$ are listed below:
\begin{itemize}
\item The first symmetry
\begin{align}\label{rho-1}
\rho(z)=-\tilde\rho(z^*)^*.
\end{align}
\item The second symmetry
\begin{align}\label{rho-2}
\rho(z)=\rho(-z^*)^*,\quad \tilde\rho(z)=\tilde\rho(-z^*)^*.
\end{align}
\item The third symmetry
\begin{align}
\rho(z)=\tilde\rho\left(-\frac{q_0^2}{z}\right).
\end{align}
\end{itemize}
\end{corollary}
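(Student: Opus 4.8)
The plan is to derive each of the three reflection-coefficient symmetries directly from the corresponding scattering-matrix symmetry established in Proposition~\ref{Sduichen}, substituted into the definitions $\rho(z)=s_{21}(z)/s_{11}(z)$ and $\tilde\rho(z)=s_{12}(z)/s_{22}(z)$ of Eq.~(\ref{fanshe}). Since the reflection coefficients are ratios of scattering coefficients, no new analytic input is needed; the whole argument is algebraic, and the only care required is in tracking the complex conjugates and the argument transformations $z\mapsto z^*$, $z\mapsto -z^*$, and $z\mapsto -q_0^2/z$. Throughout I work on $\Sigma^0$, where all four $s_{ij}$ are defined and $s_{11},s_{22}\neq0$ by the no-spectral-singularity assumption, and then extend the identities to $\Sigma$ by continuity.

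I would begin with the second symmetry, which is the cleanest. From the componentwise form of Eq.~(\ref{Sduichen-2}), namely $s_{ij}(z)=s_{ij}(-z^*)^*$ for all $i,j$, I get $s_{21}(-z^*)^*=s_{21}(z)$ and $s_{11}(-z^*)^*=s_{11}(z)$, so that
\begin{align}
\rho(-z^*)^*=\frac{s_{21}(-z^*)^*}{s_{11}(-z^*)^*}=\frac{s_{21}(z)}{s_{11}(z)}=\rho(z),
\end{align}
and the same computation with $(1,2)$ and $(2,2)$ entries yields $\tilde\rho(-z^*)^*=\tilde\rho(z)$. For the first symmetry I use Eq.~(\ref{Sduichen-1}): replacing $z$ by $z$ in $s_{11}(z)=s_{22}(z^*)^*$ gives $s_{22}(z^*)^*=s_{11}(z)$, and replacing $z$ by $z^*$ in $s_{12}(z)=-s_{21}(z^*)^*$ gives $s_{12}(z^*)=-s_{21}(z)^*$, hence $s_{12}(z^*)^*=-s_{21}(z)$. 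Therefore
\begin{align}
\tilde\rho(z^*)^*=\frac{s_{12}(z^*)^*}{s_{22}(z^*)^*}=\frac{-s_{21}(z)}{s_{11}(z)}=-\rho(z),
\end{align}
which is exactly $\rho(z)=-\tilde\rho(z^*)^*$.

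For the third symmetry I note that Proposition~\ref{Sduichen} writes Eq.~(\ref{Sduichen-3}) in the compact matrix form $S(z)=\tfrac{q_+}{q_-}\,\sigma_1 S(-q_0^2/z)\,\sigma_1$, and expanding $\sigma_1 S\sigma_1$ produces the full set of four scalar identities, not merely the two displayed; in particular one also obtains $s_{21}(z)=\tfrac{q_+}{q_-}\,s_{12}(-q_0^2/z)$ and $s_{22}(z)=\tfrac{q_+}{q_-}\,s_{11}(-q_0^2/z)$. Inverting the relevant two gives $s_{12}(-q_0^2/z)=\tfrac{q_-}{q_+}\,s_{21}(z)$ and $s_{22}(-q_0^2/z)=\tfrac{q_-}{q_+}\,s_{11}(z)$, whence
\begin{align}
\tilde\rho\!\left(-\frac{q_0^2}{z}\right)=\frac{s_{12}(-q_0^2/z)}{s_{22}(-q_0^2/z)}=\frac{(q_-/q_+)\,s_{21}(z)}{(q_-/q_+)\,s_{11}(z)}=\frac{s_{21}(z)}{s_{11}(z)}=\rho(z),
\end{align}
and the prefactors $q_\pm$ cancel. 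The main (and only genuine) obstacle is bookkeeping: one must be disciplined about which argument a conjugation acts on and must unpack the matrix identity (\ref{Sduichen-3}) into all four scalar relations so that the correct off-diagonal coefficient $s_{12}(-q_0^2/z)$ is available rather than the two listed entries alone. A minor point to acknowledge is that $\rho,\tilde\rho$ are a priori defined on $\Sigma^0$, so the stated identities on $\Sigma$ hold at $z=\pm iq_0$ only in the limiting/continuity sense inherited from the continuity of the $s_{ij}$.
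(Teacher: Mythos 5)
Your proof is correct and follows exactly the route the paper intends: the corollary is stated as an immediate consequence of Proposition~\ref{Sduichen} and the definitions in Eq.~(\ref{fanshe}), and your componentwise substitution (including unpacking the matrix identity (\ref{Sduichen-3}) into all four scalar relations so that $s_{12}(-q_0^2/z)$ and $s_{22}(-q_0^2/z)$ are available) is precisely that argument. The cancellation of the $q_+/q_-$ prefactors and the continuity remark at $z=\pm iq_0$ are handled appropriately.
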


\subsubsection{Discrete spectrum and residue conditions with simple poles}
The discrete spectrum of the scattering problem is the set of all values $z\in \mathbb{C}\backslash\Sigma$ such that the scattering problem admits eigenfunctions in $L^2(\mathbb{R})$. As is shown in \cite{Biondini2014}, these are precisely the values of $z$ in $D_+$ for which $s_{11}(z)=0$ and those values in $D_-$ for which $s_{22}(z)=0$. In this subsection, we suppose that $s_{11}(z)$ has $N_1$ and $N_2$ simple zeros, respectively, in $D_+\cap\left\{z\in\mathbb{C}:\mathrm{Re} \,z>0, \mathrm{Im}\, z>0\right\}$ denoted by $z_n$, $n=1, 2, \cdots, N_1$ and in $i\left(q_0, +\infty\right)$ denoted by $iw_n$, $n=1, 2, \cdots, N_2$. It follows from the symmetries of the scattering matrix in Proposition \ref{Sduichen} that
\begin{align}
s_{11}(-z_n^*)=s_{22}\!\!\left(\!\!-\frac{q_0^2}{z_n}\right)\!\!=s_{22}\!\!\left(\!\frac{q_0^2}{z_n^*}\!\right)=s_{22}(z_n^*)
=s_{22}(-z_n)=s_{11}\!\!\left(\!\!-\frac{q_0^2}{z_n^*}\right)=s_{11}\!\!\left(\!\frac{q_0^2}{z_n}\right)=0,\,\, n=1,2,\cdots, N_1,
\end{align}
\begin{align}
s_{22}(-iw_n)=s_{11}\!\left(\frac{iq_0^2}{w_n}\right)=s_{22}\!\left(\!-\frac{iq_0^2}{w_n}\right)=0,\quad n=1,2,\cdots, N_2.
\end{align}
Thus, the discrete spectrum is the set
\begin{align}\label{lisanpu}
Z=\left\{z_n, z_n^*, -z_n^*, -z_n, -\frac{q_0^2}{z_n}, \frac{q_0^2}{z_n^*}, -\frac{q_0^2}{z_n^*}, \frac{q_0^2}{z_n}\right\}_{n=1}^{N_1}\bigcup\left\{iw_n, -iw_n, \frac{iq_0^2}{w_n}, -\frac{iq_0^2}{w_n}\right\}_{n=1}^{N_2},
\end{align}
whose distribution is shown in Fig. \ref{f} (right).
\begin{lemma}\label{duojidian}
Suppose $z_0$ is an isolated pole of order $m$ for $F(z)$, then $F(z)$ has the following Laurent expansion
$F(z)=\sum_{j=-m}^\infty F_j\left(z-z_0\right)^j$ with $
F_j=\lim_{z\to z_0}\frac{\left[\left(z-z_0\right)^mF(z)\right]^{\left(j+m\right)}}{\left(j+m\right)!}.$
\end{lemma}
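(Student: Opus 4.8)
The plan is to reduce the statement to Taylor's theorem for a single holomorphic function. By the very definition of a pole of order $m$, the product $G(z):=(z-z_0)^mF(z)$ has a removable singularity at $z_0$; hence it extends to a function holomorphic on a full neighborhood of $z_0$ with $G(z_0)\neq0$. First I would record this reduction explicitly, since every subsequent step is carried out on $G$ rather than on $F$ itself, and the hypothesis that the pole has \emph{order exactly} $m$ enters only through $G(z_0)\neq0$, which later confirms that the leading Laurent power is indeed $(z-z_0)^{-m}$.

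Next I would apply Taylor's theorem to the holomorphic function $G$ on a disk $|z-z_0|<r$ contained in its domain of holomorphy, obtaining the convergent power series
\begin{align}\label{taylorG}
G(z)=\sum_{n=0}^{\infty}\frac{G^{(n)}(z_0)}{n!}\,(z-z_0)^n ,
\end{align}
where, because $G$ and all of its derivatives are holomorphic and hence continuous at $z_0$, each coefficient may equivalently be written as $G^{(n)}(z_0)=\lim_{z\to z_0}G^{(n)}(z)$. Then I would divide \eqref{taylorG} by $(z-z_0)^m$, valid on the punctured disk $0<|z-z_0|<r$, and reindex the sum via $j=n-m$. This yields
\begin{align}\label{laurentF}
F(z)=(z-z_0)^{-m}G(z)=\sum_{j=-m}^{\infty}\frac{G^{(j+m)}(z_0)}{(j+m)!}\,(z-z_0)^{j},
\end{align}
a Laurent series on the punctured disk whose lowest power is $(z-z_0)^{-m}$.

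To finish, I would match \eqref{laurentF} against the abstract Laurent expansion $F(z)=\sum_{j\ge -m}F_j(z-z_0)^j$. Recalling $G(z)=(z-z_0)^mF(z)$ and $G^{(j+m)}(z_0)=\lim_{z\to z_0}[(z-z_0)^mF(z)]^{(j+m)}$, term-by-term comparison gives exactly $F_j=\lim_{z\to z_0}[(z-z_0)^mF(z)]^{(j+m)}/(j+m)!$, as claimed.

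The argument has no substantive obstacle; it is entirely routine complex analysis. The only two points deserving care are the legitimacy of the reductions: that $(z-z_0)^mF(z)$ genuinely extends holomorphically across $z_0$ (this is precisely the content of ``pole of order $m$''), and that the coefficient read off from \eqref{laurentF} is forced to equal $F_j$. The latter rests on the \textbf{uniqueness} of the Laurent expansion on an annulus, so that \eqref{laurentF} must coincide coefficient-by-coefficient with the abstract expansion $\sum F_j(z-z_0)^j$; invoking this uniqueness is what makes the identification of $F_j$ rigorous rather than merely formal.
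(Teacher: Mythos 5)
Your proof is correct and is precisely the standard argument the paper itself invokes: the paper gives no proof of this lemma, stating only that it ``can be verified easily by the knowledge of complex variable'' with a citation to a complex-variables text, and your reduction to Taylor's theorem for the holomorphic extension of $(z-z_0)^mF(z)$, followed by division, reindexing, and uniqueness of the Laurent expansion, is exactly that routine verification. No gaps.
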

The Lemma can be verified easily by the knowledge of complex variable \cite{Ablowitz2003}. One can use the lemma to derive the residue condition with simple poles and double poles.

As $z_0\in Z\cap D_+$, since $s_{11}(z_0)=0$, it follows from Eq.~(\ref{S-lie}) that there exists a constant $b[z_0]$ such that
\begin{align} \no
\varPhi_{+1}(x, t; z_0)=b[z_0]\,\varPhi_{-2}(x, t; z_0).
\end{align}
The residue condition is given from Lemma \ref{duojidian} by:
\begin{align}\no
\mathop\mathrm{Res}\limits_{z=z_0}\left[\frac{\varPhi_{+1}(x, t; z)}{s_{11}(z)}\right]=\frac{\varPhi_{+1}(x, t; z_0)}{s_{11}'(z_0)}=\frac{b[z_0]}{s_{11}'(z_0)}\,\varPhi_{-2}(x, t; z_0).
\end{align}
As $z_0\in Z\cap D_-$, similarly one has
\begin{align}\no
\varPhi_{+2}(x, t; z_0)=b[z_0]\,\varPhi_{-1}(x, t; z_0)
\end{align}
and
\begin{align}\no
\mathop\mathrm{Res}\limits_{z=z_0}\left[\frac{\varPhi_{+2}(x, t; z)}{s_{22}(z)}\right]=\frac{\varPhi_{+2}(x, t; z_0)}{s_{22}'(z_0)}=\frac{b[z_0]}{s_{22}'(z_0)}\,\varPhi_{-1}(x, t; z_0).
\end{align}
For  convenience, we denote the constants $b[z_0]$ and $A[z_0]$, respectively, by
\begin{align}\label{bAdingyi}
b[z_0]=\left\{
\begin{aligned}
\frac{\varPhi_{+1}(x, t; z_0)}{\varPhi_{-2}(x, t; z_0)}, \quad z_0\in Z\cap D_+,\\[0.05in]
\frac{\varPhi_{+2}(x, t; z_0)}{\varPhi_{-1}(x, t; z_0)}, \quad z_0\in Z\cap D_-,
\end{aligned}\right.\qquad
A[z_0]=\left\{
\begin{aligned}
\frac{b[z_0]}{s_{11}'(z_0)},\quad z_0\in Z\cap D_+,\\[0.05in]
\frac{b[z_0]}{s_{22}'(z_0)},\quad z_0\in Z\cap D_-,
\end{aligned}\right.
\end{align}
and the more compact form of residue conditions can be represented as
\begin{align}\label{Jie-liushu}
\begin{aligned}
\mathop\mathrm{Res}\limits_{z=z_0}\left[\frac{\varPhi_{+1}(x, t; z)}{s_{11}(z)}\right]&=A[z_0]\,\varPhi_{-2}(x, t; z_0),\quad z_0\in Z\cap D_+,\\[0.05in]
\mathop\mathrm{Res}\limits_{z=z_0}\left[\frac{\varPhi_{+2}(x, t; z)}{s_{22}(z)}\right]&=A[z_0]\,\varPhi_{-1}(x, t; z_0),\quad z_0\in Z\cap D_-,
\end{aligned}
\end{align}
where $\frac{\bm\cdot}{\bm\cdot}$ in the expression of $b[z_0]$ denotes the proportional coefficient.

For $z\in D_+$, there exists a constant $\delta >0$ such that
$\left\{\tilde z: \left| \tilde z-z\right|<\delta\right\}\subset D_+,\,  \left\{\tilde z^*: \left| \tilde z^*-z^*\right|<\delta\right\}\subset D_-$.
The analyticities for $s_{11}(\tilde z)$ in $\left\{\tilde z: \left| \tilde z-z\right|<\delta\right\}\subset D_+$
and for $s_{22}(\tilde z)$ in $\left\{\tilde z^*: \left| \tilde z^*-z^*\right|<\delta\right\}\subset D_-$ yields the Taylor expansions
\begin{align}\label{s22e}
s_{11}(\tilde z)=\sum_{m=0}^\infty \frac{s_{11}^{\left(m\right)}(z)}{m!}\left(\tilde z-z\right)^m,\,\,{\rm for}\,\, \tilde z\subset D_+, \quad
s_{22}(\tilde z^*)=\sum_{m=0}^\infty \frac{s_{22}^{\left(m\right)}(z^*)}{m!}\left(\tilde z^*-z^*\right)^m, \,\,{\rm for}\,\, \tilde z\subset D_-.
\end{align}

Next, we imply the relations for the residue between different discrete spectral points in $Z$. To this end, we first give a Lemma.

\begin{lemma}\label{dSduichen-1e}
Two relations between $s_{11}^{\left(m\right)}(z)$ and $s_{22}^{\left(m\right)}(z)$ are given below:
\begin{itemize}
\item The first relation
\begin{align}\label{dSduichen-1e}
s_{11}^{\left(m\right)}(z)=s_{22}^{\left(m\right)}(z^*)^*, \quad m\in\mathbb{N}, \quad z\in D_{+}.
\end{align}
\item The second relation
\begin{align}\label{dSduichen-2e}
\begin{gathered}
s_{11}^{\left(m\right)}(z)=\left(-1\right)^ms_{11}^{\left(m\right)}(-z^*)^*, \quad m\in\mathbb{N},\quad z\in D_{+}, \v \\
s_{22}^{\left(m\right)}(z)=\left(-1\right)^ms_{22}^{\left(m\right)}(-z^*)^*, \quad m\in\mathbb{N},\quad z\in D_{-}.
\end{gathered}
\end{align}
\end{itemize}
\end{lemma}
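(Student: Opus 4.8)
The plan is to derive both relations by differentiating, $m$ times, the functional symmetries of the scattering coefficients recorded in Proposition \ref{Sduichen}. The only analytic input required is that those identities, although stated on the continuous spectrum $\Sigma$, persist on the open analyticity domains. This follows from three facts already in place: $s_{11}$ is analytic on $D_+$ and $s_{22}$ on $D_-$ (Proposition \ref{jiexiS}); the involution $z\mapsto z^*$ sends $D_+$ to $D_-$ because $\lambda(z^*)=\overline{\lambda(z)}$ reverses the sign of $\mathrm{Im}\,\lambda$; and the involution $z\mapsto -z^*$ preserves $D_+$ (and $D_-$) because $\lambda(-z^*)=-\overline{\lambda(z)}$ leaves the sign of $\mathrm{Im}\,\lambda$ unchanged. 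Hence, by the identity theorem, the first two symmetries of Proposition \ref{Sduichen} extend from $\Sigma$ to $D_\pm$, exactly as announced in the remark following that proposition.

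For the first relation I would start from $s_{11}(z)=s_{22}(z^*)^*$, now valid for $z\in D_+$. Writing the right side as $\overline{s_{22}(\bar z)}$, I observe it is a \emph{holomorphic} function of $z$ on $D_+$: composing the analytic $s_{22}$ with the antiholomorphic map $z\mapsto\bar z$ and then conjugating restores analyticity (the Schwarz reflection mechanism), which is legitimate since $z\in D_+$ forces $\bar z\in D_-$. Matching Taylor coefficients at an arbitrary $z_0\in D_+$, equivalently using the elementary identity $\frac{d^m}{dz^m}\,\overline{s_{22}(\bar z)}=\overline{s_{22}^{(m)}(\bar z)}$, and differentiating the symmetry $m$ times, yields $s_{11}^{(m)}(z)=s_{22}^{(m)}(z^*)^*$, which is the first relation.

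For the second relation the same scheme applies to $s_{11}(z)=s_{11}(-z^*)^*=\overline{s_{11}(-\bar z)}$ on $D_+$; here the inner substitution is $z\mapsto -\bar z$, whose derivative contributes a factor $-1$ at each differentiation, so the chain rule (again packaged as equality of Taylor coefficients) produces the sign $(-1)^m$ and gives $s_{11}^{(m)}(z)=(-1)^m s_{11}^{(m)}(-z^*)^*$. The companion statement for $s_{22}$ on $D_-$ is verbatim the same argument applied to the $s_{22}$-branch of the second symmetry, using that $z\mapsto -z^*$ preserves $D_-$. I expect the only delicate point to be the justification that these symmetries hold on the open sets $D_\pm$ rather than merely on $\Sigma$, i.e.\ the legitimacy of complex (not just real) differentiation, together with the careful bookkeeping of how conjugation of the argument interacts with $d/dz$; this is precisely what supplies the conjugate on the right-hand side and the factor $(-1)^m$. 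Everything else is a routine termwise differentiation.
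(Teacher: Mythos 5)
Your argument is correct and follows exactly the route the paper intends: extend the first two symmetries of Proposition \ref{Sduichen} from $\Sigma$ to the analyticity domains (as the paper's remark after that proposition already asserts), note that $z\mapsto z^*$ swaps $D_+$ and $D_-$ while $z\mapsto -z^*$ preserves them, and compare Taylor coefficients, with the chain rule on the inner map $z\mapsto -\bar z$ supplying the factor $(-1)^m$. The only cosmetic point is that the persistence of the symmetries off $\Sigma$ is a boundary-uniqueness statement (Schwarz reflection), not literally the interior identity theorem, but since you also invoke the reflection mechanism and the paper itself simply asserts this extension, the proof stands.
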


\begin{proposition}
As $z_0\in Z$, there exist three relations for the residue conditions:
\begin{itemize}
\item The first relation: $
A[z_0]=-A[z_0^*]^*.$
%\end{align}
\item The second relation:
%\begin{align}
$A[z_0]=-A[-z_0^*]^*.$
%\end{align}
\item The third relation:
%\begin{align}
$A[z_0]=\frac{z_0^2}{q_0^2}\,A\left[-\frac{q_0^2}{z_0}\right].$
%\end{align}
\end{itemize}
\end{proposition}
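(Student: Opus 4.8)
The plan is to reduce everything to the case $z_0\in Z\cap D_+$ (the case $z_0\in Z\cap D_-$ being entirely parallel, with the roles of $s_{11}$ and $s_{22}$ and of the two columns interchanged) and to exploit the definition $A[z_0]=b[z_0]/s_{11}'(z_0)$ in (\ref{bAdingyi}). Each of the three relations then splits into two independent ingredients: a \emph{symmetry of the proportionality constant} $b[z_0]$, obtained from the corresponding symmetry of the Jost solutions, and a \emph{symmetry of the derivative} $s_{11}'(z_0)$, obtained from the symmetry relations of the scattering coefficients. Combining them yields the claimed identity for $A$.

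First I would establish the $b$-symmetries. Starting from $\varPhi_{+1}(x,t;z_0)=b[z_0]\,\varPhi_{-2}(x,t;z_0)$, substitute the first Jost symmetry (\ref{Jduichen-1}), which sends $\varPhi_{+1}(z_0)\mapsto i\sigma_2\varPhi_{+2}(z_0^*)^*$ and $\varPhi_{-2}(z_0)\mapsto -i\sigma_2\varPhi_{-1}(z_0^*)^*$; after cancelling $\sigma_2$ and conjugating, one reads off $b[z_0^*]=-b[z_0]^*$ (note $z_0^*\in D_-$, so $b[z_0^*]$ is the $D_-$-branch of the definition). Repeating this with the second Jost symmetry (\ref{Jduichen-2}), which acts columnwise by $\varPhi_{\pm j}(z)=\varPhi_{\pm j}(-z^*)^*$, gives $b[-z_0^*]=b[z_0]^*$ (here $-z_0^*\in D_+$). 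Finally the third Jost symmetry (\ref{Jduichen-3}), $\varPhi_{\pm j}(z)=\tfrac{iq_\pm}{z}\varPhi_{\pm(3-j)}(-q_0^2/z)$, applied to both $\varPhi_{+1}(z_0)$ and $\varPhi_{-2}(z_0)$ produces the factors $iq_+/z_0$ and $iq_-/z_0$, which cancel to leave $b[-q_0^2/z_0]=\tfrac{q_-}{q_+}\,b[z_0]$.

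Next I would assemble the derivative symmetries. Differentiating the relation $s_{11}^{(m)}(z)=s_{22}^{(m)}(z^*)^*$ of Lemma \ref{dSduichen-1e} at $m=1$ gives $s_{11}'(z_0)=s_{22}'(z_0^*)^*$; the second relation (\ref{dSduichen-2e}) at $m=1$ gives $s_{11}'(z_0)=-s_{11}'(-z_0^*)^*$; and differentiating the third scattering symmetry $s_{11}(z)=\tfrac{q_+}{q_-}s_{22}(-q_0^2/z)$ of (\ref{Sduichen-3}) via the chain rule yields $s_{11}'(z_0)=\tfrac{q_+}{q_-}\tfrac{q_0^2}{z_0^2}\,s_{22}'(-q_0^2/z_0)$. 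Dividing each $b$-symmetry by the matching derivative symmetry then gives, respectively, $A[z_0^*]=-A[z_0]^*$, $A[-z_0^*]=-A[z_0]^*$, and $A[-q_0^2/z_0]=\tfrac{q_0^2}{z_0^2}A[z_0]$, which are exactly the three asserted relations after taking conjugates.

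I expect the only genuine subtlety to be the third relation. Because the involution $z\mapsto -q_0^2/z$ is not affine, the chain rule contributes the Jacobian factor $q_0^2/z_0^2$ when differentiating (\ref{Sduichen-3}); this is precisely the factor $z_0^2/q_0^2$ appearing in the statement, and it must be tracked consistently together with the ratio $q_-/q_+$ coming from the $b$-symmetry so that the two cancel correctly. One should also check that the involutions act on the domains as needed --- $z\mapsto z^*$ and $z\mapsto -q_0^2/z$ interchange $D_+$ and $D_-$ while $z\mapsto -z^*$ preserves each --- so that the correct branch ($s_{11}'$ versus $s_{22}'$) of the definition of $A$ is used at each image point; this follows immediately from ${\rm Im}\,\lambda=\tfrac12{\rm Im}(z)(1-q_0^2/|z|^2)$. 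Everything else is routine bookkeeping of signs and factors of $i$.
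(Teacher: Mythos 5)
Your proof is correct and follows exactly the route the paper sets up (the paper states this proposition without a written proof, but the preceding Lemma on $s_{11}^{(m)},s_{22}^{(m)}$ and the Jost/scattering symmetry propositions are clearly intended to be combined just as you do): split $A=b/s'$ into a $b$-symmetry from the Jost-solution symmetries and a derivative symmetry from the scattering-coefficient symmetries, with the chain-rule Jacobian $q_0^2/z_0^2$ and the ratio $q_-/q_+$ cancelling in the third relation. All signs, factors, and domain assignments ($z\mapsto z^*$ and $z\mapsto -q_0^2/z$ swapping $D_\pm$, $z\mapsto -z^*$ preserving them) check out.
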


\begin{corollary}
The relations between coefficients of residue conditions in $Z$ are given.
\begin{align}\no
A[z_n]=-A[-z_n^*]^*=-A[z_n^*]^*=A[-z_n]=\frac{z_n^2}{q_0^2}\,A\left[-\frac{q_0^2}{z_n}\right]=-\frac{z_n^2}{q_0^2}\,A\left[-\frac{q_0^2}{z_n^*}\right]^*=-\frac{z_n^2}{q_0^2}\,A\left[\frac{q_0^2}{z_n^*}\right]^*=\frac{z_n^2}{q_0^2}\,A\left[\frac{q_0^2}{z_n}\right],
\end{align}
\begin{align}\no
A[iw_n]=-A[-iw_n]^*=-\frac{w_n^2}{q_0^2}\,A\left[\frac{iq_0^2}{w_n}\right]=\frac{w_n^2}{q_0^2}\,A\left[-\frac{iq_0^2}{w_n}\right]^*, \quad \mathrm{Re}\, A[iw_n]=0.
\end{align}
\end{corollary}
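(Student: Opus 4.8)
The plan is to obtain every identity in the statement by repeatedly composing the three residue relations established in the preceding Proposition, namely $A[z_0]=-A[z_0^*]^*$, $A[z_0]=-A[-z_0^*]^*$, and $A[z_0]=\frac{z_0^2}{q_0^2}\,A\!\left[-\frac{q_0^2}{z_0}\right]$, each valid for every $z_0\in Z$. I will refer to these as the first, second and third relations. The underlying observation is that the discrete set $Z$ in (\ref{lisanpu}) is precisely the orbit of a single seed point ($z_n$ in the generic case, $iw_n$ in the purely imaginary case) under the group generated by the three involutions $z\mapsto z^*$, $z\mapsto -z^*$ and $z\mapsto -q_0^2/z$. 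Thus each point of a given orbit is reachable from its seed by a short word in these maps, and applying the matching relation at each step transports the value of $A$ up to a complex conjugation and an explicit rational prefactor.

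First I would assemble the generic octet $\{z_n, z_n^*, -z_n^*, -z_n, -q_0^2/z_n, q_0^2/z_n^*, -q_0^2/z_n^*, q_0^2/z_n\}$. The first relation at $z_0=z_n$ gives $A[z_n]=-A[z_n^*]^*$, and the second relation at $z_0=z_n$ gives $A[z_n]=-A[-z_n^*]^*$. Applying the first relation at $z_0=-z_n$ yields $A[-z_n]=-A[-z_n^*]^*=A[z_n]$, which supplies the equality $A[z_n]=A[-z_n]$. The third relation at $z_n$ gives $A[z_n]=\frac{z_n^2}{q_0^2}A[-q_0^2/z_n]$, while the third relation at $-z_n$, combined with $A[-z_n]=A[z_n]$, gives $A[z_n]=\frac{z_n^2}{q_0^2}A[q_0^2/z_n]$. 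The two remaining conjugated entries follow by composing the third relation with the first: applying the first relation to $A[-q_0^2/z_n]$ and to $A[q_0^2/z_n]$ produces $A[z_n]=-\frac{z_n^2}{q_0^2}A[-q_0^2/z_n^*]^*$ and $A[z_n]=-\frac{z_n^2}{q_0^2}A[q_0^2/z_n^*]^*$, completing the full chain of eight equal quantities.

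Next I would treat the purely imaginary quartet $\{iw_n,-iw_n,iq_0^2/w_n,-iq_0^2/w_n\}$, where the reality $w_n\in\mathbb{R}$ forces $(iw_n)^*=-iw_n$ and collapses part of the orbit. The first relation at $iw_n$ gives $A[iw_n]=-A[-iw_n]^*$; the third relation at $iw_n$ gives $A[iw_n]=\frac{(iw_n)^2}{q_0^2}A[-q_0^2/(iw_n)]=-\frac{w_n^2}{q_0^2}A[iq_0^2/w_n]$ after using $-q_0^2/(iw_n)=iq_0^2/w_n$; and composing this with the first relation yields $A[iw_n]=\frac{w_n^2}{q_0^2}A[-iq_0^2/w_n]^*$. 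The reality statement $\mathrm{Re}\,A[iw_n]=0$ is the one place where the degenerate geometry does the work: applying the second relation at $z_0=iw_n$ and using $-(iw_n)^*=iw_n$ gives the self-relation $A[iw_n]=-A[iw_n]^*$, i.e. $A[iw_n]+A[iw_n]^*=0$.

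The computations are routine; the only genuine care is bookkeeping. The step I expect to be most error-prone is pairing each target point of $Z$ with the correct relation and keeping the complex conjugations and the $z_n^2/q_0^2$ prefactors consistent when the third relation is composed with the first. Because a given point can be reached from its seed by several distinct words in the three involutions, I would verify that these redundant derivations all deliver the same value of $A$—that is, that the three relations are mutually compatible—rather than trusting a single path. This compatibility is automatic since the relations descend from the genuine involutions of Proposition~\ref{Sduichen}, but I would spot-check it at one or two points to guard against a sign slip.
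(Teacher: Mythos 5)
Your proposal is correct and is essentially the argument the paper intends: the corollary is stated there without explicit proof as an immediate consequence of the preceding proposition, and your systematic composition of the three relations $A[z_0]=-A[z_0^*]^*$, $A[z_0]=-A[-z_0^*]^*$, $A[z_0]=\tfrac{z_0^2}{q_0^2}A\bigl[-\tfrac{q_0^2}{z_0}\bigr]$ along the orbit of $z_n$ (resp.\ $iw_n$), including the self-relation $A[iw_n]=-A[iw_n]^*$ that yields $\mathrm{Re}\,A[iw_n]=0$, reproduces every equality with the correct signs and prefactors.
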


\subsubsection{Asymptotic behaviors}

In order to propose and solve the Riemann-Hilbert problem properly, one needs to determine the asymptotic behaviors of the modified Jost solutions and  scattering data both as $z\to\infty$ and as $z\to 0$. Consider the Neumann series used in Proposition \ref{jiexi-m1},
\begin{align}
\mu_{\pm}(x, t; z)=\sum_{n=0}^\infty \mu_{\pm}^{[n]}(x, t; z)
\end{align}
with
\begin{align} \no
\mu_{\pm}^{[0]}(x, t; z)=E_{\pm}(z),\quad
\mu_{\pm}^{[n+1]}(x, t; z)=\int_{\pm\infty}^xE_{\pm}(z)\,\mathrm{e}^{i\lambda(z)(x-y)\widehat\sigma_3}\left[E_{\pm}^{-1}(z)\Delta Q_{\pm}(y, t)\,\mu_{\pm}^{[n]}(y, t; z)\right]\,\mathrm{d}y.
\end{align}
From Refs.~\cite{Bleistein1986, Biondini2014}, one can derive
\begin{align}
\begin{aligned}
&\mu_{\pm}^{[n+1], d}(x, t; z)=\frac{1}{1+\left(q_0/z\right)^2}\Bigg[\int_{\pm\infty}^x\left(\Delta Q_{\pm}(y, t)\,\mu_{\pm}^{[n], o}(y, t; z)-\frac{i\sigma_3Q_{\pm}(y, t)}{z}\,\Delta Q_{\pm}(y, t)\,\mu_{\pm}^{[n], d}(y, t; z)\right)\,\mathrm{d}y\\
&\qquad +\frac{i\sigma_3\,Q_{\pm}}{z}\int_{\pm\infty}^x\mathrm{e}^{i\lambda(x-y)\widehat\sigma_3}\left(\Delta Q_{\pm}(y, t)\,\mu_{\pm}^{[n], d}(y, t; z)-\frac{i\sigma_3\,Q_{\pm}(y, t)}{z}\,\Delta Q_{\pm}(y, t)\,\mu_{\pm}^{[n], o}(y, t; z)\right)\,\mathrm{d}y\Bigg]\\[0.05in]
&=\left\{
\begin{aligned}
&O\left(\mu_{\pm}^{[n], o}(x, t; z)\right)+O\left(\frac{\mu_{\pm}^{[n], d}(x, t; z)}{z}\right)+O\left(\frac{\mu_{\pm}^{[n], d}(x, t; z)}{z^2}\right)+O\left(\frac{\mu_{\pm}^{[n], o}(x, t; z)}{z^3}\right),&& z\to\infty,  \\[0.05in]
&O\left(z^2\mu_{\pm}^{[n], o}(x, t; z)\right)+O\left(z\mu_{\pm}^{[n], d}(x, t; z)\right)+O\left(z^2\mu_{\pm}^{[n], d}(x, t; z)\right)+O\left(z\mu_{\pm}^{[n], o}(x, t; z)\right),&& z\to 0,
\end{aligned}\right.
\end{aligned}
\end{align}
\begin{align}
\begin{aligned}
&\mu_{\pm}^{[n+1], o}(x, t; z)=\frac{1}{1+\left(q_0/z\right)^2}\Bigg[\frac{i\sigma_3\,Q_{\pm}}{z}\int_{\pm\infty}^x\left(\Delta Q_{\pm}(y, t)\,\mu_{\pm}^{[n], o}(y, t; z)-\frac{i\sigma_3\,Q_{\pm}(y, t)}{z}\,\Delta Q_{\pm}(y, t)\,\mu_{\pm}^{[n], d}(y, t; z)\right)\,\mathrm{d}y\\[0.05in]
&\qquad+\int_{\pm\infty}^x\mathrm{e}^{i\lambda(x-y)\widehat\sigma_3}\left(\Delta Q_{\pm}(y, t)\,\mu_{\pm}^{[n], d}(y, t; z)-\frac{i\sigma_3\,Q_{\pm}(y, t)}{z}\Delta Q_{\pm}(y, t)\,\mu_{\pm}^{[n], o}(y, t; z)\right)\,\mathrm{d}y\Bigg]\\[0.05in]
&=\left\{
\begin{aligned}
&O\left(\frac{\mu_{\pm}^{[n], o}(x, t; z)}{z}\right)+O\left(\frac{\mu_{\pm}^{[n], d}(x, t; z)}{z^2}\right)+O\left(\frac{\mu_{\pm}^{[n], d}(x, t; z)}{z}\right)+O\left(\frac{\mu_{\pm}^{[n], o}(x, t; z)}{z^2}\right),&& z\to\infty, \\[0.05in]
&O\left(z\mu_{\pm}^{[n], o}(x, t; z)\right)+O\left(\mu_{\pm}^{[n], d}(x, t; z)\right)+O\left(z^3\mu_{\pm}^{[n], d}(x, t; z)\right)+O\left(z^2\mu_{\pm}^{[n], o}(x, t; z)\right),&&z\to 0,
\end{aligned}\right.
\end{aligned}
\end{align}
where $\mu_{\pm}^{[n], d}$ and $\mu_{\pm}^{[n], o}$ stand for the diagonal and off-diagonal parts of $\mu_{\pm}^{[n]}$, respectively. Recall that the individual columns of $\mu_{\pm}(x, t; z)$ are analytic in different regions of the complex $z$-plane ($\mu_{+1}(x, t; z),\, \mu_{-2}(x, t; z)$ in $D_+$, and $\mu_{+2}(x, t; z), \mu_{-1}(x, t; z)$ in $D_-$). Note that although the asymptotic relations in this subsection are represented in terms of $\mu_{\pm}(x, t; z)$ rather than column-wise, however, they are to be understood as taken in the appropriate region for each column. By the induction with
\begin{align}
\begin{aligned}
\mu_{\pm}^{[0], d}(x, t; z)&=O\left(1\right),  \quad\mu_{\pm}^{[0], o}(x, t; z)=O\left(\frac{1}{z}\right), \quad z\to\infty, \\[0.05in]
\mu_{\pm}^{[0], d}(x, t; z)&=O\left(1\right),  \quad\mu_{\pm}^{[0], o}(x, t; z)=O\left(\frac{1}{z}\right), \quad z\to 0,
\end{aligned}
\end{align}
one can obtain for $\forall m\in\mathbb{N}$
\begin{align}
\begin{alignedat}{5}
\mu_{\pm}^{[2m], d}&=O\left(\frac{1}{z^m}\right), \,\,\,&\mu_{\pm}^{[2m], o}&=O\left(\frac{1}{z^{m+1}}\right), \,\,\,&\mu_{\pm}^{[2m+1], d}&=O\left(\frac{1}{z^{m+1}}\right),  \,\,\,&\mu_{\pm}^{[2m+1], o}&=O\left(\frac{1}{z^{m+1}}\right), &\,\,\,&z\to\infty, \\[0.05in]
\mu_{\pm}^{[2m], d}&=O\left(z^m\right), &\mu_{\pm}^{[2m], o}&=O\left(z^{m-1}\right), &\mu_{\pm}^{[2m+1], d}&=O\left(z^{m}\right), &\mu_{\pm}^{[2m+1], o}&=O\left(z^{m}\right), &&z\to 0.
\end{alignedat}
\end{align}
\begin{proposition}\label{Jzjianjin}
The asymptotics for the modified Jost solutions are found as
\bee\label{Jzjianjin-1}
\mu_{\pm}(x, t; z)=\left\{\begin{array}{ll} I+O\left(\dfrac{1}{z}\right),&\quad z\to\infty,\v \\
                         \dfrac{i}{z}\,\sigma_3\,Q_{\pm}+O\left(1\right),&\quad z\to 0.
                         \end{array}\right.
\ene
\end{proposition}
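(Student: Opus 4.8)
The plan is to read off both asymptotics directly from the Neumann series $\mu_{\pm}(x,t;z)=\sum_{n=0}^{\infty}\mu_{\pm}^{[n]}(x,t;z)$, whose uniform convergence is guaranteed by Proposition \ref{jiexi-m1}, by summing the order estimates for the diagonal parts $\mu_{\pm}^{[n],d}$ and off-diagonal parts $\mu_{\pm}^{[n],o}$ that were just established inductively. The key preliminary observation is that the only term carrying the leading behaviour is $\mu_{\pm}^{[0]}=E_{\pm}(z)=I+\frac{iq_{\pm}}{z}\sigma_1$; since $\sigma_3 Q_{\pm}=q_{\pm}\sigma_1$, its off-diagonal part equals exactly $\frac{i}{z}\sigma_3 Q_{\pm}$, and this is precisely the contribution that survives the $z\to0$ limit.

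First I would treat $z\to\infty$. Among the tabulated estimates the unique $O(1)$ contribution is the diagonal part $\mu_{\pm}^{[0],d}=I$, while $\mu_{\pm}^{[0],o}=O(1/z)$ and every higher term satisfies $\mu_{\pm}^{[2m],d}=O(z^{-m})$, $\mu_{\pm}^{[2m],o}=O(z^{-m-1})$, $\mu_{\pm}^{[2m+1],d}=\mu_{\pm}^{[2m+1],o}=O(z^{-m-1})$, each of which is $O(1/z)$ once $n\ge1$. Summing over $n$ then yields $\mu_{\pm}(x,t;z)=I+O(1/z)$. Next I would treat $z\to0$: here the estimates give $\mu_{\pm}^{[0],o}=\frac{iq_{\pm}}{z}\sigma_1=\frac{i}{z}\sigma_3 Q_{\pm}$ as the unique $O(1/z)$ singular contribution, whereas $\mu_{\pm}^{[0],d}=O(1)$, $\mu_{\pm}^{[2m],d}=O(z^m)$, $\mu_{\pm}^{[2m],o}=O(z^{m-1})$, and $\mu_{\pm}^{[2m+1],d}=\mu_{\pm}^{[2m+1],o}=O(z^m)$, all of which are $O(1)$ (the borderline case being $\mu_{\pm}^{[2],o}=O(1)$ at $m=1$). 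Hence $\mu_{\pm}(x,t;z)=\frac{i}{z}\sigma_3 Q_{\pm}+O(1)$.

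The main obstacle is not the bookkeeping but justifying that one may sum the individual $O$-estimates term by term to obtain an asymptotic expansion of the whole series, i.e. that the implied constants are summable. I would control this by tracking those constants: because the Neumann-series bounds decay geometrically in $n$ (the very mechanism underlying the uniform convergence in Proposition \ref{jiexi-m1}, together with the term-by-term integral manipulations licensed by Lemma \ref{isjie}), the series $\sum_n\|\mu_{\pm}^{[n],d}\|$ and $\sum_n\|\mu_{\pm}^{[n],o}\|$ converge uniformly on a neighbourhood of $z=\infty$ (resp.\ $z=0$) in the relevant analyticity domain. This reduces the statement to the elementary fact that a uniformly convergent sum of functions, each $O(z^{-1})$ (resp.\ $O(1)$) except for one isolated leading term, is itself of that same order. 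Finally, I would record the caveat already noted before the proposition: although the estimates are written for $\mu_{\pm}$, they are to be read column-wise in the correct region, with $\mu_{+1},\mu_{-2}$ analytic in $D_+$ and $\mu_{+2},\mu_{-1}$ in $D_-$, so the limits are taken within each column's domain of analyticity.
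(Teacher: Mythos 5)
Your proposal is correct and follows essentially the same route as the paper: the paper's ``proof'' is precisely the Neumann-series computation displayed immediately before the proposition, with the leading terms $I$ (as $z\to\infty$) and $\frac{iq_{\pm}}{z}\sigma_1=\frac{i}{z}\sigma_3 Q_{\pm}$ (as $z\to 0$) coming from $\mu_{\pm}^{[0]}=E_{\pm}(z)$ and all higher iterates absorbed into the error term via the inductive order estimates. Your additional remark on the summability of the implied constants (via the geometric decay underlying the uniform convergence in Proposition \ref{jiexi-m1}) addresses a point the paper leaves implicit, but it does not change the method.
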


\begin{corollary}\label{Szjianjin}
The asymptotic behaviors for the scattering matrix are given by
\begin{alignat}{2}
S(z)&=I+O\left(\frac{1}{z}\right),  \quad &&z\to\infty, \label{ien}\\
S(z)&=\frac{q_+}{q_-}\,I+O\left(z\right),\quad &&z\to 0. \label{ien-1}
\end{alignat}
\end{corollary}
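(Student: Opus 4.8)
The plan is to read off both asymptotics directly from the scattering relation \eqref{Jostchuandi}, rewritten in terms of the modified Jost solutions whose behavior as $z\to\infty$ and $z\to0$ is already recorded in Proposition \ref{Jzjianjin}. Substituting $\varPhi_\pm=\mu_\pm\,e^{i\theta\sigma_3}$ into \eqref{Jostchuandi} and solving for $S$ gives
\[
S(z)=e^{-i\theta\sigma_3}\,G(x,t;z)\,e^{i\theta\sigma_3},\qquad G(x,t;z):=\mu_-^{-1}(x,t;z)\,\mu_+(x,t;z),
\]
where $G$ is well defined near $z=\infty$ and near $z=0$ because $\det\mu_-=\det E_-=\gamma_f(z)=1+q_0^2/z^2$ is nonzero there. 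Entrywise this reads $s_{11}=G_{11}$, $s_{22}=G_{22}$, $s_{12}=e^{-2i\theta}G_{12}$ and $s_{21}=e^{2i\theta}G_{21}$. Since the asymptotics are taken along $\Sigma$, where $\lambda(z)\in\mathbb{R}$ and hence $\theta$ is real with $|e^{\pm2i\theta}|=1$, the oscillatory factors never affect the orders, so it suffices to control $G$.

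First I would treat $z\to\infty$. By Proposition \ref{Jzjianjin}, $\mu_\pm=I+O(1/z)$, so $\mu_-^{-1}=I+O(1/z)$ and therefore $G=I+O(1/z)$. This immediately yields $s_{11},s_{22}=1+O(1/z)$ and $s_{12},s_{21}=O(1/z)$, i.e. \eqref{ien}. For the diagonal entries the same conclusion can be reached independently from the Wronskian representation \eqref{S-lie}: the exponential factors cancel in $\mathrm{Wr}(\varPhi_{+1},\varPhi_{-2})=\mathrm{Wr}(\mu_{+1},\mu_{-2})$, and inserting $\mu_{+1}\to(1,0)^T$, $\mu_{-2}\to(0,1)^T$ together with $\gamma_f\to1$ gives $s_{11}\to1$ and likewise $s_{22}\to1$; I would keep this as a consistency check.

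The delicate case is $z\to0$, where the one genuine computation lies. Proposition \ref{Jzjianjin} gives $\mu_\pm=\frac{i}{z}\sigma_3Q_\pm+O(1)=\frac{iq_\pm}{z}\sigma_1+O(1)$, so both factors of $G$ are singular like $1/z$. Writing $\mu_\pm=\frac1z(M_\pm+zN_\pm+\cdots)$ with $M_\pm=iq_\pm\sigma_1$ and $N_\pm=O(1)$, the prefactors $1/z$ cancel in the product and
\[
\lim_{z\to0}G=M_-^{-1}M_+=(iq_-\sigma_1)^{-1}(iq_+\sigma_1)=\frac{q_+}{q_-}\,\sigma_1^2=\frac{q_+}{q_-}\,I,
\]
with remainder $O(z)$; here $\det M_\pm=q_\pm^2=q_0^2\neq0$ makes $M_\pm$ invertible and matches $\det\mu_\pm=\gamma_f\sim q_0^2/z^2$. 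Hence $s_{11},s_{22}=q_+/q_-+O(z)$, and using $|e^{\pm2i\theta}|=1$ on $\Sigma$ also $s_{12},s_{21}=O(z)$, which is \eqref{ien-1}.

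The main obstacle is exactly this cancellation of the leading $1/z$ singularities in $\mu_-^{-1}\mu_+$ at $z=0$: one must track the two leading Laurent coefficients of each factor rather than only the leading one, and verify that $M_-^{-1}M_+$ is finite and equals $(q_+/q_-)I$. Everything else (the $z\to\infty$ limit, the cancellation of exponentials in the Wronskians, and the modulus-one bound on $e^{\pm2i\theta}$ along $\Sigma$) is routine. As a final internal check I would invoke the third scattering symmetry \eqref{Sduichen-3}, $s_{11}(z)=\tfrac{q_+}{q_-}\,s_{22}(-q_0^2/z)$, which maps $z\to0$ to $-q_0^2/z\to\infty$ and reproduces $s_{11}(z)\to q_+/q_-$ from $s_{22}\to1$, confirming consistency of the two limits.
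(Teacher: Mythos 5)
Your proposal is correct and is essentially the paper's argument: the paper reads the asymptotics off the Wronskian representations \eqref{S-lie} combined with Proposition \ref{Jzjianjin}, which is exactly your $G=\mu_-^{-1}\mu_+$ written entrywise via Cramer's rule (with $\det\mu_-=\gamma_f(z)$ supplying the $z^2/(z^2+q_0^2)$ factor that effects your cancellation of the $1/z$ singularities at $z=0$). Your observation that $|e^{\pm 2i\theta}|=1$ on $\Sigma$ for the off-diagonal entries, and the symmetry check via \eqref{Sduichen-3}, are consistent with (and slightly more explicit than) what the paper records.
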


\begin{proof}
From Proposition \ref{Jzjianjin} and Eq.~(\ref{S-lie}), one can yield that
\begin{align}\no
\begin{aligned}
s_{11}(z)&=\frac{Wr\left(\varPhi_{+1}(x, t; z), \varPhi_{-2}(x, t; z)\right)}{\gamma(z)}=\frac{Wr\left(\mu_{+1}(x, t; z), \mu_{-2}(x, t; z)\right)}{1+q_0^2/z^2}\\[0.1in]
&=\left\{
\begin{aligned}
&\frac{\mathrm{det}
\begin{pmatrix}
1+O(1/z)&O(1/z)\\[0.05in]
O(1/z)&1+O(1/z)
\end{pmatrix}}
{1+O(1/z^2)}
=1+O\left(\frac{1}{z}\right),\quad z\to\infty,\\[0.1in]
&\frac{\mathrm{det}
\begin{pmatrix}
O(1)& (i/z)q_-\\[0.05in]
(i/z)q_+&O(1)
\end{pmatrix}}
{q_0^2+O(z^2)}\,z^2=\frac{q_+}{q_-}+O(z),\quad z\to 0.
\end{aligned}\right.
\end{aligned}
\end{align}
The asymptotic behaviors for $s_{22}(z)$, $s_{12}(z)$ and $s_{21}(z)$ can also be obtained similarly. Here we omit them.
\end{proof}

\subsection{Inverse scattering problem with NZBCs}

\subsubsection{Generalized matrix Riemann-Hilbert problem}

To formulate the inverse problem as a generalized matrix Riemann-Hilbert problem (RHP), one needs to  pose a relation along $\Sigma$ based on rearranging the terms in Eq.~(\ref{Jostchuandi}). From their asymptotic behaviors and the Plemelj's formulae, the solutions for the Riemann-Hilbert problem can be proposed. Explicitly, we present the following proposition.
\begin{proposition}
Define the sectionally meromorphic matrices
\begin{align}\label{RHP-M}
M(x, t; z)=\left\{
\begin{aligned}
M^+(x, t; z)=\left(\frac{\mu_{+1}(x, t; z)}{s_{11}(z)},\, \mu_{-2}(x, t; z)\right),\quad z\in D_+, \\[0.05in]
 M^-(x, t; z)=\left(\mu_{-1}(x, t; z),\, \frac{\mu_{+2}(x, t; z)}{s_{22}(z)}\right), \quad z\in D_-.
\end{aligned}\right.
\end{align}
Then the multiplicative matrix Riemann-Hilbert problem is proposed as follows:
\begin{itemize}
\item Analyticity: $M(x, t; z)$ is analytic in $\left(D_+\cup D_-\right)\backslash Z$ and has simple poles in $Z$.
\item Jump condition:
\begin{align}\label{RHP-Jump}
M^-(x, t; z)=M^+(x, t; z)\left(I-J(x, t; z)\right), \quad z\in\Sigma,
\end{align}
where the partial jump matrix is defined by
\begin{align} \no
J(x, t; z)=\mathrm{e}^{i\theta(x, t; z)\widehat\sigma_3}
\begin{bmatrix}
0&-\tilde\rho(z)\\[0.05in]
\rho(z)&\rho(z)\,\tilde\rho(z)
\end{bmatrix}.
\end{align}
\item Asymptotic behavior:
\bee\label{RHP-Asy}
M^{\pm}(x, t; z)=\left\{\begin{array}{ll} 
 I+O\left(\dfrac{1}{z}\right), & z\to\infty, \v \\
 \dfrac{i}{z}\,\sigma_3\,Q_-+O\left(1\right), & z\to 0.
 \end{array}\right.
\ene
\end{itemize}
\end{proposition}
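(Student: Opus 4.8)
The plan is to verify the three asserted properties---meromorphicity with simple poles in $Z$, the jump relation across $\Sigma$, and the two normalizations as $z\to\infty$ and $z\to0$---directly from the definition (\ref{RHP-M}), using only the analyticity of the modified Jost solutions (Proposition \ref{jiexi-m1}), the analyticity of $s_{11},s_{22}$ (Proposition \ref{jiexiS}), the scattering relation (\ref{Jostchuandi}), the definitions of $\rho,\tilde\rho$ in (\ref{fanshe}), and the asymptotics already recorded in Proposition \ref{Jzjianjin} and Corollary \ref{Szjianjin}. No genuinely new analytic input is needed; the content is an assembly of facts already in hand.

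For analyticity I would observe that the four columns collected in $M^{\pm}$ are exactly the Jost columns analytic in the relevant half-domain: $\mu_{+1},\mu_{-2}$ in $D_+$ and $\mu_{-1},\mu_{+2}$ in $D_-$. Dividing $\mu_{+1}$ by $s_{11}$ (analytic in $D_+$) and $\mu_{+2}$ by $s_{22}$ (analytic in $D_-$) yields meromorphic matrices whose only possible singularities are the zeros of $s_{11}$ in $D_+$ and of $s_{22}$ in $D_-$. By the defining property of the discrete spectrum these zeros are precisely the points of $Z$ lying in $D_\pm$; since they are assumed simple and (under the no-spectral-singularity hypothesis) disjoint from $\Sigma$, the matrix $M$ is analytic on $(D_+\cup D_-)\backslash Z$ with simple poles on $Z$, and continuous up to $\Sigma$.

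For the jump condition I would pass from (\ref{Jostchuandi}) to the modified Jost solutions: substituting $\varPhi_{\pm}=\mu_{\pm}\mathrm{e}^{i\theta\sigma_3}$ gives $\mu_+=\mu_-\mathrm{e}^{i\theta\widehat\sigma_3}S$, whose columns read $\mu_{+1}=s_{11}\mu_{-1}+\mathrm{e}^{-2i\theta}s_{21}\mu_{-2}$ and $\mu_{+2}=\mathrm{e}^{2i\theta}s_{12}\mu_{-1}+s_{22}\mu_{-2}$. Dividing the first by $s_{11}$ and the second by $s_{22}$ and inserting $\rho,\tilde\rho$ from (\ref{fanshe}) yields $\mu_{-1}=\mu_{+1}/s_{11}-\mathrm{e}^{-2i\theta}\rho\,\mu_{-2}$ and $\mu_{+2}/s_{22}=\mathrm{e}^{2i\theta}\tilde\rho\,(\mu_{+1}/s_{11})+(1-\rho\tilde\rho)\mu_{-2}$. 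Reading these two identities as the columns of $M^-$ expressed through the columns $\mu_{+1}/s_{11}$ and $\mu_{-2}$ of $M^+$ gives $M^-=M^+(I-J)$ with exactly the stated $J$. This is the one place with real (though routine) algebra, and the appearance of $\rho\tilde\rho$ in the $(2,2)$ entry is the natural internal consistency check.

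For the asymptotics the case $z\to\infty$ is immediate: each Jost column tends to the corresponding column of $I$ while $s_{11},s_{22}\to1$, so $M^{\pm}=I+O(1/z)$. The delicate point---and the step I expect to require the most care---is the $z\to0$ normalization, which must come out in terms of $Q_-$ for \emph{both} sectors even though the column $\mu_{+1}$ carries $Q_+$ in its leading term $(i/z)\sigma_3Q_+$. Here one must invoke the limit $s_{11}\to q_+/q_-$ from Corollary \ref{Szjianjin} to convert the $q_+$ in $\mu_{+1}/s_{11}$ into $q_-$, and likewise $s_{22}\to q_+/q_-$ for the $\mu_{+2}/s_{22}$ column; combining these with $\mu_{-2}=(i/z)\sigma_3Q_-+O(1)$ and its $D_-$ analogue collapses both $M^+$ and $M^-$ to the common leading behavior $(i/z)\sigma_3Q_-+O(1)$, establishing (\ref{RHP-Asy}). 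This matching of the two sectors to a single $Q_-$-normalization is the subtlety I would flag as the main obstacle.
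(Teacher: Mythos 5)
Your proposal is correct and follows exactly the route the paper intends (the paper itself omits the proof, saying only that the jump follows from rearranging Eq.~(\ref{Jostchuandi}) and the normalizations from the known asymptotics): your column-wise algebra reproduces the stated $J$, and your use of $s_{11},s_{22}\to q_+/q_-$ as $z\to 0$ to convert the $Q_+$ in $\mu_{+1}/s_{11}$ and $\mu_{+2}/s_{22}$ into the common $Q_-$ normalization is precisely the one nontrivial step. Nothing is missing.
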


To solve  the above-mentioned Riemann-Hilbert problem conveniently, we introduce $\widehat\eta_n=-\frac{q_0^2}{\eta_n}$ with
\begin{align}
\eta_n=\left\{
\begin{aligned}
&z_n, & n&=1, 2, \cdots, N_1,\\[0.05in]
-{}&z^*_{n-N_1}, &n&=N_1+1, N_1+2, \cdots, 2N_1,\\[0.05in]
-{}&\frac{q_0^2}{z^*_{n-2N_1}}, &n&=2N_1+1, 2N_1+2, \cdots, 3N_1,\\[0.05in]
&\frac{q_0^2}{z_{n-3N_1}},&n&=3N_1+1,3N_1+2, \cdots, 4N_1,\\[0.05in]
&iw_{n-4N_1},& n&=4N_1+1, 4N_1+2, \cdots, 4N_1+N_2,\\[0.05in]
-{}&\frac{iq_0^2}{w_{n-4N_1-N2}},&n&=4N_1+N_2+1, 4N_1+N_2+2, \cdots, 4N_1+2N_2.
\end{aligned}\right.
\end{align}

\begin{theorem}
The solution of the Riemann-Hilbert problem given by Eqs.~(\ref{RHP-M}, \ref{RHP-Jump}, \ref{RHP-Asy}) is given by
\begin{align}\label{RHP-jie}
\begin{aligned}
M(x, t; z)=I+\frac{i}{z}\,\sigma_3\,Q_-&+\sum_{n=1}^{4N_1+2N_2}\left[\frac{\mathop\mathrm{Res}\limits_{z=\eta_n}M^+(x, t; z)}{z-\eta_n}+\frac{\mathop\mathrm{Res}\limits_{z=\widehat\eta_n}M^-(x, t; z)}{z-\widehat\eta_n}\right]\\[0.05in]
&+\frac{1}{2\pi i}\int_\Sigma\frac{M^+(x, t; \zeta)\,J(x, t; \zeta)}{\zeta-z}\,\mathrm{d}\zeta,\quad z\in\mathbb{C}\backslash\Sigma,
\end{aligned}
\end{align}
where $\int_\Sigma$ denotes the integral along the oriented contour shown in Fig. \ref{f}(right).
\end{theorem}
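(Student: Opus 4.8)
The plan is to solve the meromorphic RHP of Eqs.~(\ref{RHP-M})--(\ref{RHP-Asy}) by the standard regularization-plus-Cauchy-projector technique: subtract from $M$ every singular contribution so that the remainder is analytic off $\Sigma$ and decaying, and then recover the remainder from the Plemelj--Sokhotski formula. Concretely, I would introduce the regularized matrix
\begin{align*}
G(x, t; z)=M(x, t; z)-I-\frac{i}{z}\,\sigma_3\,Q_--\sum_{n=1}^{4N_1+2N_2}\left[\frac{\mathop\mathrm{Res}\limits_{z=\eta_n}M^+(x, t; z)}{z-\eta_n}+\frac{\mathop\mathrm{Res}\limits_{z=\widehat\eta_n}M^-(x, t; z)}{z-\widehat\eta_n}\right].
\end{align*}
Here the constant $I$ removes the value of $M$ at $z=\infty$, the term $\frac{i}{z}\sigma_3 Q_-$ removes the principal part of $M$ at $z=0$ dictated by the second line of Eq.~(\ref{RHP-Asy}), and the two residue sums remove the simple poles of $M^+$ at the $\eta_n\in D_+$ (zeros of $s_{11}$) and of $M^-$ at the $\widehat\eta_n=-q_0^2/\eta_n\in D_-$ (zeros of $s_{22}$, by the third symmetry of Proposition~\ref{Sduichen}). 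The first step is to verify that after these subtractions $G(x, t; z)$ is analytic throughout $(D_+\cup D_-)\setminus\Sigma$, i.e. that the discrete spectrum $Z$ consists exactly of the points $\{\eta_n\}\cup\{\widehat\eta_n\}$ and that all the poles there are simple; this is furnished by the discrete-spectrum count (\ref{lisanpu}) and the residue conditions (\ref{Jie-liushu}).

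Next I would compute the jump of $G$ across $\Sigma$. Since each subtracted term is a rational matrix whose only poles lie at $z=0$ and at the points of $Z$, none of which meet $\Sigma$, every subtracted term is continuous across $\Sigma$ and contributes nothing to the jump. Rearranging the jump condition (\ref{RHP-Jump}) as $M^+-M^-=M^+J$ then gives $G^+(x, t; z)-G^-(x, t; z)=M^+(x, t; z)\,J(x, t; z)$ for $z\in\Sigma$. Combined with the asymptotics (\ref{RHP-Asy}), the matrix $G$ is analytic off $\Sigma$, vanishes as $z\to\infty$, and is bounded as $z\to 0$; hence it is represented by the Cauchy integral of its own jump, and the Plemelj--Sokhotski formula yields
\begin{align*}
G(x, t; z)=\frac{1}{2\pi i}\int_\Sigma\frac{M^+(x, t; \zeta)\,J(x, t; \zeta)}{\zeta-z}\,\mathrm{d}\zeta,\quad z\in\mathbb{C}\setminus\Sigma .
\end{align*}
Substituting back the definition of $G$ reproduces formula (\ref{RHP-jie}), with the orientation of $\Sigma$ fixed as in Fig.~\ref{f}(right) so that the boundary values $M^\pm$ are taken from the correct side.

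To make the Cauchy-projector step rigorous I would check integrability of $M^+J$ on the full oriented contour $\Sigma=(\mathbb{R}\setminus\{0\})\cup C_0$ and its decay at the endpoints. Restricting to potentials without spectral singularities guarantees $s_{11},s_{22}\neq0$ on $\Sigma$, so $\rho,\tilde\rho$ and hence $J$ are well defined and continuous there; decay as $z\to\pm\infty$ follows from $J=O(1/z)$ together with the oscillatory factor $\mathrm{e}^{i\theta\widehat\sigma_3}$, while the behavior near $z=0$ is controlled by Corollary~\ref{Szjianjin} and the third symmetry of the reflection coefficients. One must also treat the self-intersection points $z=\pm q_0$ where $C_0$ meets $\mathbb{R}$, verifying Hölder continuity of the boundary data so that the sectionally analytic representation theorem applies.

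The main obstacle I anticipate is precisely the point $z=0$. Because $0$ is an accumulation point of the continuous spectrum $\mathbb{R}\setminus\{0\}\subset\Sigma$ rather than an isolated pole surrounded by analyticity, it cannot be excised by a small loop; instead one must show that the single subtraction $\frac{i}{z}\sigma_3 Q_-$ cancels the $z\to0$ singularity of $M$ uniformly from both $D_+$ and $D_-$, and that the resulting Cauchy integral is consistent with the prescribed $z\to0$ asymptotics of Eq.~(\ref{RHP-Asy}). The delicate point is identifying the coefficient: using Proposition~\ref{Jzjianjin} together with the limit $s_{11}\to q_+/q_-$ of Corollary~\ref{Szjianjin}, the factor $q_-/q_+$ converts the coefficient $\sigma_3 Q_+$ carried by $\mu_{+1}/s_{11}$ into $\sigma_3 Q_-$, confirming that $\sigma_3 Q_-$ (and not $\sigma_3 Q_+$) is the common principal part of both $M^+$ and $M^-$ at the origin. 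A secondary consistency check is that the residue subtractions respect the three residue relations, so that the representation (\ref{RHP-jie}) is well posed rather than overdetermined.
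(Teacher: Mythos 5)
Your proposal is correct and follows essentially the same route as the paper: subtract the $z\to\infty$ and $z\to 0$ asymptotics together with the simple-pole contributions at $\{\eta_n\}\cup\{\widehat\eta_n\}$, observe that the regularized jump is $M^+J$ with the required $O(1/z)$ decay at infinity and $O(z)$ behavior at the origin, and recover the remainder via the Cauchy projectors and the Plemelj formulae. Your additional remarks on integrability of $M^+J$ along $\Sigma$, the self-intersection points $z=\pm q_0$, and the identification of $\sigma_3 Q_-$ as the common principal part at $z=0$ are sensible refinements of the same argument rather than a different method.
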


\begin{proof}
By subtracting out the asymptotic behaviors and  pole contributions, one can regularize the jump condition (\ref{RHP-Jump}) as
\begin{align}\label{jumpbianxing}
\begin{aligned}
&M^-(x, t; z)-I-\frac{i}{z}\,\sigma_3\,Q_--\sum_{n=1}^{4N_1+2N_2}\left[\frac{\mathop\mathrm{Res}\limits_{z=\eta_n}M^+(z)}{z-\eta_n}+\frac{\mathop\mathrm{Res}\limits_{z=\widehat\eta_n}M^-(z)}{z-\widehat\eta_n}\right]\\[0.05in]
&\quad =M^+(x, t; z)
-I-\frac{i}{z}\,\sigma_3\,Q_--\sum_{n=1}^{4N_1+2N_2}\left[\frac{\mathop\mathrm{Res}\limits_{z=\eta_n}M^+(z)}{z-\eta_n}+\frac{\mathop\mathrm{Res}\limits_{z=\widehat\eta_n}M^-(z)}{z-\widehat\eta_n}\right]
-M^+(x, t; z)\,J(x, t; z).
\end{aligned}
\end{align}

The left-hand side of Eq.~(\ref{jumpbianxing}) is analytic in $D_-$, and the right-hand side  Eq.~(\ref{jumpbianxing}) except for the last term $M^+(x, t; z)\,J(x, t; z)$, is analytic in $D_+$. Both of their asymptotics are $O\left(\frac{1}{z}\right)$ as $z\to\infty$ and $O(1)$ as $z\to 0$. From Corollary \ref{Szjianjin}, $J(x, t; z)$ is $O\left(\frac{1}{z}\right)$  as $z\to\infty$, and $O(z)$ as $z\to 0$. Hence, the Cauchy projectors $P_{\pm}$ over $\Sigma$ can be well-defined as follows:
\begin{align}
P_{\pm}\left[f\right](z)=\frac{1}{2\pi i}\int_\Sigma\frac{f(\zeta)}{\zeta-(z\pm i0)}\,\mathrm{d}\zeta,
\end{align}
where the notation $z\pm i0$ represents the limit taken from the left/right of $z$.  Applying the Cauchy projectors to Eq.~(\ref{jumpbianxing}) and using the Plemelj's formulae, one derives the solution (\ref{RHP-jie}) of the Riemann-Hilbert problem (\ref{RHP-M}, \ref{RHP-Jump}, \ref{RHP-Asy}).
\end{proof}

\subsubsection{Reconstruction formula for the potential}

To present a closed algebraic-integral system of equations for the solution of the Riemann-Hilbert problem (\ref{RHP-jie}), one needs to determine the expressions for residue conditions in Eq.~(\ref{RHP-jie}). Eqs.~(\ref{bianjie}) and (\ref{Jie-liushu}) imply that
\begin{align}
\begin{aligned}
\mathop\mathrm{Res}\limits_{z=\eta_n}\left[\frac{\mu_{+1}(x, t; z)}{s_{11}(z)}\right]&=A[\eta_n]\,\mu_{-2}(x, t; \eta_n)\,\mathrm{e}^{-2i\theta(x, t; \eta_n)}, \v\\
\mathop\mathrm{Res}\limits_{z=\widehat\eta_n}\left[\frac{\mu_{+2}(x, t; z)}{s_{22}(z)}\right]&=A[\widehat\eta_n]\,\mu_{-1}(x, t; \widehat\eta_n)\,\mathrm{e}^{2i\theta(x, t; \widehat\eta_n)}.
\end{aligned}
\end{align}
From the definition of the $M(x, t; z)$ in Eq.~(\ref{RHP-M}), one yields that only the first column has a simple pole at $z=\eta_n$ and only the second column has a simple pole at $z=\widehat\eta_n$. Then the residue parts in Eq.~(\ref{RHP-jie}) are calculated as
\begin{align}\label{liushuhe}
\frac{\mathop\mathrm{Res}\limits_{z=\eta_n}M^+(x, t; z)}{z-\eta_n}+\frac{\mathop\mathrm{Res}\limits_{z=\widehat\eta_n}M^-(x, t; z)}{z-\widehat\eta_n}=\left[C_n(z)\,\mu_{-2}(x, t; \eta_n), \widehat C_n(z)\,\mu_{-1}(x, t; \widehat\eta_n)\right],
\end{align}
where
\begin{align} \no
C_n(z)=\frac{A[\eta_n]\,\mathrm{e}^{-2i\theta(x, t; \eta_n)}}{z-\eta_n}, \quad \widehat C_n(z)=\frac{A[\widehat\eta_n]\,\mathrm{e}^{2i\theta(x, t; \widehat\eta_n)}}{z-\widehat\eta_n}.
\end{align}
Combining Eq.~(\ref{RHP-jie}) and (\ref{liushuhe}), one yields
\begin{align}\label{kuaile}
\mu_{-2}(x, t; z)=
\begin{bmatrix}
\frac{iq_-}{z}\\[0.05in]1
\end{bmatrix}
+\sum_{n=1}^{4N_1+2N_2}\widehat C_n(z)\,\mu_{-1}(x, t; \widehat\eta_n)+\frac{1}{2\pi i}\int_\Sigma\frac{\left(M^+J\right)_2(x, t; \zeta)}{\zeta-z}\,\mathrm{d}\zeta.
\end{align}
From Eq.~(\ref{Jduichen-3}), one can obtain
\begin{align}\label{kuaile-1}
\mu_{-2}(x, t; z)=\frac{iq_-}{z}\,\mu_{-1}\left(x, t; -\frac{q_0^2}{z}\right).
\end{align}
Letting $z=\eta_k, k=1, 2, \cdots, 4N_1+2N_2$ in Eqs.~(\ref{kuaile}, \ref{kuaile-1}), then one has
\begin{align}\label{yaoqiujiele}
\begin{bmatrix}
\frac{iq_-}{\eta_k}\\[0.05in]1
\end{bmatrix}
+\sum_{n=1}^{4N_1+2N_2}\left(\widehat C_n(\eta_k)-\frac{iq_-}{\eta_k}\,\delta_{k, n}\right)\mu_{-1}(x, t; \widehat\eta_n)+\frac{1}{2\pi i}\int_\Sigma\frac{\left(M^+J\right)_2(x, t; \zeta)}{\zeta-\eta_k}\,\mathrm{d}\zeta=0,
\end{align}
where $\delta_{k, n}$ is the Kronecker delta function. These equations for $k=1, 2, \cdots, 4N_1+2N_2$ comprise a system of $4N_1+2N_2$ equations with $4N_1+2N_2$ unknowns $\mu_{-1}(x, t; \widehat\eta_n), n=1, 2, \cdots, 4N_1+2N_2$, which together with Eqs.~(\ref{RHP-jie}) and (\ref{kuaile-1}), give a closed system of equations for $M(x, t; z)$ in terms of the scattering data.

Next, we construct the potential from the solution of the Riemann-Hilbert problem.

\begin{theorem}
The potential with single poles in the focusing mKdV equation with NZBCs is given by
\begin{align}\label{recons}
q(x, t)=q_--i\sum_{n=1}^{4N_1+2N_2}A[\widehat\eta_n]\,\mathrm{e}^{2i\theta(x, t; \widehat\eta_n)}\,\mu_{-11}(x, t; \widehat\eta_n)+\frac{1}{2\pi}\int_\Sigma\left(M^+J\right)_{12}(x, t; \zeta)\,\mathrm{d}\zeta.
\end{align}
\end{theorem}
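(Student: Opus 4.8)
The reconstruction formula is obtained by extracting the potential $q(x,t)$ from the large-$z$ asymptotics of the solution $M(x,t;z)$ of the Riemann--Hilbert problem. The plan is to exploit the fact that, by Proposition~\ref{Jzjianjin} and the asymptotic behavior~(\ref{RHP-Asy}), the matrix $M$ has the expansion $M(x,t;z)=I+M^{(1)}(x,t)/z+O(1/z^2)$ as $z\to\infty$, and then to read off the $(1,2)$-entry of the coefficient $M^{(1)}$. First I would relate $M^{(1)}_{12}$ directly to the potential: from the definition~(\ref{RHP-M}) of $M$ together with the Jost integral equation~(\ref{Jost-int}) (or equivalently the asymptotic expansion of $\mu_\pm$ used in the proof of Corollary~\ref{Szjianjin}), one shows that the $1/z$ coefficient of the $(1,2)$-entry of the modified Jost solution encodes $q(x,t)-q_-$. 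Concretely, substituting the Neumann-series asymptotics into the first part of the Lax pair~(\ref{lax-x}) and matching powers of $z$ yields $\displaystyle q(x,t)=q_- + i\lim_{z\to\infty} z\,M^{(1)}_{12}$, which is the identity that must be fed the explicit $M$.

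Next I would take the explicit solution~(\ref{RHP-jie}) of the Riemann--Hilbert problem and expand it for large $z$. Each residue term contributes $\mathrm{Res}_{z=\widehat\eta_n}M^-/(z-\widehat\eta_n)=O(1/z)$, and the Cauchy integral contributes $\frac{1}{2\pi i}\int_\Sigma M^+J/(\zeta-z)\,\mathrm{d}\zeta = -\frac{1}{2\pi i z}\int_\Sigma M^+J\,\mathrm{d}\zeta+O(1/z^2)$. Collecting the $1/z$ coefficients of the $(1,2)$-entry gives
\begin{align}\no
M^{(1)}_{12}=\sum_{n=1}^{4N_1+2N_2}\bigl(\mathrm{Res}_{z=\widehat\eta_n}M^-\bigr)_{12}-\frac{1}{2\pi i}\int_\Sigma\bigl(M^+J\bigr)_{12}(x,t;\zeta)\,\mathrm{d}\zeta.
\end{align}
Using the residue computation~(\ref{liushuhe}), the $(1,2)$-entry of $\mathrm{Res}_{z=\widehat\eta_n}M^-$ equals $A[\widehat\eta_n]\,\mathrm{e}^{2i\theta(x,t;\widehat\eta_n)}\,\mu_{-11}(x,t;\widehat\eta_n)$, since the second column of $M^-$ carries the pole at $\widehat\eta_n$ with proportionality $\widehat C_n$ acting on $\mu_{-1}$. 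Substituting these two pieces into $q=q_-+i\,M^{(1)}_{12}$ and simplifying the constants ($i\cdot(-1/2\pi i)=-1/2\pi$, and the overall $i$ multiplying the residue sum) produces exactly~(\ref{recons}).

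The main obstacle I anticipate is the bookkeeping in the first step, namely rigorously justifying the identity $q-q_-=i\lim_{z\to\infty}z\,M^{(1)}_{12}$ and ensuring the correct column is used. One must be careful that $M^+$ and $M^-$ share the same $O(1/z)$ expansion on $\Sigma$ (so that the single coefficient $M^{(1)}$ is well defined across the jump), and that differentiating the Jost integral equation in $x$ to recover $Q$ from the subleading term is legitimate under the stated $L^1$ hypotheses. A secondary care point is confirming that only the second column of $M^-$ (not the first) contributes to the $(1,2)$-entry at the poles $\widehat\eta_n$, which follows from the structure noted after~(\ref{liushuhe}) that the first column of $M$ is pole-free at $\widehat\eta_n$. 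Once the asymptotic identity for $q$ is established, the remaining steps are routine term-by-term extraction of the $1/z$ coefficient from~(\ref{RHP-jie}).
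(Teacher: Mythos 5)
Your overall strategy is exactly the paper's: expand $M(x,t;z)=I+M^{(1)}(x,t)/z+O(1/z^2)$ as $z\to\infty$, read the $1/z$ coefficient off the explicit solution (\ref{RHP-jie}) (residue terms plus the $-\frac{1}{2\pi i z}\int_\Sigma M^+J\,\mathrm{d}\zeta$ contribution of the Cauchy integral), and recover $q$ by substituting $M\mathrm{e}^{i\theta\sigma_3}$ into the $x$-part of the Lax pair and matching powers of $z$. Your identification of the $(1,2)$-entries of the residues via (\ref{liushuhe}) and your observation that only the second column carries the poles at $\widehat\eta_n$ are both correct.

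However, the key identity you assert is wrong in sign, and as written your final substitution does not produce (\ref{recons}). Writing the Lax-pair relation for $M$ as in the paper,
\begin{equation}\no
M_x+M\left(\tfrac{i\sigma_3}{2}z+\tfrac{iq_0^2\sigma_3}{2z}\right)=\left(\tfrac{i\sigma_3}{2}z-\tfrac{iq_0^2\sigma_3}{2z}+Q\right)M,
\end{equation}
the coefficient of $z^0$ gives $Q=\tfrac{i}{2}\left[M^{(1)},\sigma_3\right]$, whose $(1,2)$-entry is $q=-iM^{(1)}_{12}$ when $M^{(1)}$ denotes the \emph{full} $1/z$ coefficient (including the $i\sigma_3Q_-$ piece, whose $(1,2)$-entry is $iq_-$). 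Equivalently, with your convention in which $M^{(1)}_{12}$ excludes that piece, the relation is $q=q_- \,-\, iM^{(1)}_{12}$, not $q=q_-+iM^{(1)}_{12}$. With your sign, the residue sum enters with $+i$ and the integral with $-\frac{1}{2\pi}$, i.e.\ the negative of the last two terms of (\ref{recons}); with the correct sign one gets $-i\sum_nA[\widehat\eta_n]\mathrm{e}^{2i\theta(\widehat\eta_n)}\mu_{-11}(\widehat\eta_n)$ and $(-i)\cdot\left(-\frac{1}{2\pi i}\right)=+\frac{1}{2\pi}$, matching the theorem. (Your first-paragraph version $q=q_-+i\lim_{z\to\infty}z\,M^{(1)}_{12}$ is also malformed: if $M^{(1)}_{12}$ already denotes the $1/z$ coefficient, multiplying by $z$ and letting $z\to\infty$ is meaningless; you presumably intend $\lim_{z\to\infty}z\,M_{12}(x,t;z)$.) Since you never actually carry out the $z^0$ matching but only assert its outcome, this is the one step that must be done carefully; once it is, the rest of your argument goes through and coincides with the paper's proof.
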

\begin{proof}
Since $M(x, t; z)\,\mathrm{e}^{i\theta(x, t; z)\sigma_3}$ solves  Eq.~(\ref{lax-x}),  it follows that
\begin{align}\label{haha}
M_x(x, t; z)+M(x, t; z)\left(\frac{i\sigma_3}{2}\,z+\frac{iq_0^2\,\sigma_3}{2\,z}\right)=\left(\frac{i\sigma_3}{2}\,z-\frac{iq_0^2\,\sigma_3}{2\,z}+Q\right)M(x, t; z).
\end{align}
From Eqs.~(\ref{RHP-jie}) and (\ref{liushuhe}), one obtains the asymptotic behavior of $M(x, t; z)$ as
\begin{align}
M(x, t; z)=I+\frac{1}{z}\,M^{(1)}(x, t; z)+O\left(\frac{1}{z^2}\right), \quad z\to\infty,
\end{align}
where
\begin{align}
\begin{aligned}
M^{(1)}(x, t; z)&=i\,\sigma_3\,Q_--\frac{1}{2\pi i}\int_\Sigma M^+(x, t; \zeta)\,J(x, t; \zeta)\,\mathrm{d}\zeta  \\[0.05in]
&\qquad +\sum_{n=1}^{4N_1+2N_2}\left[A[\eta_n]\,\mathrm{e}^{-2i\theta(x, t; \eta_n)}\mu_{-2}(x, t; \eta_n), \,A[\widehat\eta_n]\,\mathrm{e}^{2i\theta(x, t; \widehat\eta_n)}\mu_{-1}(x, t; \widehat\eta_n)\right].
\end{aligned}
\end{align}
The proof follows by comparing with the coefficient of $z^0$.
\end{proof}

\subsubsection{Trace formulae and theta condition}
The so-called trace formula is that the scattering coefficients $s_{11}(z)$ and $s_{22}(z)$ are formulated in terms of the discrete spectrum $Z$ and reflection coefficients $\rho(z)$ and $\tilde\rho(z)$. Recall that $s_{11}(z)$ is analytic in $D_+$ and $s_{22}(z)$ is analytic in $D_-$. The discrete spectral points $\eta_n$ is the simple zeros of $s_{11}(z)$, while $\widehat\eta_n$ is the simple zeros of $s_{22}(z)$.

Let
\begin{align}
\begin{aligned}
\beta^+(z)=s_{11}(z)\prod_{n=1}^{4N_1+2N_2}\frac{z-\widehat\eta_n}{z-\eta_n}, \qquad
\beta^-(z)=s_{22}(z)\prod_{n=1}^{4N_1+2N_2}\frac{z-\eta_n}{z-\widehat\eta_n}.
\end{aligned}
\end{align}
Then one can yield that $\beta^+(z)$ and $\beta^-(z)$ are analytic and have no zeros in $D_+$ and $D_-$, respectively. Moreover, Eq.~(\ref{ien}) implies the asymptotic behavior: $\beta^{\pm}(z)\to 1$ as $z\to\infty$. Taking the determinants of both sides of Eq.~(\ref{Jostchuandi}) yields $\mathrm{det}\,S(z)=s_{11}(z)\,s_{22}(z)-s_{12}(z)\,s_{21}(z)=1$, with which one has $\beta^+(z)\,\beta^-(z)=\frac{1}{1-\rho(z)\,\tilde\rho(z)}$. And then taking its logarithms becomes $-\log\beta^+(z)-\log\beta^-(z)=\log\left[1-\rho(z)\,\tilde\rho(z)\right].$
With the aid of the Cauchy projectors and Plemelj's formulae, one has
\begin{align}
\log\beta^{\pm}(z)=\mp\frac{1}{2\pi i}\int_\Sigma\frac{\log\left[1-\rho(\zeta)\,\tilde\rho(\zeta)\right]}{\zeta-z}\,\mathrm{d}\zeta, \quad z\in D^{\pm}.
\end{align}
Hence, the trace formulae are given in the following:
\begin{align}\label{trace-1}
s_{11}(z)&=\exp\left(-\frac{1}{2\pi i}\int_\Sigma\frac{\log\left[1-\rho(\zeta)\,\tilde\rho(\zeta)\right]}{\zeta-z}\,\mathrm{d}\zeta\right)\prod_{n=1}^{4N_1+2N_2}\frac{z-\eta_n}{z-\widehat\eta_n},
\\[0.05in]
s_{22}(z)&=\exp\left(\frac{1}{2\pi i}\int_\Sigma\frac{\log\left[1-\rho(\zeta)\,\tilde\rho(\zeta)\right]}{\zeta-z}\,\mathrm{d}\zeta\right)\prod_{n=1}^{4N_1+2N_2}\frac{z-\widehat\eta_n}{z-\eta_n}.
\end{align}

In the following, we use the obtained trace formulae to derive the asymptotic phase difference of the boundary values $q_+$ and $q_-$ (also called `theta condition' in Ref.~\cite{Faddeev1987}). To this end, let $z\to 0$ in Eq.~(\ref{trace-1}). The left-hand side of Eq.~(\ref{ien-1}) yields $s_{11}(z)\to\frac{q_+}{q_-}$.
Note that
\begin{align}
\prod_{n=1}^{4N_1+2N_2}\frac{z-\eta_n}{z-\widehat\eta_n}=
\prod_{n=1}^{N_1}\frac{\left(z-z_n\right)\left(z+z_n^*\right)\left(z+\dfrac{q_0^2}{z_n^*}\right)\left(z-\dfrac{q_0^2}{z_n}\right)}
{\left(z-z_n^*\right)\left(z+z_n\right)\left(z+\dfrac{q_0^2}{z_n^*}\right)\left(z-\dfrac{q_0^2}{z_n^*}\right)}\,
\prod_{m=1}^{N_2}\frac{\left(z-iw_m\right)\left(z+\dfrac{iq_0^2}{w_m}\right)}{\left(z+iw_m\right)\left(z-\dfrac{iq_0^2}{w_m}\right)}.
\end{align}
One can infer that
\begin{align}
\prod_{n=1}^{4N_1+2N_2}\frac{z-\eta_n}{z-\widehat\eta_n}\to 1 \quad \mathrm{as}\quad z\to 0.
\end{align}
Furthermore, one has
\begin{align} \label{qq}
\frac{q_+}{q_-}=\exp\left(\frac{i}{2\pi}\int_\Sigma\frac{\log\left[1-\rho(\zeta)\,\tilde\rho(\zeta)\right]}{\zeta}\,\mathrm{d}\zeta\right).
\end{align}
Thus, the theta condition for Eq.~(\ref{qq}) is given in the following:
\begin{align}
\mathrm{arg}\,\frac{q_+}{q_-}=\frac{1}{2\pi}\int_\Sigma\frac{\log\left[1-\rho(\zeta)\,\tilde\rho(\zeta)\right]}{\zeta}\,\mathrm{d}\zeta.
\end{align}
Further, let $f(\zeta)=\log\left[1-\rho(\zeta)\,\tilde\rho(\zeta)\right]$,
\begin{gather*}
L_1=\int_{-\infty}^{-q_0}\frac{f(\zeta)}{\zeta}\,\mathrm{d}\zeta, \quad L_2=\int_{q_0}^{+\infty}\frac{f(\zeta)}{\zeta}\,\mathrm{d}\zeta, \quad L_3=\int_{0}^{-q_0}\frac{f(\zeta)}{\zeta}\,\mathrm{d}\zeta, \quad L_4=\int_{q_0}^{0}\frac{f(\zeta)}{\zeta}\,\mathrm{d}\zeta,\\
K_1=i\int_{\frac{\pi}{2}}^0f(q_0\mathrm{e}^{i\varphi})\,\mathrm{d}\varphi,\quad K_2=i\int_{\pi}^\frac{\pi}{2}f(q_0\mathrm{e}^{i\varphi})\,\mathrm{d}\varphi,\quad K_3=i\int_{-\pi}^{-\frac{\pi}{2}}f(q_0\mathrm{e}^{i\varphi})\,\mathrm{d}\varphi,\quad K_4=i\int_{-\frac{\pi}{2}}^0f(q_0\mathrm{e}^{i\varphi})\,\mathrm{d}\varphi.
\end{gather*}
The symmetries in Eqs. (\ref{rho-1}, \ref{rho-2}) yield  $f(\zeta)=f(-\zeta)$, which generates $L_1=-L_2,\, L_3=-L_4,\, K_1=-K_3,\, K_2=-K_4$. Then one has
\begin{gather}
\mathrm{arg}\,\frac{q_+}{q_-}=\frac{1}{2\pi}\int_\Sigma\frac{\log\left[1-\rho(\zeta)\,\tilde\rho(\zeta)\right]}{\zeta}\,\mathrm{d}\zeta
=\frac{1}{2\pi}\sum_{j=1}^4\left(L_j+K_j\right)=0,
\end{gather}
that is $q_+=q_-$, which means that the boundary conditions  at infinity are same.

\subsubsection{Reflectionless potential}

In this subsection, we will explicitly exhibit the IST  with the aid of the Riemann-Hilbert problem. We here consider a special  kind of solutions, where the reflection coefficients $\rho(z)$ and $\tilde\rho(z)$ vanish identically. In this case, there is no jump (i.e., $J=0$) from $M^+(x, t; z)$ to $M^-(x, t; z)$ along  the continuous spectrum, and the inverse problem can be solved explicitly by using an algebraic system.

The case $\rho(z)=\tilde\rho(z)=0$ implies $J(x, t; z)=0$. It follows from Eq.~(\ref{yaoqiujiele}) that
\begin{align}\label{sle}
\sum_{n=1}^{4N_1+2N_2}\left(\widehat C_n(\eta_k)-\frac{iq_-}{\eta_k}\,\delta_{k, n}\right)\mu_{-11}(x, t; \widehat\eta_n)=-\frac{iq_-}{\eta_k}, \quad k=1, 2, \cdots, 4N_1+2N_2.
\end{align}
Let $G=\left(g_{kn}\right)_{(4N_1+2N_2)\times(4N_1+2N_2)}$, $\gamma=\left(\gamma_n\right)_{(4N_1+2N_2)\times 1}$, $\beta=\left(\beta_k\right)_{(4N_1+2N_2)\times 1}$ with
%\begin{align}
$g_{kn}=\widehat C_n(\eta_k)-\frac{iq_-}{\eta_k}\,\delta_{k, n}, \, \gamma_n=\mu_{-11}(x, t; \widehat\eta_n), \, \beta_k=-\frac{iq_-}{\eta_k}.$
%\end{align}
Then one can obtain $\gamma=G^{-1}\beta$ by solving the system of linear equations (\ref{sle}).

Let $\alpha=\left(\alpha_n\right)_{(4N_1+2N_2)\times 1}$, where $\alpha_n=A[\widehat\eta_n]\,\mathrm{e}^{2i\theta(x, t; \widehat\eta_n)}$.
From the reconstruction formula in Eq.~(\ref{recons}), we have the following theorem:

\begin{theorem}
The reflectionless potential (i.e., the single-pole solution of the focusing mKdV equation with NZBCs (\ref{mKdV})) can be derived via the determinants
\begin{align}\label{solu1}
q(x, t)=q_-+\frac{\mathrm{det} \begin{bmatrix} G&\beta \vspace{0.05in}\\
\alpha^T&0 \end{bmatrix}} {\mathrm{det}\,G}\,i.
\end{align}
\end{theorem}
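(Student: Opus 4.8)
The plan is to convert the reconstruction formula of the previous theorem into the compact determinant form (\ref{solu1}) by recognizing that (\ref{recons}) with $J=0$ is nothing but a Cramer's-rule reading of the linear system (\ref{sle}). First I would set $J(x,t;z)=0$, which kills the integral terms in both the reconstruction formula (\ref{recons}) and the closure equation (\ref{yaoqiujiele}). What survives of (\ref{recons}) is
\begin{align}\no
q(x,t)=q_--i\sum_{n=1}^{4N_1+2N_2}A[\widehat\eta_n]\,\mathrm{e}^{2i\theta(x, t; \widehat\eta_n)}\,\mu_{-11}(x, t; \widehat\eta_n)=q_--i\sum_{n=1}^{4N_1+2N_2}\alpha_n\,\gamma_n=q_--i\,\alpha^T\gamma,
\end{align}
using the definitions $\alpha_n=A[\widehat\eta_n]\,\mathrm{e}^{2i\theta(x,t;\widehat\eta_n)}$ and $\gamma_n=\mu_{-11}(x,t;\widehat\eta_n)$ introduced just above the statement.

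Next I would substitute $\gamma=G^{-1}\beta$, the solution of (\ref{sle}) guaranteed in the text, so that $q=q_--i\,\alpha^T G^{-1}\beta$. The whole content of the theorem is then the purely linear-algebraic identity
\begin{align}\no
\alpha^T G^{-1}\beta=-\frac{\mathrm{det}\begin{bmatrix} G&\beta\\ \alpha^T&0\end{bmatrix}}{\mathrm{det}\,G},
\end{align}
which I would establish by the standard Schur-complement expansion of the bordered determinant: expanding $\det\begin{bmatrix}G&\beta\\\alpha^T&0\end{bmatrix}$ along its last row/column (or quoting the block formula $\det\begin{bmatrix}G&\beta\\\alpha^T&0\end{bmatrix}=-\alpha^T\,\mathrm{adj}(G)\,\beta=-\det(G)\,\alpha^T G^{-1}\beta$) yields exactly this. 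Combining the two displays gives (\ref{solu1}) after the sign from the $-i$ prefactor is absorbed into the bordered determinant, which is where I would be most careful.

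The only genuinely nontrivial input is that $G$ is invertible, so that $\gamma=G^{-1}\beta$ makes sense; this is already asserted in the paragraph preceding the theorem, and I would simply inherit it. The remaining obstacle, if any, is bookkeeping rather than mathematics: I must confirm that the index sets in $\alpha,\beta,\gamma,G$ all run over the same $4N_1+2N_2$ discrete eigenvalues in a consistent order, and that the symmetry relations among the $A[\eta_n]$ (the corollary following Proposition on residue relations) are what make $q(x,t)$ come out real, as it must for the mKdV potential. The main \emph{conceptual} point to flag is that no analysis is needed here at all—having already solved the Riemann-Hilbert problem and extracted the reconstruction formula, the reflectionless specialization reduces the potential to a ratio of determinants purely by Cramer's rule, and the determinant presentation (\ref{solu1}) is merely the tidiest way to record $q=q_--i\,\alpha^TG^{-1}\beta$.
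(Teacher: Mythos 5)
Your proposal is correct and follows exactly the route the paper intends (the paper itself leaves the proof implicit, merely saying the theorem follows "from the reconstruction formula"): set $J=0$ so that (\ref{recons}) reduces to $q=q_--i\,\alpha^{T}\gamma$, substitute $\gamma=G^{-1}\beta$ from (\ref{sle}), and apply the bordered-determinant identity $\det\begin{bmatrix}G&\beta\\\alpha^{T}&0\end{bmatrix}=-\det(G)\,\alpha^{T}G^{-1}\beta$, whose sign indeed absorbs the $-i$ prefactor to give (\ref{solu1}). Your sign bookkeeping and the invertibility caveat on $G$ are both handled correctly, so nothing further is needed.
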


The reflectionless potential contains free parameters $N_1, N_2, q_-, z_n, A[z_n], w_m, A[iw_m]$, $(n=1, 2, \cdots N_1; m=1, 2, \cdots N_2)$. The solution (\ref{solu1}) possesses the distinct wave structures for some parameters:

\begin{figure}[!t]
\centering
\includegraphics[scale=0.42]{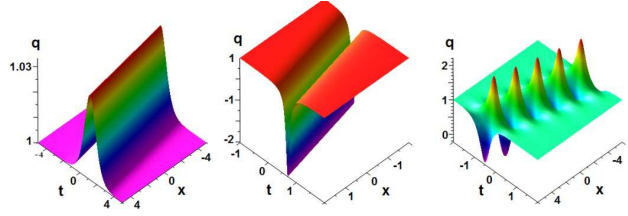}
\caption{Soliton and breather solutions of the fousing mKdV equation with NZBCs $q_{\pm}=1$. Left: bright soliton with parameters $N_1=0, N_2=1, w_1=\frac{6}{5}, A[iw_1]=i$. Middle: dark soliton with parameters $N_1=0, N_2=1,  w_1=\frac{6}{5}, A[iw_1]=-i$. Right: breather solution with parameters $N_1=1, N_2=0, z_1=2+\frac{1}{2}\,i, A[z_1]=i$.}
\label{f1}
\end{figure}

\begin{figure}[!t]
\centering
\includegraphics[scale=0.42]{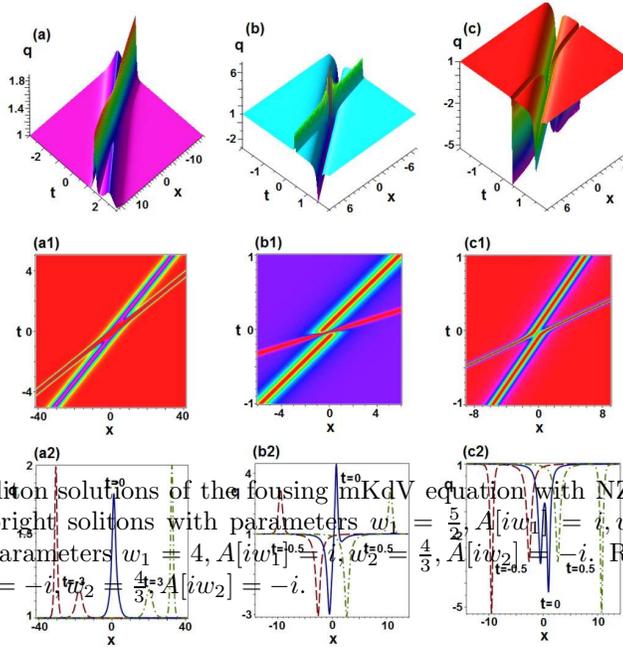}
\vspace{-1.1in}
\caption{Simple-pole $2$-soliton solutions of the fousing mKdV equation with NZBCs $q_+=q_-=1$ and $N_1=0,\, N_2=2$. Left: bright-bright solitons with parameters $w_1=\frac{5}{2}, A[iw_1]=i, w_2=\frac{3}{2}, A[iw_2]=i$. Middle: bright-dark solitons  with parameters $w_1=4, A[iw_1]=i, w_2=\frac{4}{3}, A[iw_2]=-i$. Right: dark-dark solitons  with parameters $w_1=4, A[iw_1]=-i, w_2=\frac{4}{3}, A[iw_2]=-i$.}
\label{f2-1}
\end{figure}

\begin{itemize}

 \item {} As $N_1=0,\, N_2\not=0$, it exhibits an $N_2$-soliton solution;

  \item {} As $N_1\not=0,\, N_2=0$, it stands for an $N_1$-breather solution;

  \item{} As $N_1N_2\not=0$, it is a mixed $N_1$-breather-$N_2$-soliton solution.

  \end{itemize}

 Since the mKdV equation admits a scaling symmetry, that is, if $q(x,t)$ is a solution of Eq.~(\ref{mKdV}), so is $\alpha q(\alpha x, \alpha^3t)$ with $\alpha\in \mathbb{R}\backslash\{0\}$.

 Nowadays, we explicitly show some special wave structures (e.g., single solitons and interactions of two solitons) for the solution (\ref{solu1}) as follows:
 \begin{itemize}

 \item {} As $N_1=0,\, N_2=1$, the one-soliton solution of Eq.~(\ref{mKdV}) reads as
 \begin{gather}
q(x,t)=q_-+\frac{4ca^3q_0^2(a+2)^3\mathrm{e}^\varphi}
{c^2\left(a^2+2a+2\right)^2\mathrm{e}^{2\varphi}+4a(a+2)(a+1)^2q_-\left[2c\mathrm{e}^\varphi+a(a+2)q_-\right]},
\end{gather}
where
\begin{gather*}
a=\left(\mathrm{Im}\,w_1-1\right)q_0, \,\, c=\mathrm{Im}\,A[w_1],\,\, \varphi=\frac{a(a+2)q_0}{\left(a+1\right)^3}\left[\left(a+1\right)^2x-\left(a^4+4a^3+10a^2+12a+6\right)q_0^2t\right].
\end{gather*}
As $c>0$, it is a bright soliton with NZBCs (see Fig. \ref{f1}(left)). As $c<0$, it is a dark soliton (see Fig. \ref{f1}(middle)).

 \item {} As $N_1=1,\, N_2=0$, we show the dynamical structure of the breather solution (see Fig.~\ref{f1}(right));

 \item {} As $N_1=0,\, N_2=2$, Fig.~\ref{f2-1} displays the elastic collisions of two bright solitons (left), a dark soliton and a bright soliton (middle), and two dark solitons (right) of the focusing mKdV equation with NZBCs  $q_{\pm}=1$ for distinct parameters, respectively. It follows from Fig.~\ref{f2-1}(a2) that during the interaction of two bright solitons, the bright soliton with a higher amplitude is located behind another bright soliton with a lower amplitude before their collision, and after their collision, the bright soliton with a higher amplitude moves in front of another bright soliton with a lower amplitude. Similarly, there are the same situations for another two cases (see
     Figs.~\ref{f2-1}(b2, c2)). The centerlines of two bright solitons are both located two lines, respectively, before and after interactions (see Fig.~\ref{f2-1}(a1)). For the interaction of a dark soliton and a bright soliton (see Fig.~\ref{f2-1}(b1)), the two centerlines of the bright soliton is a line, but two centerlines of the dark soliton are two parallel lines before and after interactions. Similarly, for the interaction of two dark solitons (see Fig.~\ref{f2-1}(c1)), the two centerlines of the dark soliton with narrow wave width is a line, but
     two centerlines of another dark soliton with narrow wave width are two parallel lines before and after interactions.

 \item{} As $N_1=2,\, N_2=0$, Fig.~\ref{f2-2}(left) exhibits the interaction of two breather solutions  of the focusing mKdV equation with NZBCs  $q_{\pm}=1$. The two centerlines of the breather solution with low amplitude is a line, but two centerlines of another breather solution with high amplitude are two parallel lines before and after interactions.

 \item{} As $N_1=N_2=1$, Figs.~\ref{f2-2}(middle, right) display the interaction of a breather and a bright soliton (middle), and the interaction of a breather and a dark soliton (right) of the focusing mKdV equation with NZBCs $q_{\pm}=1$, respectively. For the interaction of a breather and a bright soliton (see Fig.~\ref{f2-2}(middle)), their centerlines are both located on two lines, respectively, before and after interactions, except for the case nearby the interaction point. The interaction of a breather and a dark soliton ((see Fig.~\ref{f2-2} (right)) also admits the similar result.

  \end{itemize}

\begin{figure}[!t]
\centering
\includegraphics[scale=0.42]{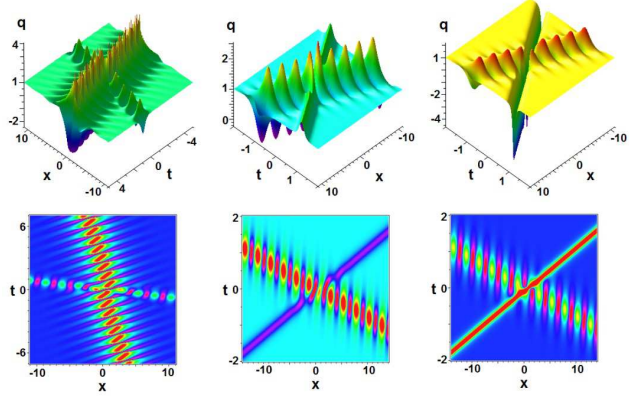}
\caption{ Simple-pole solutions of the fousing mKdV equation with NZBCs $q_+=q_-=1$. Left: $2$-breather solution with parameters $N_1=2, N_2=0, z_1=2+i/2, A[z_1]=i, z_2=1+i, A[z_2]=i$. Middle: breather-bright-soliton solutions with parameters $N_{1,2}=1, z_1=2+i/2, A[z_1]=i, w_1=2, A[iw_1]=i$. Right: breather-dark-soliton solutions with parameters $N_{1,2}=1, z_1=2+i/2, A[z_1]=i, w_1=2, A[iw_1]=-i$.}
\label{f2-2}
\end{figure}

In fact, the reflection coefficients $\rho(z),\, \tilde{\rho}(z)$ may possess the multiple poles except for the above-mentioned case of simple pole. In the following, we will consider that the reflection coefficients admit the case of double poles with pairs of conjugate complex discrete spectra and pure imaginary discrete spectra.

\section{The focusing mKdV equation with NZBCs: double poles}

\subsection{Direct scattering problem with NZBCs}

Most of the direct scattering is changeless by the presence of double poles in contrast to simple poles apart from the treatment of the discrete spectrum. Recall that the discrete spectrum is the set
\begin{align}
Z=\left\{z_n, z_n^*, -z_n^*, -z_n, -\frac{q_0^2}{z_n}, \frac{q_0^2}{z_n^*}, -\frac{q_0^2}{z_n^*}, \frac{q_0^2}{z_n}\right\}_{n=1}^{N_1}\bigcup\left\{iw_n, -iw_n, \frac{iq_0^2}{w_n}, -\frac{iq_0^2}{w_n}\right\}_{n=1}^{N_2}.
\end{align}
In this part, we suppose that the discrete spectral points are double zeros of the scattering coefficients $s_{11}(z)$ and $s_{22}(z)$, that is, we have $s_{11}(z_0)=s_{11}'(z_0)=0$, $s_{11}''(z_0)\ne 0$ for $\forall z_0\in Z\cap D_+$, and $s_{22}(z_0)=s_{22}'(z_0)=0$, $s_{22}''(z_0)\ne 0$  for $\forall z_0\in Z\cap D_-$.

Let
\begin{align}\no
b[z_0]=\left\{
\begin{aligned}
\frac{\varPhi_{+1}(x, t; z_0)}{\varPhi_{-2}(x, t; z_0)}, \quad z_0\in Z\cap D_+,\\[0.05in]
\frac{\varPhi_{+2}(x, t; z_0)}{\varPhi_{-1}(x, t; z_0)}, \quad z_0\in Z\cap D_-,
\end{aligned}\right.\qquad
d[z_0]=\left\{
\begin{aligned}
\frac{\varPhi_{+1}'(x, t; z_0)-b[z_0]\,\varPhi_{-2}'(x, t; z_0)}{\varPhi_{-2}(x, t; z_0)},\quad z_0\in Z\cap D_+,\\[0.05in]
\frac{\varPhi_{+2}'(x, t; z_0)-b[z_0]\,\varPhi_{-1}'(x, t; z_0)}{\varPhi_{-1}(x, t; z_0)},\quad z_0\in Z\cap D_-,
\end{aligned}\right.
\end{align}
\begin{align}\no
A[z_0]=\left\{
\begin{aligned}
\frac{2\,b[z_0]}{s_{11}''(z_0)},\quad z_0\in Z\cap D_+,\\[0.05in]
\frac{2\,b[z_0]}{s_{22}''(z_0)},\quad z_0\in Z\cap D_-,
\end{aligned}\right. \qquad\qquad\qquad\qquad
B[z_0]=\left\{
\begin{aligned}
\frac{d[z_0]}{b[z_0]}-\frac{s_{11}'''(z_0)}{3\,s_{11}''(z_0)},\quad z_0\in Z\cap D_+,\\[0.05in]
\frac{d[z_0]}{b[z_0]}-\frac{s_{22}'''(z_0)}{3\,s_{22}''(z_0)},\quad z_0\in Z\cap D_-.
\end{aligned}\right.
\end{align}
From Lemma \ref{duojidian}, one has
\begin{align}\label{erjieliu}
\begin{aligned}
\mathop\mathrm{P_{-2}}\limits_{z=z_0}\left[\frac{\varPhi_{+1}(x, t; z)}{s_{11}(z)}\right]&=A[z_0]\,\varPhi_{-2}(x, t; z_0), \quad z_0\in Z\cap, D_+,\\[0.05in]
\mathop\mathrm{P_{-2}}\limits_{z=z_0}\left[\frac{\varPhi_{+2}(x, t; z)}{s_{22}(z)}\right]&=A[z_0]\,\varPhi_{-1}(x, t; z_0), \quad z_0\in Z\cap, D_-,\\[0.05in]
\mathop\mathrm{Res}\limits_{z=z_0}\left[\frac{\varPhi_{+1}(x, t; z)}{s_{11}(z)}\right]&=A[z_0]\left[\varPhi_{-2}'(x, t; z_0)+B[z_0]\, \varPhi_{-2}(x, t; z_0)\right], \quad z_0\in Z\cap D_+,\\[0.05in]
\mathop\mathrm{Res}\limits_{z=z_0}\left[\frac{\varPhi_{+2}(x, t; z)}{s_{22}(z)}\right]&=A[z_0]\left[\varPhi_{-1}'(x, t; z_0)+B[z_0]\, \varPhi_{-1}(x, t; z_0)\right], \quad z_0\in Z\cap D_-,
\end{aligned}
\end{align}
where $\mathop\mathrm{P_{-2}}\limits_{z=z_0}\left[\bm\cdot\right]$ denotes the coefficient of $\frac{1}{\left(z-z_0\right)^2}$ in the Laurent expansion of $\bm\cdot$ at $z=z_0$.

\begin{proposition}
For $\forall z_0\in Z$, three symmetry relations for $A[z_0]$ and $B[z_0]$ are given by
\begin{itemize}
\item The first symmetry relation:
%\begin{align}
$A[z_0]=-A[z_0^*]^*, \quad B[z_0]=B[z_0^*]^*.$
%\end{align}
\item The second symmetry relation:
%\begin{align}
$A[z_0]=A[-z_0^*]^*, \quad B[z_0]=-B[-z_0^*]^*.$
%\end{align}
\item The third symmetry relation:
%\begin{align}
$A[z_0]=\frac{z_0^4}{q_0^4}\,A\left[-\frac{q_0^2}{z_0}\right],\quad B[z_0]=\frac{q_0^2}{z_0^2}\,B\left[-\frac{q_0^2}{z_0}\right]+\frac{2}{z_0}.$
%\end{align}
\end{itemize}
\end{proposition}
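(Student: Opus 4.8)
The plan is to derive each of the three symmetries of $A[z_0]$ and $B[z_0]$ from the corresponding involution by substituting the (suitably differentiated) Jost symmetries (\ref{Jduichen-1})--(\ref{Jduichen-3}) into the defining relations for $b[z_0]$ and $d[z_0]$, and then pairing these with the induced relations among the derivatives $s_{11}^{(m)},s_{22}^{(m)}$ at the discrete points. Since $A[z_0]=2b[z_0]/s_{11}''(z_0)$ and $B[z_0]=d[z_0]/b[z_0]-s_{11}'''(z_0)/(3s_{11}''(z_0))$ on $D_+$ (and the analogous formulas with $s_{22}$ on $D_-$), it is enough to track how $b$, $d$, and the relevant second and third derivatives of the scattering coefficients transform under $z\mapsto z^*$, $z\mapsto -z^*$, and $z\mapsto -q_0^2/z$. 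Throughout I will use the Schwarz-reflection rule that if $f(z)=\overline{g(\bar z)}$ with $g$ analytic then $f^{(m)}(z)=\overline{g^{(m)}(\bar z)}$, which keeps the $z$-derivatives of the conjugated Jost columns clean.

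For the first involution $z\mapsto z^*$ I would substitute the column form of (\ref{Jduichen-1}), namely $\varPhi_{+1}(z)=i\sigma_2\varPhi_{+2}(z^*)^*$ and $\varPhi_{-2}(z)=-i\sigma_2\varPhi_{-1}(z^*)^*$, into $\varPhi_{+1}(z_0)=b[z_0]\varPhi_{-2}(z_0)$ to obtain $b[z_0^*]=-b[z_0]^*$; differentiating the same column identities and inserting them into the definition of $d[z_0]$ gives $d[z_0^*]=-d[z_0]^*$. Combining these with the order-$2$ and order-$3$ cases of the first relation of Lemma~\ref{dSduichen-1e}, $s_{11}^{(m)}(z)=s_{22}^{(m)}(z^*)^*$, yields $A[z_0]=-A[z_0^*]^*$, while in $B$ the two minus signs in $b$ and $d$ cancel in the ratio $d/b$, leaving $B[z_0]=B[z_0^*]^*$. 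The second involution $z\mapsto -z^*$ is handled identically using (\ref{Jduichen-2}) and the parity relation $s_{11}^{(m)}(z)=(-1)^m s_{11}^{(m)}(-z^*)^*$ from Lemma~\ref{dSduichen-1e}; here the odd order $m=3$ supplies an extra sign in $s_{11}'''$ that flips $B$ to $B[z_0]=-B[-z_0^*]^*$, while $A[z_0]=A[-z_0^*]^*$.

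The third involution $z\mapsto -q_0^2/z$ is the main obstacle, because (\ref{Jduichen-3}) carries the prefactor $iq_\pm/z$ and the map $\phi(z)=-q_0^2/z$ has a nontrivial derivative, with $\phi'=q_0^2/z^2$ and $\phi''=-2q_0^2/z^3$. Writing $\hat z_0=-q_0^2/z_0$, I would first read off $b[z_0]=(q_+/q_-)\,b[\hat z_0]$ directly from (\ref{Jduichen-3}) together with $\varPhi_{+2}(\hat z_0)=b[\hat z_0]\varPhi_{-1}(\hat z_0)$. For the scattering data I would differentiate the third symmetry $s_{11}(z)=(q_+/q_-)\,s_{22}(-q_0^2/z)$ by the chain rule; at the double zero $\hat z_0$ the term $s_{22}'(\hat z_0)$ vanishes, so $s_{11}''(z_0)=(q_+/q_-)(q_0^4/z_0^4)\,s_{22}''(\hat z_0)$, immediately giving $A[z_0]=(z_0^4/q_0^4)\,A[\hat z_0]$. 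The delicate points are that the surviving cross term $3\,s_{22}''\phi'\phi''$ in the third derivative produces $s_{11}'''(z_0)/(3s_{11}''(z_0))=(q_0^2/z_0^2)\,s_{22}'''(\hat z_0)/(3s_{22}''(\hat z_0))-2/z_0$, and that when (\ref{Jduichen-3}) is differentiated and inserted into the numerator $\varPhi_{+1}'(z_0)-b[z_0]\varPhi_{-2}'(z_0)$, the two \emph{undifferentiated-prefactor} contributions cancel exactly once $b[z_0]=(q_+/q_-)b[\hat z_0]$ is used, leaving only the $\phi'$-terms and hence $d[z_0]=(q_+q_0^2)/(q_-z_0^2)\,d[\hat z_0]$, so that $d[z_0]/b[z_0]=(q_0^2/z_0^2)\,d[\hat z_0]/b[\hat z_0]$.

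Assembling $B[z_0]=d[z_0]/b[z_0]-s_{11}'''(z_0)/(3s_{11}''(z_0))$ then collects the homogeneous factor $q_0^2/z_0^2$ onto $B[\hat z_0]=d[\hat z_0]/b[\hat z_0]-s_{22}'''(\hat z_0)/(3s_{22}''(\hat z_0))$, while the subtraction of the $-2/z_0$ coming from the chain-rule cross term leaves the inhomogeneous remainder $+2/z_0$, giving exactly $B[z_0]=(q_0^2/z_0^2)\,B[-q_0^2/z_0]+2/z_0$. I expect this cancellation of the undifferentiated terms, and the precise bookkeeping that turns the cross term into the additive $2/z_0$, to be the only genuinely nonroutine part; everything else reduces to the reflection/parity identities of Lemma~\ref{dSduichen-1e} and the symmetries (\ref{Jduichen-1})--(\ref{Jduichen-3}). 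The companion identities for spectral points lying in $D_-$ follow by the same computation with the roles of $s_{11}$ and $s_{22}$ interchanged.
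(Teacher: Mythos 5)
Your argument is correct in all details — including the sign bookkeeping $b[z_0^*]=-b[z_0]^*$, $d[z_0^*]=-d[z_0]^*$, the parity flip from the odd-order derivative $s_{11}'''$ in the second relation, the cancellation of the undifferentiated-prefactor terms in $d$ under $z\mapsto -q_0^2/z$, and the chain-rule cross term $\phi''/\phi'=-2/z_0$ that produces the inhomogeneous $+2/z_0$ in $B$. The paper states this proposition without proof, but the ingredients you invoke (Lemma~\ref{dSduichen-1e}, the Jost symmetries (\ref{Jduichen-1})--(\ref{Jduichen-3}), the third scattering symmetry differentiated at the double zero) are exactly the machinery the paper sets up for this purpose, so your route is the intended one.
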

\begin{corollary}
For $n=1, 2, \cdots, N_1$ and $m=1, 2, \cdots, N_2$, one has
\begin{gather}\no
A[z_n]=-A[z_n^*]^*=-A[-z_n]=A[-z_n^*]^*=\frac{z_n^4}{q_0^4}\,A\left[-\frac{q_0^2}{z_n}\right]=\frac{z_n^4}{q_0^4}\,A\left[\frac{q_0^2}{z_n^*}\right]^*
=-\frac{z_n^4}{q_0^4}\,A\left[-\frac{q_0^2}{z_n^*}\right]^*=-\frac{z_n^4}{q_0^4}\,A\left[\frac{q_0^2}{z_n}\right], \\[0.05in]
\no A[iw_m]=-A[-iw_m]^*=\frac{w_m^4}{q_0^4}\,A\left[\frac{iq_0^2}{w_m}\right]=-\frac{w_m^4}{q_0^4}\,A\left[-\frac{iq_0^4}{w_m}\right]^*, \quad \mathrm{Im}\,A[iw_m]=0,  \\[0.05in]
\begin{aligned}\no
B[z_n]=B[z_n^*]^*=&-B[-z_n^*]=-B[-z_n]=\frac{q_0^2}{z_n^2}\,B\left[-\frac{q_0^2}{z_n}\right]+\frac{2}{z_n}
        =\frac{q_0^2}{z_n^2}\,B\left[-\frac{q_0^2}{z_n^*}\right]^*+\frac{2}{z_n}\\[0.05in]
={}&-\frac{q_0^2}{z_n^2}\,B\left[\frac{q_0^2}{z_n^*}\right]^*+\frac{2}{z_n}=-\frac{q_0^2}{z_n^2}\,B\left[\frac{q_0^2}{z_n}\right]+\frac{2}{z_n},
\end{aligned}\\[0.05in]
\no B[iw_m]=B[-iw_m]^*=-\frac{q_0^2}{w_m^2}\,B\left[\frac{iq_0^2}{w_m}\right]-\frac{2i}{w_m}=-\frac{q_0^2}{w_m^2}\,B\left[-\frac{iq_0^2}{w_m}\right]^*-\frac{2i}{w_m},\quad \mathrm{Re}\,B[iw_m]=0.
\end{gather}
\end{corollary}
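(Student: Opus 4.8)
The plan is to obtain the Corollary as a direct bookkeeping consequence of the three symmetry relations for $A[z_0]$ and $B[z_0]$ established in the preceding Proposition. The underlying structure is that the discrete spectrum $Z$ is the orbit of a single seed point---$z_n$ for the complex octuplets and $iw_m$ for the imaginary ones---under the group of involutions $\iota_1:z\mapsto z^*$, $\iota_2:z\mapsto -z^*$ and $\iota_3:z\mapsto -q_0^2/z$ coming from the reduction conditions. Since $\iota_1\circ\iota_2$ is the reflection $z\mapsto -z$ and $\iota_3$ inverts through the circle $|z|=q_0$, the orbit of a generic $z_n$ consists of the eight points $z_n,z_n^*,-z_n,-z_n^*,\mp q_0^2/z_n,\mp q_0^2/z_n^*$ appearing in $Z$, while the orbit of $iw_m$ collapses to the four points $\pm iw_m,\pm iq_0^2/w_m$. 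First I would fix the seed $z_n$ (resp.\ $iw_m$) and read off the value of $A$ and $B$ at every other point of its orbit by applying the appropriate relation, or a composition of two relations followed by complex conjugation.

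For the $A$-relations this is immediate. For instance, $A[z_n]=-A[-z_n]$ follows by combining the second relation $A[z_n]=A[-z_n^*]^*$ with the first relation applied at $-z_n^*$, namely $A[-z_n^*]=-A[-z_n]^*$; and $A[z_n]=\tfrac{z_n^4}{q_0^4}A[q_0^2/z_n^*]^*$ follows by applying the third relation at $-z_n^*$ and conjugating. Each of the remaining equalities in the first displayed chain is produced analogously, so that every entry of the orbit is pinned to $A[z_n]$ up to sign, conjugation and the factor $z_n^4/q_0^4$.

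The $B$-relations require one extra piece of care because the third symmetry is inhomogeneous, $B[z_0]=\tfrac{q_0^2}{z_0^2}B[-q_0^2/z_0]+\tfrac{2}{z_0}$, so the additive terms $2/z$ must be tracked through each composition. For example, to obtain $B[z_n]=\tfrac{q_0^2}{z_n^2}B[-q_0^2/z_n^*]^*+\tfrac{2}{z_n}$ I would start from the first relation $B[z_n]=B[z_n^*]^*$, apply the third relation at $z_n^*$, and conjugate, noting that $(2/z_n^*)^*=2/z_n$ so the inhomogeneous term survives unchanged. Finally, the reality statements are the fixed-point cases: since $iw_m$ is fixed by $\iota_2$, the second relations specialize to $A[iw_m]=A[iw_m]^*$ and $B[iw_m]=-B[iw_m]^*$, giving $\mathrm{Im}\,A[iw_m]=0$ and $\mathrm{Re}\,B[iw_m]=0$ respectively.

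The only genuine obstacle is verifying internal consistency rather than the correctness of any single identity: because several distinct compositions of involutions land on the same orbit point, one must check that they all produce the same value of $A$ (and of $B$, including its additive $2/z$ corrections), i.e.\ that traversing a closed loop in the involution group returns the identity with no residual phase or shift. I expect this to reduce to the elementary relations $\iota_1^2=\iota_2^2=\iota_3^2=\mathrm{id}$ together with the cocycle identity satisfied by the $2/z$ term in the third $B$-relation, but it is where the bulk of the careful checking lies.
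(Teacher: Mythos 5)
Your proposal is correct and is exactly how the paper (implicitly) obtains this corollary: each equality in the chains is a composition of the three symmetry relations of the preceding proposition, conjugating where needed and tracking the inhomogeneous $2/z$ term in the third $B$-relation, with the reality statements coming from $iw_m$ being a fixed point of $z\mapsto -z^*$. The consistency check you flag as the main obstacle is in fact automatic, since the three relations are derived identities satisfied by the single well-defined quantities $A[z_0]$ and $B[z_0]$, so no closed loop of involutions can produce a residual phase or shift.
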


\subsection{Inverse problem  with NZBCs and double poles}

\subsubsection{Formulation of the RHP}

In the case of double poles, the Riemann-Hilbert problem (\ref{RHP-M}, \ref{RHP-Jump}, \ref{RHP-Asy}) still holds. To regularize the Riemann-Hilbert problem, one has to subtract out the asymptotic values as $z\to\infty$ and $z\to 0$ and the singularity contributions. Then the jump condition (\ref{RHP-Jump}) becomes
\begin{align}
\begin{aligned}
&M^--I-\frac{i}{z}\,\sigma_3\,Q_--\sum_{n=1}^{4N_1+2N_2}\left[\frac{\mathop\mathrm{P_{-2}}\limits_{z=\eta_n}M^+}{\left(z-\eta_n\right)^2}+\frac{\mathop\mathrm{Res}\limits_{z=\eta_n}M^+}{z-\eta_n}+\frac{\mathop\mathrm{P_{-2}}\limits_{z=\widehat\eta_n}M^-}{\left(z-\widehat\eta_n\right)^2}+\frac{\mathop\mathrm{Res}\limits_{z=\widehat\eta_n}M^-}{z-\widehat\eta_n}\right]\\[0.05in]
&\qquad =M^+-I-\frac{i}{z}\,\sigma_3\,Q_--\sum_{n=1}^{4N_1+2N_2}\left[\frac{\mathop\mathrm{P_{-2}}\limits_{z=\eta_n}M^+}{\left(z-\eta_n\right)^2}+\frac{\mathop\mathrm{Res}\limits_{z=\eta_n}M^+}{z-\eta_n}+\frac{\mathop\mathrm{P_{-2}}\limits_{z=\widehat\eta_n}M^-}{\left(z-\widehat\eta_n\right)^2}+\frac{\mathop\mathrm{Res}\limits_{z=\widehat\eta_n}M^-}{z-\widehat\eta_n}\right]
-M^+\,J.
\end{aligned}
\end{align}

Applying the Cauchy projectors and the Plemelj's formulae, we give the integral representation for the solution of the Riemann-Hilbert problem in the following theorem:

\begin{theorem}
The solution for the Riemann-Hilbert problem with double poles is given as
\begin{align}\label{RHP-jie-1}
\begin{aligned}
M(x, t; z)=I+\frac{i}{z}\,\sigma_3\,Q_-&+\sum_{n=1}^{4N_1+2N_2}\left[\frac{\mathop\mathrm{P_{-2}}\limits_{z=\eta_n}M^+}{\left(z-\eta_n\right)^2}+\frac{\mathop\mathrm{Res}\limits_{z=\eta_n}M^+}{z-\eta_n}+\frac{\mathop\mathrm{P_{-2}}\limits_{z=\widehat\eta_n}M^-}{\left(z-\widehat\eta_n\right)^2}+\frac{\mathop\mathrm{Res}\limits_{z=\widehat\eta_n}M^-}{z-\widehat\eta_n}\right] \\[0.05in]
&+\frac{1}{2\pi i}\int_\Sigma\frac{M^+(x, t; \zeta)\,J(x, t; \zeta)}{\zeta-z}\,\mathrm{d}\zeta,\quad z\in\mathbb{C}\backslash\Sigma.
\end{aligned}
\end{align}
\end{theorem}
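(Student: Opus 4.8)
The plan is to repeat, almost verbatim, the regularization-and-projection argument that established the simple-pole representation (\ref{RHP-jie}), now carrying along the extra double-pole coefficient terms. The regularized jump relation needed has already been displayed immediately above the statement: from both branches $M^{\pm}$ one subtracts the common analytic background $I+\frac{i}{z}\,\sigma_3\,Q_-$ together with the \emph{complete} singular part attached to every discrete eigenvalue, namely both the order-two coefficient $\mathop\mathrm{P_{-2}}\limits_{z=\eta_n}M^+/(z-\eta_n)^2$ and the residue $\mathop\mathrm{Res}\limits_{z=\eta_n}M^+/(z-\eta_n)$ at each $\eta_n\in Z\cap D_+$, and likewise the pair of terms at each $\widehat\eta_n\in Z\cap D_-$. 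This produces an identity on $\Sigma$ whose left member continues analytically into $D_-$ and whose right member, apart from the inhomogeneous piece $-M^+J$, continues analytically into $D_+$.

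First I would settle the analyticity bookkeeping. By the definition (\ref{RHP-M}) of $M$, only the first column of $M^+$ is meromorphic at the points $\eta_n$ and only the second column of $M^-$ is meromorphic at the points $\widehat\eta_n$; under the present hypothesis these are genuine \emph{double} poles, so the principal part at each such point is exactly the two subtracted terms $\mathop\mathrm{P_{-2}}/(z-\cdot)^2+\mathop\mathrm{Res}/(z-\cdot)$. Consequently the regularized left member is analytic throughout $D_-$ and the regularized right member (discarding $-M^+J$) is analytic throughout $D_+$, precisely as in the simple-pole case. Next I would verify the decay estimates that license the Cauchy projectors $P_{\pm}[f](z)=\frac{1}{2\pi i}\int_\Sigma\frac{f(\zeta)}{\zeta-(z\pm i0)}\,\mathrm{d}\zeta$. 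From Proposition~\ref{Jzjianjin} and the asymptotics (\ref{RHP-Asy}) both regularized members are $O(1/z)$ as $z\to\infty$ and $O(1)$ as $z\to 0$, while Corollary~\ref{Szjianjin} forces $J(x,t;z)=O(1/z)$ as $z\to\infty$ and $J(x,t;z)=O(z)$ as $z\to 0$, so $M^+J$ is integrable along $\Sigma$ and the projectors are well defined.

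Finally I would apply the Cauchy projector that annihilates the $D_-$-analytic member and reproduces the $D_+$-analytic member; since the subtracted pole terms and the background $I+\frac{i}{z}\,\sigma_3\,Q_-$ are explicit and already incorporated, Plemelj's formulae turn the action on $-M^+J$ into the Cauchy integral $\frac{1}{2\pi i}\int_\Sigma\frac{M^+(x,t;\zeta)\,J(x,t;\zeta)}{\zeta-z}\,\mathrm{d}\zeta$, and reassembling all contributions gives exactly (\ref{RHP-jie-1}). The point that will require genuine care — the main obstacle — is confirming that subtracting the \emph{full} order-two principal part (rather than merely a residue, as in the simple-pole case) is the correct amount to leave a function that both is analytic off $\Sigma$ in the proper half-domain and retains the decay $O(1/z)$ at infinity and boundedness at $z=0$; the rational terms $\mathop\mathrm{P_{-2}}/(z-\eta_n)^2$ must be checked to be annihilated or reproduced by the projectors in the same way as the first-order terms, and the background $\frac{i}{z}\,\sigma_3\,Q_-$ must be verified to cancel the $z\to 0$ singularity of $M^{\pm}$ so that Plemelj applies without boundary obstruction. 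Once this compatibility is in place, the projector computation is identical to the one already carried out for (\ref{RHP-jie}).
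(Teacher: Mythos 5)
Your proposal is correct and follows essentially the same route as the paper: the paper likewise regularizes the jump condition (\ref{RHP-Jump}) by subtracting the background $I+\frac{i}{z}\sigma_3 Q_-$ together with the full second-order principal parts $\mathop\mathrm{P_{-2}}/(z-\cdot)^2+\mathop\mathrm{Res}/(z-\cdot)$ at each $\eta_n$ and $\widehat\eta_n$, and then applies the Cauchy projectors and Plemelj's formulae exactly as in the simple-pole case. Your additional checks on the analyticity bookkeeping and on the $z\to\infty$, $z\to 0$ decay of the regularized members and of $J$ are precisely the points the paper invokes (via Proposition~\ref{Jzjianjin} and Corollary~\ref{Szjianjin}) to justify that step.
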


\subsubsection{Closed system for the solution of RHP}

To further express the solution for the Riemann-Hilbert problem, one needs to evaluate the parts of $\mathop\mathrm{P_{-2}}(\cdot)$ and Res $(\cdot)$ appearing in Eq.~(\ref{RHP-jie-1}). By Eq.~(\ref{erjieliu}), one can obtain
\begin{align}
\begin{aligned}
\mathop\mathrm{P_{-2}}\limits_{z=\eta_n}\left[\frac{\mu_{+1}(x, t; z)}{s_{11}(z)}\right]&=A[\eta_n]\,\mathrm{e}^{-2i\theta(x, t; \eta_n)}\mu_{-2}(x, t; \eta_n),\\[0.05in]
\mathop\mathrm{P_{-2}}\limits_{z=\widehat\eta_n}\left[\frac{\mu_{+2}(x, t; z)}{s_{22}(z)}\right]&=A[\widehat\eta_n]\,\mathrm{e}^{2i\theta(x, t; \widehat\eta_n)}\mu_{-1}(x, t; \widehat\eta_n),\\[0.05in]
\mathop\mathrm{Res}\limits_{z=\eta_n}\left[\frac{\mu_{+1}(x, t; z)}{s_{11}(z)}\right]&=A[\eta_n]\,\mathrm{e}^{-2i\theta(x, t; \eta_n)}\left\{\mu_{-2}'(x, t; \eta_n)+\Big[B[\eta_n]-2\,i\,\theta'(x, t; \eta_n)\Big]\mu_{-2}(x, t; \eta_n)\right\}, \\[0.05in]
\mathop\mathrm{Res}\limits_{z=\widehat\eta_n}\left[\frac{\mu_{+2}(x, t; z)}{s_{22}(z)}\right]&=A[\widehat\eta_n]\,\mathrm{e}^{2i\theta(x, t; \widehat\eta_n)}\left\{\mu_{-1}'(x, t; \widehat\eta_n)+\Big[B[\widehat\eta_n]+2\,i\,\theta'(x, t; \widehat\eta_n)\Big]\mu_{-1}(x, t; \widehat\eta_n)\right\}.
\end{aligned}
\end{align}
Therefore, the parts of $\mathop\mathrm{P_{-2}}(\cdot)$ and Res $(\cdot)$ appearing in Eq.~(\ref{RHP-jie-1}) can be derived as
\begin{gather}\label{iznal}
\begin{aligned}
&\frac{\mathop\mathrm{P_{-2}}\limits_{z=\eta_n}M^+}{\left(z-\eta_n\right)^2}+\frac{\mathop\mathrm{Res}\limits_{z=\eta_n}M^+}{z-\eta_n}+\frac{\mathop\mathrm{P_{-2}}\limits_{z=\widehat\eta_n}M^-}{\left(z-\widehat\eta_n\right)^2}+\frac{\mathop\mathrm{Res}\limits_{z=\widehat\eta_n}M^-}{z-\widehat\eta_n}\\
&\quad=\left(C_n(z)\left[\mu_{-2}'(\eta_n)+\left(D_n+\frac{1}{z-\eta_n}\right)\mu_{-2}(\eta_n)\right],
\,\widehat C_n(z)\left[\mu_{-1}'(\widehat\eta_n)+\left(\widehat D_n+\frac{1}{z-\widehat\eta_n}\right)\mu_{-1}(\widehat\eta_n)\right]\right),
\end{aligned}
\end{gather}
where
\begin{gather}
C_n(z)=\frac{A[\eta_n]}{z-\eta_n}\,\mathrm{e}^{-2i\theta(\eta_n)}, \,\, D_n=B[\eta_n]-2\,i\,\theta'(\eta_n), \,\, \widehat C_n(z)=\frac{A[\widehat\eta_n]}{z-\widehat\eta_n}\,\mathrm{e}^{2i\theta(\widehat\eta_n)}, \,\, \widehat D_n=B[\widehat\eta_n]+2\,i\,\theta'(\widehat\eta_n).
\end{gather}
The next task is to evaluate $\mu_{-2}'(\eta_n), \mu_{-2}(\eta_n), \mu_{-1}'(\eta_n)$, and  $\mu_{-1}(\eta_n)$. It follows from Eq.~(\ref{RHP-jie-1}) with Eq.~(\ref{iznal}) that the second column of Eq.~(\ref{RHP-jie-1}) yields
\begin{align}\label{usb-1}
\mu_{-2}(z)=
\begin{bmatrix}
\frac{iq_-}{z}\\[0.05in]1
\end{bmatrix}
+\sum_{n=1}^{4N_1+2N_2}\widehat C_n(z)\left[\mu_{-1}'(\widehat\eta_n)+\left(\widehat D_n+\frac{1}{z-\widehat\eta_n}\right)\mu_{-1}(\widehat\eta_n)\right]+\frac{1}{2\pi i}\int_\Sigma\frac{\left(M^+J\right)_2(\zeta)}{\zeta-z}\,\mathrm{d}\zeta.
\end{align}
Taking the first-order derivative of $\mu_{-2}(z)$ with respect to $z$, it becomes
\begin{align}\label{usb-2}
\mu_{-2}'(z)=-
\begin{bmatrix}
\frac{iq_-}{z^2}\\[0.05in]0
\end{bmatrix}
-\sum_{n=1}^{4N_1+2N_2}\frac{\widehat C_n(z)}{z-\widehat\eta_n}\left[\mu_{-1}'(\widehat\eta_n)+\left(\widehat D_n+\frac{2}{z-\widehat\eta_n}\right)\mu_{-1}(\widehat\eta_n)\right]+\frac{1}{2\pi i}\int_\Sigma\frac{\left(M^+J\right)_2(\zeta)}{\left(\zeta-z\right)^2}\,\mathrm{d}\zeta.
\end{align}
Taking the first-order derivative of both sides in Eq.~(\ref{kuaile-1}) with respect to $z$, one gives
\begin{align}\label{kuaile-2}
\mu_{-2}'(z)=-\frac{iq_-}{z^2}\,\mu_{-1}\left(-\frac{q_0^2}{z}\right)+\frac{iq_-q_0^2}{z^3}\,\mu_{-1}'\left(-\frac{q_0^2}{z}\right).
\end{align}
Substituting Eqs.~(\ref{kuaile-1}) and (\ref{kuaile-2}) into Eqs.~(\ref{usb-1}) and (\ref{usb-2}), respectively, and letting $z=\eta_k, k=1, 2, \cdots, 4N_1+2N_2$, we obtain a linear system of $8N_1+4N_2$ equations with the $8N_1+4N_2$ unknowns $\mu_{-1}(\widehat\eta_n), \mu_{-1}'(\widehat\eta_n), n=1, 2, \cdots, 4N_1+2N_2$ in the form
\begin{gather}\label{xiama}
\begin{gathered}
\begin{aligned}
&\sum_{n=1}^{4N_1+2N_2}\widehat C_n(\eta_k)\,\mu_{-1}'(\widehat\eta_n)+\left[\widehat C_n(\eta_k)\left(\widehat D_n+\frac{1}{\eta_k-\widehat\eta_n}\right)-\frac{iq_-}{\eta_k}\,\delta_{k, n}\right]\mu_{-1}(\widehat\eta_n)\\[0.05in]
&\qquad\qquad =-\begin{bmatrix}
\frac{iq_-}{\eta_k}\\[0.05in]1
\end{bmatrix}
-\frac{1}{2\pi i}\int_\Sigma\frac{\left(M^+J\right)_2(\zeta)}{\zeta-\eta_k}\,\mathrm{d}\zeta,
\end{aligned}  \\[0.1in]
\begin{aligned}
&\sum_{n=1}^{4N_1+2N_2}\left(\frac{\widehat C_n(\eta_k)}{\eta_k-\widehat\eta_n}+\frac{iq_-q_0^2}{\eta_k^3}\,\delta_{k, n}\right)\mu_{-1}'(\widehat\eta_n)+\left[\frac{\widehat C_n(\eta_k)}{\eta_k-\widehat\eta_n}\left(\widehat D_n+\frac{2}{\eta_k-\widehat\eta_n}\right)-\frac{iq_-}{\eta_k^2}\,\delta_{k, n}\right]\mu_{-1}(\widehat\eta_n)\\[0.05in]
&\qquad \qquad =-\begin{bmatrix}
\frac{iq_-}{\eta_k^2}\\[0.05in]0
\end{bmatrix}
+\frac{1}{2\pi i}\int_\Sigma\frac{\left(M^+J\right)_2(\zeta)}{\left(\zeta-\eta_k\right)^2}\,\mathrm{d}\zeta.
\end{aligned}
\end{gathered}
\end{gather}
Solving the linear system of equations and using Eqs.~(\ref{kuaile-1}, \ref{iznal}, \ref{kuaile-2}) yields  a closed integral system for the solution $M(x, t; z)$ in Eq.~(\ref{RHP-jie-1}) of the Riemann-Hilbert problem.

\subsubsection{Reconstruction formula for the potential}

From Eqs.~(\ref{RHP-jie-1}, \ref{iznal}), the asymptotic behavior of $M(x, t; z)$ is obtained as follows.
\begin{gather}
M(x, t; z)=I+\frac{1}{z}\,M^{(1)}(x, t; z)+O\left(\frac{1}{z^2}\right), \quad z\to\infty,
\end{gather}
where
\begin{align}\no
\begin{aligned}
&M^{(1)}(x, t; z)=i\,\sigma_3\,Q_--\frac{1}{2\pi i}\int_\Sigma \left(M^+J\right)(\zeta)\,\mathrm{d}\zeta   \\[0.05in]
&\qquad +\sum_{n=1}^{4N_1+2N_2}\left[A[\eta_n]\,\mathrm{e}^{-2i\theta(\eta_n)}\Big(\mu_{-2}'(\eta_n)+D_n\,\mu_{-2}(\widehat\eta_n)\Big), \, A[\widehat\eta_n]\,\mathrm{e}^{2i\theta(\widehat\eta_n)}\left(\mu_{-1}'(\widehat\eta_n)+\widehat D_n\,\mu_{-1}(\widehat\eta_n)\right)\right].
\end{aligned}
\end{align}
Substituting $M(x, t; z)\mathrm{e}^{-i\theta(x, t; z)\sigma_3}$ into Eq.~(\ref{lax-x}) and comparing the coefficient, one has the following theorem:

\begin{theorem}
The reconstruction formula for the potential with double poles of the focusing mKdV equation with NZBCs is given by
\begin{align}\label{jinyew}
q(x, t)=q_--i\sum_{n=1}^{4N_1+2N_2}A[\widehat\eta_n]\,\mathrm{e}^{2i\theta(\widehat\eta_n)}\left(\mu_{-11}'(\widehat\eta_n)+\widehat D_n\,\mu_{-11}(\widehat\eta_n)\right)+\frac{1}{2\pi}\int_\Sigma\left(M^+J\right)_{12}(\zeta)\,\mathrm{d}\zeta,
\end{align}
where $\mu_{-11}'(\widehat\eta_n)$ and $\mu_{-11}(\widehat\eta_n)$, $n=1, 2, \cdots, 4N_1+2N_2$ are given by Eq.~(\ref{xiama}).
\end{theorem}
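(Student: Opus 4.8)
The plan is to exploit that the Riemann-Hilbert solution $M(x,t;z)$, once dressed by the oscillation $e^{i\theta\sigma_3}$, reconstructs an eigenfunction of the $x$-part of the Lax pair; substituting its large-$z$ expansion into that equation and matching powers of $z$ then isolates the potential. This is the exact analogue of the simple-pole reconstruction in Eq.~(\ref{recons}), the only genuinely new ingredient being the richer residue structure that has already been packaged into $M^{(1)}$ through Eq.~(\ref{iznal}).

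First I would use that $M(x,t;z)\,\mathrm{e}^{i\theta(x,t;z)\sigma_3}$ solves Eq.~(\ref{lax-x}). Writing $k=\frac12(z-q_0^2/z)$, $\lambda=\frac12(z+q_0^2/z)$ via the inverse map (\ref{kz}), and using $\theta_x=\lambda$, this yields the same differential identity employed in the simple-pole case,
\begin{align}\no
M_x+M\left(\frac{i\sigma_3}{2}\,z+\frac{iq_0^2\,\sigma_3}{2\,z}\right)=\left(\frac{i\sigma_3}{2}\,z-\frac{iq_0^2\,\sigma_3}{2\,z}+Q\right)M,
\end{align}
which is precisely Eq.~(\ref{haha}) and is insensitive to whether the poles are simple or double. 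Next I would insert $M=I+\frac1z\,M^{(1)}+O\!\left(\frac{1}{z^2}\right)$ as $z\to\infty$, with $M^{(1)}$ the matrix displayed immediately before the theorem, and compare the coefficient of $z^0$. The $z$-order and $1/z$-order terms cancel automatically from the leading $\frac{i}{2}z\,\sigma_3$ pieces, while the $z^0$ comparison gives $\frac{i}{2}M^{(1)}\sigma_3=\frac{i}{2}\sigma_3 M^{(1)}+Q$, i.e. $Q=\frac{i}{2}\left[M^{(1)},\sigma_3\right]$.

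Reading off the off-diagonal $(1,2)$ entry (recalling $\sigma=-1$, so $Q_{12}=q$ and the $(1,2)$ entry of $[M^{(1)},\sigma_3]$ is $-2M^{(1)}_{12}$) yields $q=-i\,M^{(1)}_{12}$. Substituting the explicit $(1,2)$ entry of $M^{(1)}$, namely $iq_-+\sum_{n}A[\widehat\eta_n]\,\mathrm{e}^{2i\theta(\widehat\eta_n)}\!\left(\mu_{-11}'(\widehat\eta_n)+\widehat D_n\,\mu_{-11}(\widehat\eta_n)\right)-\frac{1}{2\pi i}\int_\Sigma(M^+J)_{12}\,\mathrm{d}\zeta$, and simplifying $-i\cdot iq_-=q_-$ together with $-i\cdot\!\left(-\tfrac{1}{2\pi i}\right)=\tfrac{1}{2\pi}$, reproduces Eq.~(\ref{jinyew}) term by term.

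The step requiring the most care is confirming that the double-pole data enter $M^{(1)}$ correctly. Each packaged contribution in Eq.~(\ref{iznal}) carries an outer factor $C_n(z)\sim A[\eta_n]\mathrm{e}^{-2i\theta(\eta_n)}/z$ (and its hatted analogue) together with an \emph{internal} $1/(z-\eta_n)$ inside the bracket, so the large-$z$ expansion must be organized so that only $C_n(z)\!\left[\mu_{-2}'(\eta_n)+D_n\mu_{-2}(\eta_n)\right]$ survives at order $1/z$, whereas the second-order pieces $\mathop{\mathrm{P_{-2}}}(\cdot)/(z-\eta_n)^2$ and the inner $1/(z-\eta_n)$ contribute only at $O\!\left(\frac{1}{z^2}\right)$. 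Once this bookkeeping is verified, the remainder is the identical algebraic comparison as in the simple-pole proof, so no analytic obstacle arises.
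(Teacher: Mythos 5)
Your proposal is correct and follows exactly the paper's route: the paper likewise obtains the large-$z$ expansion $M=I+M^{(1)}/z+O(1/z^2)$ with the stated $M^{(1)}$, substitutes $M\,\mathrm{e}^{i\theta\sigma_3}$ into Eq.~(\ref{lax-x}), and reads off the potential from the $z^0$ coefficient, just as in the simple-pole case of Eq.~(\ref{recons}). Your extra bookkeeping — checking that the $\mathop{\mathrm{P_{-2}}}$ terms and the inner $1/(z-\widehat\eta_n)$ factor in Eq.~(\ref{iznal}) only contribute at $O(1/z^2)$, and the explicit identity $Q=\frac{i}{2}\left[M^{(1)},\sigma_3\right]$ — is sound and merely makes explicit what the paper compresses into ``comparing the coefficient.''
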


\subsubsection{Trace formula and theta condition}

Note that $\eta_n$ and $\widehat\eta_n$ are double zeros of the scattering coefficients $s_{11}(z)$ and $s_{22}(z)$, respectively. Introduce new functions
\begin{align}\label{ncuka}
\begin{aligned}
\beta^+(z)=s_{11}(z)\prod_{n=1}^{4N_1+2N_2}\left(\frac{z-\widehat\eta_n}{z-\eta_n}\right)^2, \quad
\beta^-(z)=s_{22}(z)\prod_{n=1}^{4N_1+2N_2}\left(\frac{z-\eta_n}{z-\widehat\eta_n}\right)^2.
\end{aligned}
\end{align}
One can find that $\beta^+(z)$ is analytic  and has no zeros in $D_+$, while $\beta^-(z)$ is analytic  and has no zeros in $D_-$. Their asymptotic behaviors are both $O(1)$ as $z\to\infty$. As same as the case of simple poles, by applying Cauchy projectors and the Plemelj's formulae, $\beta^{\pm}(z)$ can be solved as
\begin{align}\no
\log\beta^{\pm}(z)=\mp\frac{1}{2\pi i}\int_\Sigma\frac{\log\left[1-\rho(\zeta)\,\tilde\rho(\zeta)\right]}{\zeta-z}\,\mathrm{d}\zeta, \quad z\in D^{\pm}.
\end{align}
Using Eq.~(\ref{ncuka}), one implies the trace formulae as
\begin{align}\no
s_{11}(z)&=\exp\left(-\frac{1}{2\pi i}\int_\Sigma\frac{\log\left[1-\rho(\zeta)\,\tilde\rho(\zeta)\right]}{\zeta-z}\,\mathrm{d}\zeta\right)\prod_{n=1}^{4N_1+2N_2}
\left(\frac{z-\eta_n}{z-\widehat\eta_n}\right)^2,\\[0.05in]
\no s_{22}(z)&=\exp\left(\frac{1}{2\pi i}\int_\Sigma\frac{\log\left[1-\rho(\zeta)\,\tilde\rho(\zeta)\right]}{\zeta-z}\,\mathrm{d}\zeta\right)\prod_{n=1}^{4N_1+2N_2}\left(\frac{z-\widehat\eta_n}{z-\eta_n}\right)^2.
\end{align}
Furthermore, the theta condition can be derived as $z\to 0$. Note that
\begin{align}\no
\prod_{n=1}^{4N_1+2N_2}\left(\frac{z-\eta_n}{z-\widehat\eta_n}\right)^2=
\left[\prod_{n=1}^{N_1}\frac{\left(z-z_n\right)\left(z+z_n^*\right)\left(z+\dfrac{q_0^2}{z_n^*}\right)
\left(z-\dfrac{q_0^2}{z_n}\right)}{\left(z-z_n^*\right)\left(z+z_n\right)\left(z+\dfrac{q_0^2}{z_n^*}\right)
\left(z-\dfrac{q_0^2}{z_n^*}\right)}\,\prod_{n=1}^{N_2}\frac{\left(z-iw_n\right)\left(z+\dfrac{iq_0^2}{w_n}\right)}
{\left(z+iw_n\right)\left(z-\dfrac{iq_0^2}{w_n}\right)}\right]^2.
\end{align}
Then
\begin{align}\no
\prod_{n=1}^{4N_1+2N_2}\left(\frac{z-\eta_n}{z-\widehat\eta_n}\right)^2\to 1 \quad \mathrm{as}\quad z\to 0.
\end{align}
Thus, we obtain the asymptotic phase difference as
\begin{align}
\mathrm{arg}\,\frac{q_+}{q_-}=\frac{1}{2\pi}\int_\Sigma\frac{\log\left[1-\rho(\zeta)\,\tilde\rho(\zeta)\right]}{\zeta}\,\mathrm{d}\zeta,
\end{align}
which is same as the case of simple poles. Thus, $\mathrm{arg}\,\frac{q_+}{q_-}=0$, i.e., $q_+=q_-$.

\subsubsection{Double-pole multi-breather-soliton solutions}

In this subsection, we present the reflectionless potential of the focusing mKdV equation with NZBCs and double poles.

 Let $\rho(z)=\tilde\rho(z)=0$. Then $J(x, t; z)=0$. It follows from Eq.~(\ref{xiama}) that one can obtain a linear system of $8N_1+4N_2$  equations as
\begin{gather}\label{xiama2}
\begin{gathered}
\sum_{n=1}^{4N_1+2N_2}\widehat C_n(\eta_k)\,\mu_{-11}'(\widehat\eta_n)+\left[\widehat C_n(\eta_k)\left(\widehat D_n+\frac{1}{\eta_k-\widehat\eta_n}\right)-\frac{iq_-}{\eta_k}\,\delta_{k, n}\right]\mu_{-11}(\widehat\eta_n)
=-\frac{iq_-}{\eta_k},  \\
\sum_{n=1}^{4N_1+2N_2}\!\!\left(\frac{\widehat C_n(\eta_k)}{\eta_k-\widehat\eta_n}+\frac{iq_-q_0^2}{\eta_k^3}\,\delta_{k, n}\right)\mu_{-11}'(\widehat\eta_n)+\left[\frac{\widehat C_n(\eta_k)}{\eta_k-\widehat\eta_n}\left(\widehat D_n+\frac{2}{\eta_k-\widehat\eta_n}\right)-\frac{iq_-}{\eta_k^2}\,\delta_{k, n}\right]\mu_{-11}(\widehat\eta_n)=-\frac{iq_-}{\eta_k^2},
\end{gathered}
\end{gather}
which can be rewritten in the matrix form: $H\gamma=\beta,$ where
\bee\begin{array}{l} \no
\displaystyle
H=\begin{bmatrix}
H^{(11)}&H^{(12)}\\
H^{(21)}&H^{(22)}
\end{bmatrix},\quad
\gamma=
\begin{bmatrix}
\gamma^{(1)}\\
\gamma^{(2)}
\end{bmatrix},\quad
\beta=
\begin{bmatrix}
\beta^{(1)}\\
\beta^{(2)}
\end{bmatrix}, \v\\
\displaystyle H^{(i  j)}=\left(h^{(i  j)}_{kn}\right)_{\left(4N_1+2N_2\right)\times \left(4N_1+2N_2\right)}, \quad \gamma^{(j)}=\left(\gamma^{(j)}_n\right)_{4N_1+2N_2}, \quad \beta^{(i)}=\left(\beta^{(i)}_k\right)_{4N_1+2N_2}, \quad i, j=1, 2,     \v\\
\displaystyle h^{(11)}_{kn}=\widehat C_n(\eta_k)\left(\widehat D_n+\frac{1}{\eta_k-\widehat\eta_n}\right)-\frac{iq_-}{\eta_k}\,\delta_{k, n}, \quad h^{(12)}_{kn}=\widehat C_n(\eta_k), \quad     \v \\
\displaystyle h^{(21)}_{kn}=\frac{\widehat C_n(\eta_k)}{\eta_k-\widehat\eta_n}\left(\widehat D_n+\frac{2}{\eta_k-\widehat\eta_n}\right)-\frac{iq_-}{\eta_k^2}\,\delta_{k, n}, \quad h^{(22)}_{kn}=\frac{\widehat C_n(\eta_k)}{\eta_k-\widehat\eta_n}+\frac{iq_-q_0^2}{\eta_k^3}\,\delta_{k, n},    \v \\
\displaystyle \gamma^{(1)}_n=\mu_{-11}(\widehat\eta_n), \quad \gamma^{(2)}_n=\widehat\mu_{-11}(\widehat\eta_n), \quad \beta^{(1)}_k=-\frac{iq_-}{\eta_k}, \quad \beta^{(2)}_k=-\frac{iq_-}{\eta_k^2}.
\end{array}
\ene
Then one can obtain the solution as $\gamma=H^{-1}\beta$. We have the following theorem from Eq.~(\ref{jinyew}):

\begin{theorem}
The reflectionless potential (i.e., the solution of the focusing mKdV equation with NZBCs and double poles) is given in the determinantal form
\begin{gather}\label{solu2}
q(x, t)=q_-+
\frac{\mathrm{det}
\begin{bmatrix}
H&\beta \vspace{0.06in}\\
\alpha^T&0
\end{bmatrix}}
{\mathrm{det}\,H}\,i,
\end{gather}
where
\begin{gather}\no
\alpha=\begin{bmatrix}
\alpha^{(1)}\\
\alpha^{(2)}
\end{bmatrix},\quad
\alpha^{(j)}=\left(\alpha^{(j)}_n\right)_{4N_1+2N_2}, \, j=1,2, \quad \alpha^{(1)}_n=A[\widehat\eta_n]\,\mathrm{e}^{2i\theta(\widehat\eta_n)}\widehat D_n,  \quad \alpha^{(2)}_n=A[\widehat\eta_n]\,\mathrm{e}^{2i\theta(\widehat\eta_n)}.
\end{gather}
\end{theorem}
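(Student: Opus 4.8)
The plan is to specialize the reconstruction formula (\ref{jinyew}) to the reflectionless regime and then recast the resulting finite sum as a single bordered determinant via Cramer's rule, in exact parallel with the simple-pole Theorem that produced (\ref{solu1}). First I would set $\rho(z)=\tilde\rho(z)=0$, so that $J(x,t;z)\equiv 0$ and every contour integral over $\Sigma$ in (\ref{jinyew}) and in the linear system (\ref{xiama}) drops out. The reconstruction formula then collapses to the purely discrete sum
\[
q(x,t)=q_- - i\sum_{n=1}^{4N_1+2N_2}A[\widehat\eta_n]\,\mathrm{e}^{2i\theta(\widehat\eta_n)}\bigl(\mu_{-11}'(\widehat\eta_n)+\widehat D_n\,\mu_{-11}(\widehat\eta_n)\bigr).
\]
Here I would note that in the vector system (\ref{xiama}) every coefficient multiplies the $2$-vectors $\mu_{-1}(\widehat\eta_n)$ and $\mu_{-1}'(\widehat\eta_n)$ as a scalar, so the first and second entries decouple; since the reconstruction uses only the first entries, the relevant unknowns are exactly $\mu_{-11}(\widehat\eta_n)$ and $\mu_{-11}'(\widehat\eta_n)$, which solve the scalar system (\ref{xiama2}) with right-hand sides $-iq_-/\eta_k$ and $-iq_-/\eta_k^2$.

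Next I would identify the displayed sum with the inner product $-i\,\alpha^T\gamma$. By the definitions $\alpha^{(1)}_n=A[\widehat\eta_n]\mathrm{e}^{2i\theta(\widehat\eta_n)}\widehat D_n$ and $\alpha^{(2)}_n=A[\widehat\eta_n]\mathrm{e}^{2i\theta(\widehat\eta_n)}$, together with $\gamma^{(1)}_n=\mu_{-11}(\widehat\eta_n)$ and $\gamma^{(2)}_n=\mu_{-11}'(\widehat\eta_n)$, the sum equals $\alpha^{(1)T}\gamma^{(1)}+\alpha^{(2)T}\gamma^{(2)}=\alpha^T\gamma$, so $q-q_-=-i\,\alpha^T\gamma$. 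Writing the scalar system (\ref{xiama2}) compactly as $H\gamma=\beta$ and assuming $H$ invertible gives $\gamma=H^{-1}\beta$, whence $q-q_-=-i\,\alpha^T H^{-1}\beta$.

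Finally I would invoke the bordered-determinant (Schur-complement) identity: for an invertible block $H$,
\[
\det\begin{bmatrix}H & \beta\\ \alpha^T & 0\end{bmatrix}=\det H\cdot\bigl(0-\alpha^T H^{-1}\beta\bigr)=-\det H\;\alpha^T H^{-1}\beta,
\]
the last step using that $\alpha^T H^{-1}\beta$ is a scalar. Dividing by $\det H$ gives $\det[\,\cdots]/\det H=-\alpha^T H^{-1}\beta=-\alpha^T\gamma$, so $i\,\det[\,\cdots]/\det H=-i\,\alpha^T\gamma=q-q_-$, which is exactly (\ref{solu2}).

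The genuinely routine part is the determinant identity; the step that demands care is the bookkeeping that turns the vector system (\ref{xiama}) into the scalar system (\ref{xiama2}) — in particular confirming the component decoupling and matching the notation $\gamma^{(2)}_n$ with $\mu_{-11}'(\widehat\eta_n)$. I would also flag two points not settled by the algebra alone: the generic invertibility of $H$ (equivalently, solvability of the RHP under the no--spectral--singularity hypothesis), and the reality of the resulting $q$, which would have to be checked separately from the symmetry relations for $A[\cdot]$ and $B[\cdot]$ in the preceding corollary. These are the natural obstacles, though neither blocks the determinantal representation itself.
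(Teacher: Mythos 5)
Your proposal is correct and follows essentially the same route as the paper: set $\rho=\tilde\rho=0$ so the jump vanishes, reduce (\ref{xiama}) to the scalar system (\ref{xiama2}) written as $H\gamma=\beta$, substitute $\gamma=H^{-1}\beta$ into the reconstruction formula (\ref{jinyew}), and package $-i\,\alpha^{T}H^{-1}\beta$ as the bordered-determinant ratio. The only differences are that you make explicit what the paper leaves implicit (the component decoupling, the Schur-complement identity, and the caveats about $\det H\neq 0$ and reality of $q$), and you correctly read the paper's $\gamma^{(2)}_n=\widehat\mu_{-11}(\widehat\eta_n)$ as $\mu_{-11}'(\widehat\eta_n)$.
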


The reflectionless potential with double poles (\ref{solu2}) has the following free parameters: $N_1, N_2, q_-, z_n, A[z_n], B[z_n]$,  $w_m,  A[iw_m], B[iw_m], n=1, 2, \cdots, N_1, m=1, 2, \cdots, N_2$, and exhibits the distinct wave structures for different parameters:

\begin{itemize}
 \item {} As $N_1=0,\, N_2\not=0$, it displays an $N_2$-(bright, dark)-soliton solution;

  \item {} As $N_1\not=0,\, N_2=0$, it stands for an $N_1$-(breather, breather) solution;

  \item{} As $N_1\not=0,\, N_2\not=0$, it is a general $N_1$-(breather, breather)-$N_2$-(bright, dark)-soliton solution.

  \end{itemize}

\begin{figure}[!t]
\centering
\includegraphics[scale=0.42]{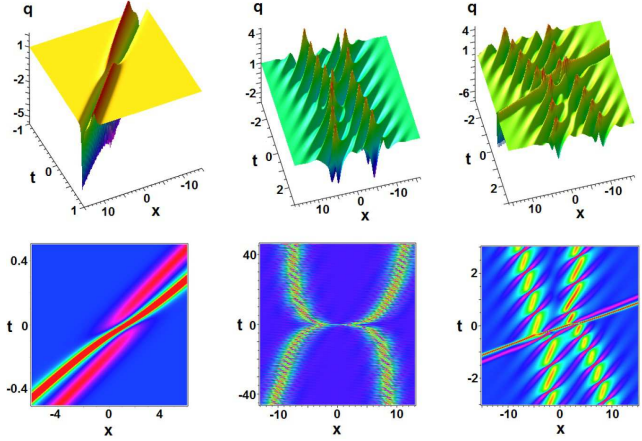}
\caption{Double-pole soliton solutions of the fousing mKdV equation with NZBCs $q_{\pm}=1$. Left: interaction of bright-dark solitons with $N_1=0, N_2=1,  w_1=3, A[iw_1]=1, B[iw_1]=i$. Middle: interaction of breather-breather solutions with $N_1=1, N_2=0,  z_1=1+1.1\,i, A[z_1]=B[z_1]=i$. Right: interaction of (breather, breather)-(bright, dark) solutions with $N_1=N_2=1, z_1=1+\,i, A[z_1]=B[z_1]=i, w_1=3, A[iw_1]=1, B[iw_1]=\frac{1}{2}\,i$.}
\label{double}
\end{figure}

  In the following, we show some special wave structures for the solution (\ref{solu2}) with double poles by choosing special parameters
  as follows:
 \begin{itemize}

 \item {} As $N_1=0,\, N_2=1$, Fig.~\ref{double} (left) displays the interaction of  a dark soliton and  a bright soliton with the same velocity
  of the focusing mKdV equation with NZBCs $q_{\pm}=1$, obtained only by pure imaginary discrete spectral points.

 \item {} As $N_0=1,\, N_2=0$, Fig.~\ref{double} (middle) exhibits the interaction of two breather solutions with the same velocity of the focusing mKdV equation with NZBCs $q_{\pm}=1$, obtained only via pairs of conjugate complex discrete spectral points.

 \item{} As $N_1=N_2=1$,  Fig.~\ref{double} (right) shows the interaction of the bright-dark soliton (see Fig.~\ref{double} (left))  and  two-breather solution (see Fig.~\ref{double} (middle)) of the focusing mKdV equation with NZBCs $q_{\pm}=1$, found by using both pure imaginary discrete spectral points and pairs of conjugate complex discrete spectral points.
  \end{itemize}

\section{The defocusing mKdV equation with NZBCs}

In this section, we focus on the study of the IST for the defocusing mKdV  equation (\ref{mKdV}) with NZBCs.

\subsection{Direct scattering  problem with NZBCs}

\subsubsection{Riemann surface and uniformization variable}

Considering the asymptotic scattering problem ($x\to \pm\infty$) of the defocusing Lax pair (\ref{lax-x}) and (\ref{lax-t}):
\begin{align}\left\{
\begin{aligned}
\varPhi_x&=X_{\pm}\varPhi, & X_{\pm}(k)&=ik\sigma_3+Q_{\pm} , \\[0.04in]
\varPhi_t&=T_{\pm}\varPhi, &T_{\pm}(k)&=\left(4k^2+2q_0^2\right)X_{\pm}(k),
\end{aligned}\right.
\end{align}
one can obtain the fundamental matrix solution as
\begin{align}
\varPhi^{bg}(x, t; k)=
\left\{
\begin{alignedat}{2}
&E_{\pm}(k)\,\mathrm{e}^{i\theta(x, t, k)\sigma_3}, &&k\ne\pm q_0,\\[0.04in]
&I+\left(x+6\,q_0^2\,t\right)X_{\pm}(k),&\quad&k=\pm q_0,
\end{alignedat}\right.
\end{align}
where
\begin{align*}
Q_{\pm}=
\begin{bmatrix}
0&q_{\pm}\\
q_{\pm}&0
\end{bmatrix},\quad
E_{\pm}(k)=
\begin{bmatrix}
1&\frac{iq_{\pm}}{k+\lambda}\\-\frac{iq_{\pm}}{k+\lambda}&1
\end{bmatrix},\quad
\theta(x, t; k)=\lambda(k)\left[x+\left(4k^2+2q_0^2\right)t\right], \quad \lambda(k)=\sqrt{k^2-q_0^2}.
\end{align*}

Since $\lambda(k)$ is doubly branched, one needs to introduce the two-sheeted Riemann surface so that $\lambda(k)$ is single-valued on this surface, where the branch points are $k=\pm q_0$ (see, e.g., Refs.\cite{zab, Faddeev1987, Prinari2006,Biondini2016a,Demontis2013}).  Let $k\mp q_0=r_{\mp}\,\mathrm{e}^{i\theta_{\mp}}$, then we have $\lambda_1(k)=\sqrt{r_-r_+}\,\mathrm{e}^{i\frac{\theta_-+\theta_+}{2}}$ on Sheet-I and $\lambda_2(k)=-\lambda_1(k)$ on Sheet-II. By restricting the arguments $-\pi\le\theta_{\mp}<\pi$, two single-valued branches are posed.  With these convention, the branch cut is determined as the segment $\left[-q_0,q_0\right]$. The two-sheeted Riemann surface is obtained by gluing the Sheet-I and Sheet-II along the branch cut. The region where Im $\lambda(k)>0$ is the UHP on the Sheet-I and the LHP on Sheet-II. The region where Im $\lambda(k)<0$ is the LHP on the Sheet-I and UHP on Sheet-II. Besides, $\lambda(k)$ is real-valued on $\left(-\infty, -q_0\right]\cup \left[q_0, +\infty\right)$. Introduce the uniformization variable $z$  defined by:
$z=k+\lambda=k+\sqrt{k^2-q_0^2}$
with the inverse mapping given by
$k=\frac{1}{2}\left(z+\frac{q_0^2}{z}\right),\quad \lambda=\frac{1}{2}\left(z-\frac{q_0^2}{z}\right).$

\begin{figure}[!t]
\centering
\includegraphics[scale=0.42]{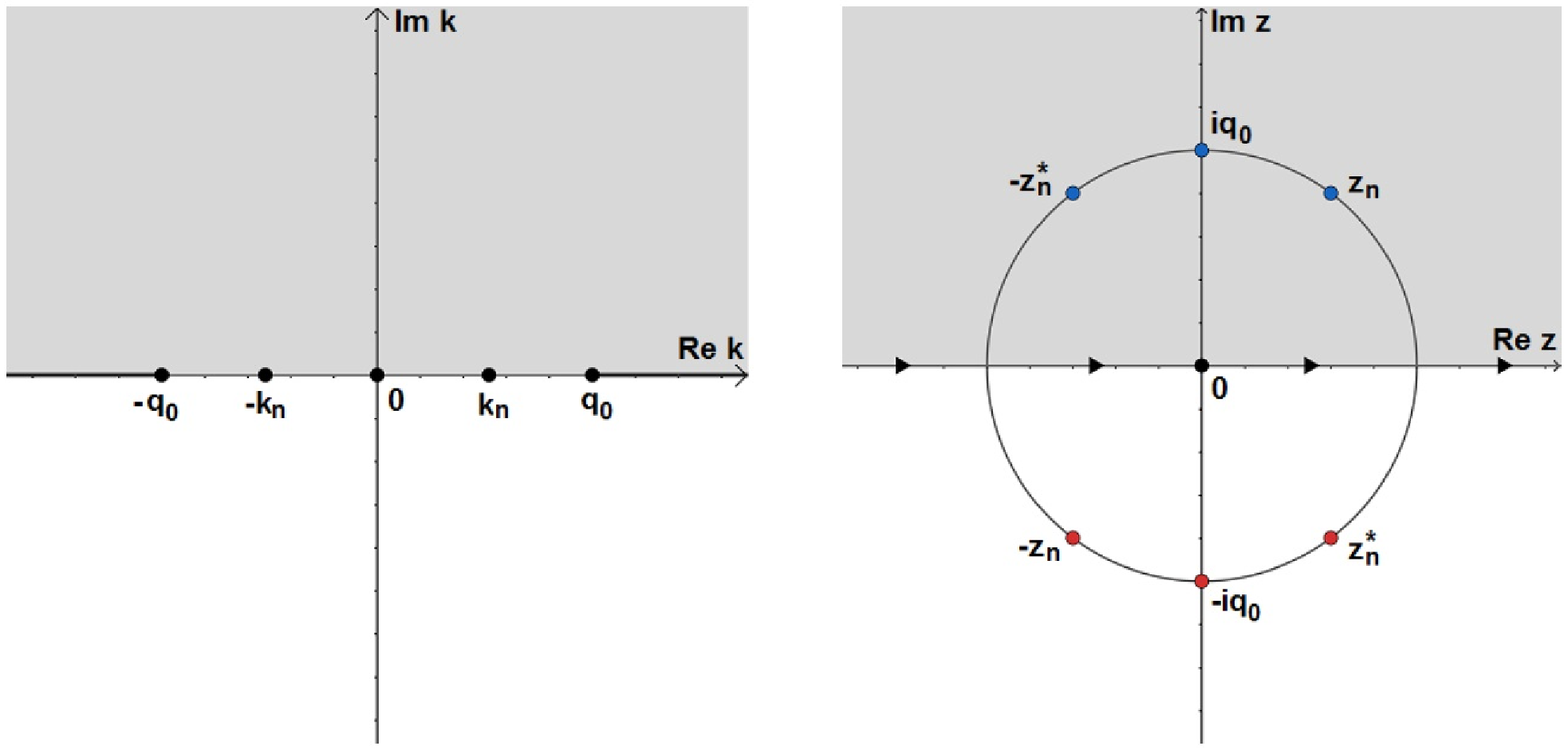}
\caption{Defousing mKdV equation with NZBCs. Left: the first sheet of the Riemann surface, showing the discrete spectrums, the region where $\mathrm{Im}\,\lambda>0$ (grey) and the region where $\mathrm{Im}\,\lambda<0$ (white). Right: the complex $z$-plane, showing the discrete spectrums [zeros of $s_{1,1}$ (blue) in grey region and those of $s_{2,2}$ (red) in white region], the region $\mathbb{C}^+$ where $\mathrm{Im}\,\lambda>0$ (grey), the region $\mathbb{C}^-$ where $\mathrm{Im}\,\lambda<0$ (white) and the orientation of the contours for the Riemann-Hilbert problem.}
\label{d}
\end{figure}

The mapping relations  between the Riemann surface and complex $z$-plane are exhibited as follows (see Fig.~\ref{d}):
\begin{itemize}
\item The Sheet-I and Sheet-II, excluding the branch cut, are mapped onto the exterior and interior of the circle of radius $q_0$, respectively;

\item The branch cut $\left[-q_0,q_0\right]$ are mapped onto the circle of radius $q_0$. In particular, the branch cut on Sheet-I (Sheet-II) is mapped onto the lower (upper) semicircle of radius $q_0$;

\item $\left(-\infty, -q_0\right]\cup \left[q_0, +\infty\right)$ is mapped onto the real $z$ axis. In particular, $\left[q_0,+\infty\right)$ of the Sheet-I (Sheet-II)  is mapped onto $\left[q_0,+\infty\right)$ ($\left[0, q_0\right]$) and $\left(-\infty,
-q_0\right]$ of the Sheet-I (Sheet-II)  is mapped onto $\left(-\infty,-q_0\right)$ ($\left[-q_0, 0\right]$);

\item The region where Im $\lambda>0$ (Im $\lambda<0$) of the Riemann surface is mapped onto the grey (white) domain in the complex $z$-plane. In particular, the UHP and LHP of the Sheet-I (Sheet-II) are mapped, respectively,  onto the UHP (LHP) and LHP (UHP) outside (inside) of the circle of radius $q_0$ in the complex $z$-plane.
\end{itemize}

The uniformization variable defines a map from the Riemann surface onto the complex plane, which will allow us to work with a complex parameter $z$ instead of dealing with the more cumbersome two-sheeted Riemann surface. For convenience, we denote the grey and white domain in Fig. \ref{d} (Right) by $D_+=\{z\in\mathbb{C}: {\rm Im}(\lambda(z))=\frac12
{\rm Im}(z)(1+q_0^2/|z|^2)>0\}$  and $D_-=\{z\in\mathbb{C}: {\rm Im}(\lambda(z))=\frac12
{\rm Im}(z)(1+q_0^2/|z|^2)<0\}$, respectively. In the following, we will consider our problem on the complex $z$-plane, in which one can rewrite the fundamental matrix solution of the asymptotic scattering problem as $\varPhi(x, t; z)=E_{\pm}(z)\exp{\left[i\,\theta(x, t, z)\,\sigma_3\right]}$,
where
\begin{align*}
E_{\pm}\left(z\right)=\begin{bmatrix}
1&\frac{iq_{\pm}}{z}\\-\frac{iq_{\pm}}{z}&1
\end{bmatrix},\quad
\theta\left(x, t; z\right)=\frac{1}{2}\left(z-\frac{q_0^2}{z}\right)\left\{x+\left[\left(z+\frac{q_0^2}{z}\right)^2+2q_0^2\right]t\right\},\quad z\not=\pm q_0.
\end{align*}

\begin{remark}
 For $z=\pm q_0$, we have ${\rm det}E_{\pm}(z)=0$, while when $z\not=\pm q_0$, ${\rm det}E_{\pm}(z)=1-q_0^2/z^2=\gamma_d(z)\not=0$, the inverse of $E_{\pm}(z)$ exists. Moreover, we find that $ X_{\pm}T_{\pm}= T_{\pm} X_{\pm}=(4k^2+2q_0^2)X_{\pm}^2=(2q_0^2+4k^2)(q_0^2-k^2)I$, and
  \bee\label{xtd}
  \begin{array}{l}
   X_{\pm}E_{\pm}(z)=\d i\lambda E_{\pm}(z)\sigma_3=\frac{i}{2}\left(z+\frac{q_0^2}{z}\right)E_{\pm}(z)\sigma_3,\\
   T_{\pm}E_{\pm}(z)=i\lambda (4k^2+2q_0^2) E_{\pm}(z)\sigma_3
   =\d\frac{i}{2}\left(z+\frac{q_0^2}{z}\right)\left[\left(z-\frac{q_0^2}{z}\right)^2+2q_0^2\right]E_{\pm}(z)\sigma_3,
   \end{array}
     \ene
which allows us to define the Jost solutions as simultaneous solutions of both parts of the Lax pair (\ref{lax-x}) and (\ref{lax-t}).
\end{remark}

\subsubsection{Jost solutions, analyticity, and continuity}

The continuous spectrum is  $\Sigma:=\mathbb{R}\backslash\{0\}$. As $z\in\Sigma$, we will seek for the Jost solutions $\varPhi_{\pm}(x, t; z)$ such that
\begin{align}\label{Jost-asy-1}
\varPhi_{\pm}(x, t; z)=E_{\pm}(z)\,\mathrm{e}^{i\,\theta(x, t; z)\,\sigma_3}+o\left(1\right),\quad x\to\pm\infty.
\end{align}
Factorizing the asymptotic exponential oscillations, we introduce the modified Jost solutions as
\begin{align}\label{bianjie-1}
\mu_{\pm}(x, t; z)=\varPhi_{\pm}(x, t; z)\,\mathrm{e}^{-i\theta(x, t; z)\sigma_3},
\end{align}
such that $\lim\limits_{x\to\pm\infty}\mu_{\pm}(x, t; z)=E_{\pm}(z)$. The Jost integral equation for $\mu_{\pm}(x, t; z)$ is posed as
\begin{align}\label{Jost-int-1}
\mu_{\pm}(x, t; z)=\left\{
\begin{aligned}
&E_{\pm}(z)+\int_{\pm\infty}^xE_{\pm}(z)\,\mathrm{e}^{i\lambda(z)(x-y)\widehat\sigma_3}\left[E^{-1}_{\pm}(z)\,\Delta Q_{\pm}(y, t)\,\mu_{\pm}(y, t; z)\right]\,\mathrm{d}y, && z\ne\pm q_0,\\[0.05in]
&E_{\pm}(z)+\int_{\pm\infty}^x\left[I+\left(x-y\right)\left(Q_{\pm}\pm i\,q_0\,\sigma_3\right)\right]\Delta Q_{\pm}(y, t)\,\mu_{\pm}(y, t; z)\,\mathrm{d}y, &&  z=\pm q_0,
\end{aligned}\right.
\end{align}

\begin{proposition}
Suppose $\left(1+|x|\right)\left(q-q_{\pm}\right)\in L^1\left(\mathbb{R^{\pm}}\right)$, then $\mu_\pm(x, t; z)$ and $\varPhi_\pm(x, t; z)$ have the following  properties:
\begin{itemize}
\item The Jost integral equation (\ref{Jost-int-1}) has unique solutions $\mu_{\pm}(x, t; z)$ in $\Sigma$.  The existence and uniqueness for $\varPhi_\pm(x, t; z)$ follows trivially.
\item $\mu_{+1}(x, t; z)$ and $\mu_{-2}(x, t; z)$ can be extended analytically to $\mathbb{C}^{+}$ and continuously to $\mathbb{C}^{+}\cup\Sigma$. $\mu_{-1}(x, t; z)$, and $\mu_{+2}(x, t; z)$ can be extended analytically to $\mathbb{C}^{-}$ and continuously to $\mathbb{C}^-\cup\Sigma$. The analyticity and continuity properties for $\varPhi_{\pm}(x, t; z)$ follow trivially.
\item The asymptotic behaviors for $\mu_{\pm}(x, t; z)$ as $z\to\infty$ and $z\to 0$ are
\bee \no
\mu_{\pm}(x, t; z)=\left\{\begin{array}{ll}
 I+O\left(\dfrac{1}{z}\right),& z\to\infty,\v\\
 \dfrac{i}{z}\,\sigma_3\,Q_{\pm}+O\left(1\right),& z\to 0.
 \end{array}\right.
\ene

\item $\varPhi_\pm(x, t; z)$ has the following three symmetries:
\begin{gather} \label{dhsi}
\varPhi_\pm(x, t; z)=\sigma_1\,\varPhi_\pm(x, t; z^*)^*\,\sigma_1, \quad
\varPhi_\pm(x, t; z)=\varPhi_\pm(x, t; -z^*)^*,\quad
\varPhi_\pm(x, t; z)=\left(-\frac{q_\pm}{z}\right)\,\varPhi_\pm\left(x, t; \frac{q_0^2}{z}\right)\sigma_2.
\end{gather}
\end{itemize}
\end{proposition}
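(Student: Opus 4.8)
The plan is to treat this proposition as the defocusing counterpart of the bundle of focusing results already established (Propositions \ref{jiexi-m1}, \ref{jiexi-m2}, \ref{Jzjianjin}, the reduction conditions of Proposition \ref{3RC}, and the subsequent symmetry propositions), so I would reuse essentially the same Volterra--Neumann machinery, paying attention only to the three structural differences introduced by $\sigma=1$: the potential $Q_\pm$ is now symmetric in its off-diagonal entries, $\lambda(z)=\tfrac12(z-q_0^2/z)$, and, crucially, the branch points have moved to $z=\pm q_0$ on the real axis. A preliminary observation that simplifies everything relative to the focusing case is that $1+q_0^2/|z|^2>0$ for all $z$, so $\mathrm{Im}\,\lambda(z)$ and $\mathrm{Im}\,z$ have the same sign; hence $D_+=\mathbb{C}^+$ and $D_-=\mathbb{C}^-$, and the analyticity domains are simply the two open half-planes.

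For existence, uniqueness, analyticity and continuity (first two bullets) I would recast the Jost integral equation \eqref{Jost-int-1} as a Volterra equation and expand $\mu_\pm=\sum_{n\ge 0}\mu_\pm^{[n]}$ with the usual recursion, exactly as in Proposition \ref{jiexi-m1}. The kernel carries the exponential factor $\mathrm{e}^{i\lambda(z)(x-y)\widehat\sigma_3}$, whose surviving entry is bounded precisely when $\mathrm{Im}\,\lambda\ge 0$ (resp. $\le 0$) along the integration ray; this is what forces $\mu_{+1},\mu_{-2}$ to be analytic in $\mathbb{C}^+$ and $\mu_{-1},\mu_{+2}$ in $\mathbb{C}^-$. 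Uniform convergence of $\sum_n\|\mu_\pm^{[n]}\|_1$ on compact subsets, together with Lemma \ref{isjie}, gives a solution analytic in the open half-planes and, by dominated convergence, continuous up to $\Sigma$; uniqueness is the standard Volterra/Gronwall argument, and the corresponding statements for $\varPhi_\pm$ follow at once via \eqref{bianjie-1}.

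The asymptotics (third bullet) I would obtain by the term-by-term order bookkeeping of Proposition \ref{Jzjianjin}: splitting each iterate into diagonal and off-diagonal parts and checking by induction the same hierarchy of $O(z^{-m})$ orders as $z\to\infty$ and of $O(z^{m})$ orders as $z\to 0$, starting from $\mu_\pm^{[0]}=E_\pm(z)$. Summing gives $\mu_\pm=I+O(1/z)$ as $z\to\infty$; as $z\to 0$ the leading singular part of $E_\pm(z)$ is exactly $\tfrac{iq_\pm}{z}\begin{bmatrix}0&1\\-1&0\end{bmatrix}=\tfrac{i}{z}\sigma_3 Q_\pm$, which reproduces the stated $\tfrac{i}{z}\sigma_3 Q_\pm+O(1)$ behaviour.

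The symmetries (fourth bullet) I would derive from reduction conditions on the Lax operators combined with the uniqueness just proved. Using that $Q_\pm$ is symmetric one verifies $\sigma_1 X(z^*)^*\sigma_1=X(z)$, $X(-z^*)^*=X(z)$, and $X(q_0^2/z)=X(z)$ (the last because $k(z)=\tfrac12(z+q_0^2/z)$ is invariant under $z\mapsto q_0^2/z$ while $\lambda\mapsto-\lambda$), with identical identities for $T$. For each involution I would check that the transformed candidate solves the Lax pair and carries the correct $x\to\pm\infty$ boundary data --- e.g. $\sigma_1 E_\pm(z^*)^*\sigma_1=E_\pm(z)$ and $\sigma_1(\mathrm{e}^{i\theta(z^*)\sigma_3})^*\sigma_1=\mathrm{e}^{i\theta(z)\sigma_3}$, since $\sigma_1\sigma_3\sigma_1=-\sigma_3$ --- so by uniqueness it must equal $\varPhi_\pm(z)$, yielding the first two relations in \eqref{dhsi}. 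The third relation is the delicate one: the involution $z\mapsto q_0^2/z$ exchanges $\mathbb{C}^+\leftrightarrow\mathbb{C}^-$ and therefore couples the two columns through $\sigma_2$; here I would pin down the scalar prefactor by computing $E_\pm(q_0^2/z)\sigma_2$ against $E_\pm(z)$, which (using $q_\pm^2=q_0^2$) produces exactly the factor $-q_\pm/z$ stated in \eqref{dhsi}, the exponential part following from $\mathrm{e}^{-i\theta\sigma_3}\sigma_2=\sigma_2\mathrm{e}^{i\theta\sigma_3}$. I expect the genuine obstacle to be the uniform control of the iterates up to the branch points $z=\pm q_0$, which now lie \emph{on} the contour $\Sigma$ rather than off it: there $\lambda=0$, the oscillatory kernel degenerates, and the plain $L^1$ hypothesis is insufficient, so one must invoke the weighted bound $(1+|x|)(q-q_\pm)\in L^1$ together with the $z=\pm q_0$ branch of \eqref{Jost-int-1} (the one with the linear-in-$(x-y)$ correction) to secure continuity there. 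This is the defocusing analogue of passing from Proposition \ref{jiexi-m1} to Proposition \ref{jiexi-m2}, made sharper by the branch points sitting directly on the spectrum.
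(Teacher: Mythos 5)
Your proposal is correct and follows essentially the same route the paper relies on: transplanting the focusing-case Volterra--Neumann analysis (Propositions \ref{jiexi-m1}, \ref{jiexi-m2}, \ref{Jzjianjin}) and the reduction-condition-plus-uniqueness argument for the symmetries, with the correct adjustments for $\sigma=1$ (symmetric $Q_\pm$, $\lambda=\tfrac12(z-q_0^2/z)$ so that $D_\pm=\mathbb{C}^\pm$, branch points at $\pm q_0$ on $\Sigma$). Your verifications of the boundary data under the three involutions, including the prefactor $-q_\pm/z$ via $q_\pm^2=q_0^2$ and the anticommutation $\sigma_2\sigma_3=-\sigma_3\sigma_2$, and your remark that the weighted $L^1$ hypothesis is what secures continuity at the on-spectrum branch points, all match what the paper intends.
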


\subsubsection{Scattering matrix and discrete spectrum}

Using the Liouville's formula, we can also define the scattering matrix (\ref{Jostchuandi}), scattering coefficients (\ref{S-lie}) and reflection coefficients (\ref{fanshe}). Next, we present three symmetries.
\begin{proposition}
The three symmetries for the scattering matrix are given by
\begin{itemize}
\item The first symmetry
\begin{gather} \label{s1}
S(z)=\sigma_1\,S(z^*)^*\,\sigma_1.
\end{gather}
\item The second symmetry
\begin{gather}\label{ss1}
S(z)=S(-z^*)^*.
\end{gather}
\item The third symmetry
\begin{gather}
S(z)=\frac{q_+}{q_-}\,\sigma_2\,S\left(\frac{q_0^2}{z}\right)\,\sigma_2.
\end{gather}
\end{itemize}
\end{proposition}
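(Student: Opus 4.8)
The plan is to obtain all three symmetries of $S(z)$ directly from the transition relation $\varPhi_+(x,t;z)=\varPhi_-(x,t;z)\,S(z)$ of Eq.~\eqref{Jostchuandi} together with the three symmetries of the Jost solutions recorded in Eq.~\eqref{dhsi}, in exact analogy with the focusing computation that produced Proposition~\ref{Sduichen}. Two preliminary observations make the cancellations legitimate. First, each of the three involutions $z\mapsto z^*$, $z\mapsto-z^*$, $z\mapsto q_0^2/z$ maps the continuous spectrum $\Sigma=\mathbb{R}\setminus\{0\}$ into itself, so Eq.~\eqref{Jostchuandi} holds simultaneously at $z$ and at the image point. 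Second, for $z\in\Sigma$ with $z\neq\pm q_0$ one has $\det\varPhi_\pm=\det E_\pm=\gamma_d(z)=1-q_0^2/z^2\neq0$, so $\varPhi_-(x,t;z)$ is invertible there; hence one may solve $S(z)=\varPhi_-(x,t;z)^{-1}\varPhi_+(x,t;z)$ and cancel common left factors of $\varPhi_-$ freely, the resulting identities extending to $z=\pm q_0$ by continuity.

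For the first symmetry I would substitute the first relation of \eqref{dhsi}, $\varPhi_\pm(z)=\sigma_1\,\varPhi_\pm(z^*)^*\,\sigma_1$, into \eqref{Jostchuandi} to obtain $\sigma_1\varPhi_+(z^*)^*\sigma_1=\sigma_1\varPhi_-(z^*)^*\sigma_1\,S(z)$, and then use \eqref{Jostchuandi} evaluated at $z^*$ in conjugated form, $\varPhi_+(z^*)^*=\varPhi_-(z^*)^*\,S(z^*)^*$, to eliminate $\varPhi_+(z^*)^*$. Cancelling the common left factor $\sigma_1\varPhi_-(z^*)^*$ leaves $S(z^*)^*\sigma_1=\sigma_1\,S(z)$, and left-multiplying by $\sigma_1$ with $\sigma_1^2=I$ gives exactly \eqref{s1}. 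The second symmetry follows by the same two-line argument from $\varPhi_\pm(z)=\varPhi_\pm(-z^*)^*$: one reaches $\varPhi_-(-z^*)^*\,S(-z^*)^*=\varPhi_-(-z^*)^*\,S(z)$, and cancellation yields $S(z)=S(-z^*)^*$, i.e.\ \eqref{ss1}.

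The third symmetry is handled identically with $\varPhi_\pm(z)=(-q_\pm/z)\,\varPhi_\pm(q_0^2/z)\,\sigma_2$. Substituting into \eqref{Jostchuandi} gives $(-q_+/z)\,\varPhi_+(q_0^2/z)\,\sigma_2=(-q_-/z)\,\varPhi_-(q_0^2/z)\,\sigma_2\,S(z)$, and \eqref{Jostchuandi} evaluated at $q_0^2/z$, namely $\varPhi_+(q_0^2/z)=\varPhi_-(q_0^2/z)\,S(q_0^2/z)$, removes $\varPhi_+(q_0^2/z)$. Cancelling the common factor $(-1/z)\,\varPhi_-(q_0^2/z)$ leaves $q_+\,S(q_0^2/z)\,\sigma_2=q_-\,\sigma_2\,S(z)$; left-multiplying by $\sigma_2$ and using $\sigma_2^2=I$ produces $S(z)=(q_+/q_-)\,\sigma_2\,S(q_0^2/z)\,\sigma_2$, as claimed. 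Note that in the defocusing case the spectral involution is $z\mapsto q_0^2/z$ (rather than $z\mapsto-q_0^2/z$), consistent with $\lambda(q_0^2/z)=-\lambda(z)$ for $\lambda=\tfrac12(z-q_0^2/z)$.

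All three derivations are routine manipulations once the cancellation is justified, so the only point that requires mild care — and the step I expect to be the main (minor) obstacle — is the invertibility and domain bookkeeping: one must verify that each involution genuinely preserves $\Sigma$ and that the cancelled factor $\varPhi_-$ is nonsingular, which forces the exclusion of the branch points $z=\pm q_0$ where $\gamma_d$ vanishes. After that, continuity extends the symmetries to all of $\Sigma$ and, for the analytic entries $s_{11}$ and $s_{22}$, to their respective regions $D_\pm$ by analytic continuation.
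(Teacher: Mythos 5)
Your proposal is correct: the paper states this proposition without proof, and your derivation—substituting the three Jost-solution symmetries of Eq.~\eqref{dhsi} into the transition relation \eqref{Jostchuandi} and cancelling the invertible factor $\varPhi_-$ away from $z=\pm q_0$—is exactly the standard argument the paper intends (it is the direct analogue of the unproved focusing-case Proposition~\ref{Sduichen}). Your care about the involutions preserving $\Sigma=\mathbb{R}\setminus\{0\}$ and about $\det E_\pm=1-q_0^2/z^2$ vanishing at the branch points is the right bookkeeping.
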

Using the first symmetry (\ref{s1}) and $\mathrm{det}\,S(z)=1$, one can find that there are no spectral singularities. In \cite{Faddeev1987}, it was shown that the discrete spectral points are simple. In \cite{Demontis2013}, it was shown that if $\left(1+\left|x\right|\right)^4\left(q-q_\pm\right)\in L^1\left(\mathbb{R^{\pm}}\right)$, then there is a finite number of discrete spectral points, all of which belong to $\left\{z: \left|z\right|=q_0\right\}$. Therefore, we give the discrete spectrum as
\begin{gather}\label{dpz}
Z=\left\{z_n, z_n^*, -z_n^*, -z_n\right\}_{n=1}^N\cup\left\{iq_0, -iq_0\right\}^\delta,
\end{gather}
where $z_n$  satisfies that $\left|z_n\right|=q_0, \mathrm{Re}\,z_n>0,  \mathrm{Im}\,z_n>0$, $\delta=0$ or $ 1$ and $\left\{iq_0, -iq_0\right\}^\delta$ is a set defined by
\begin{gather}
\left\{iq_0, -iq_0\right\}^\delta=\left\{
\begin{aligned}
&\qquad \varnothing, \quad &\delta&=0,  \\
&\left\{iq_0, -iq_0\right\}, \quad & \delta&=1,
\end{aligned}
\right.
\end{gather}
For convenience, using Eq.~(\ref{bAdingyi}) one obtains the same residue as Eq.~(\ref{Jie-liushu}).
\begin{proposition}
For $z_0\in Z$ defined by Eq.~(\ref{dpz}), three relations for the residue $A[z_0]$ are given by
\begin{itemize}
\item The first relation
%\begin{gather}
$A[z_0]=A[z_0^*]^*.$
%\end{gather}
\item The second relation
%\begin{gather}
$A[z_0]=-A[-z_0^*]^*.$
%\end{gather}
\item The third relation
%\begin{gather}
$A[z_0]=\frac{z_0^2}{q_0^2}\,A\left[\frac{q_0^2}{z_0}\right].$
%\end{gather}
\end{itemize}
\end{proposition}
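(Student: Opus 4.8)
The plan is to derive each of the three residue relations by combining the corresponding symmetry of the scattering matrix $S(z)$ with the corresponding column-wise symmetry of the Jost solutions $\varPhi_\pm(x,t;z)$, together with the definition $A[z_0]=b[z_0]/s_{11}'(z_0)$ for $z_0\in Z\cap D_+$ and $A[z_0]=b[z_0]/s_{22}'(z_0)$ for $z_0\in Z\cap D_-$, where $b[z_0]$ is the proportionality constant from Eq.~(\ref{bAdingyi}). Since $\mathrm{Im}\,\lambda(z)$ has the sign of $\mathrm{Im}\,z$ in the defocusing case, $D_+$ is the UHP and $D_-$ the LHP; hence for $z_0=z_n$ (first quadrant) one has $z_0\in D_+$, $z_0^*\in D_-$, $-z_0^*\in D_+$, and $q_0^2/z_0\in D_-$, while the set $Z$ of Eq.~(\ref{dpz}) is invariant under the three involutions $z\mapsto z^*$, $z\mapsto -z^*$, $z\mapsto q_0^2/z$. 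It therefore suffices to establish each relation for $z_0=z_n$, the remaining points of $Z$ following by the identical computation.

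For the first relation I would start from the scalar consequence $s_{11}(z)=s_{22}(z^*)^*$ of the first scattering symmetry (\ref{s1}). Writing this as the Schwarz reflection $s_{11}(z)=\overline{s_{22}(\bar z)}$ and differentiating gives $s_{11}'(z_0)=s_{22}'(z_0^*)^*$. Next, the column form of the first Jost symmetry (\ref{dhsi}), namely $\varPhi_{\pm1}(z)=\sigma_1\varPhi_{\pm2}(z^*)^*$ and $\varPhi_{\pm2}(z)=\sigma_1\varPhi_{\pm1}(z^*)^*$, applied to $\varPhi_{+1}(z_0)=b[z_0]\varPhi_{-2}(z_0)$ and then conjugated, yields $\varPhi_{+2}(z_0^*)=b[z_0]^*\varPhi_{-1}(z_0^*)$, so that $b[z_0^*]=b[z_0]^*$. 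Dividing the two relations gives $A[z_0]=b[z_0]/s_{11}'(z_0)=\big(b[z_0^*]/s_{22}'(z_0^*)\big)^*=A[z_0^*]^*$.

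The second and third relations follow the same two-ingredient scheme, the only new feature being the extra signs produced when differentiating. For the second relation I would differentiate $s_{11}(z)=s_{11}(-z^*)^*=\overline{s_{11}(-\bar z)}$; the inner map $z\mapsto-\bar z$ contributes a minus sign, giving $s_{11}'(z_0)=-s_{11}'(-z_0^*)^*$, while the symmetry $\varPhi_{\pm j}(z)=\varPhi_{\pm j}(-z^*)^*$ gives $b[-z_0^*]=b[z_0]^*$; combining these produces the sign, $A[z_0]=-A[-z_0^*]^*$. For the third relation I would differentiate $s_{11}(z)=\tfrac{q_+}{q_-}s_{22}(q_0^2/z)$, whose chain rule contributes the factor $-q_0^2/z^2$, so that $s_{11}'(z_0)=-\tfrac{q_+}{q_-}\tfrac{q_0^2}{z_0^2}s_{22}'(q_0^2/z_0)$; the column form of the third Jost symmetry, $\varPhi_{+1}(z)=-\tfrac{iq_+}{z}\varPhi_{+2}(q_0^2/z)$ and $\varPhi_{-2}(z)=\tfrac{iq_-}{z}\varPhi_{-1}(q_0^2/z)$, gives $b[q_0^2/z_0]=-\tfrac{q_-}{q_+}b[z_0]$. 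Substituting into $A[q_0^2/z_0]=b[q_0^2/z_0]/s_{22}'(q_0^2/z_0)$, the factors $q_\pm$ cancel and one is left with $A[q_0^2/z_0]=\tfrac{q_0^2}{z_0^2}A[z_0]$, that is, $A[z_0]=\tfrac{z_0^2}{q_0^2}A[q_0^2/z_0]$. The special points $\pm iq_0$ present when $\delta=1$ lie on $|z|=q_0$ and are fixed or interchanged by these involutions, so they are handled by the identical algebra (or by continuity from nearby $z_n$).

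I expect the main obstacle to be purely the sign and conjugation bookkeeping in differentiating the Schwarz-reflected scattering coefficients: one must correctly distinguish the involution $z\mapsto z^*$ (no extra sign) from $z\mapsto -z^*$ (a minus sign from the inner conjugation) and from $z\mapsto q_0^2/z$ (the rational chain-rule factor $-q_0^2/z^2$), since exactly these factors are what turn the three symmetries into the three distinct residue relations. A secondary point requiring care is verifying that each target point lands in the region ($D_+$ or $D_-$) matching the scattering coefficient ($s_{11}$ or $s_{22}$) appearing in the definition of $A$, so that the correct derivative is differentiated in each case.
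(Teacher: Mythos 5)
Your proposal is correct and follows exactly the route the paper intends (the paper states this proposition without writing out the proof, but all the ingredients you use — the three scattering-matrix symmetries, the column-wise Jost symmetries of Eq.~(\ref{dhsi}), and the definitions of $b[z_0]$ and $A[z_0]$ from Eq.~(\ref{bAdingyi}) — are precisely the machinery it sets up for this purpose). Your sign bookkeeping checks out in all three cases, including the chain-rule factor $-q_0^2/z^2$ and the cancellation of $q_\pm$ in the third relation, and your handling of the region assignments ($z_0^*\in D_-$, $-z_0^*\in D_+$, $q_0^2/z_0=z_0^*\in D_-$ since $|z_0|=q_0$) is consistent with the paper's corollary.
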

\begin{corollary}
For $n=1, 2, \cdots N$ and $\delta=1$, one has
\begin{gather}
\begin{gathered}
A[z_n]=A[z_n^*]^*=-A[-z_n^*]^*=-A[-z_n],\quad
\mathrm{Im}\, A[z_n]=\frac{1}{\mathrm{Im}\,\frac{q_0^2}{z_n^2}}\,\left[\left(\mathrm{Re}\,\frac{q_0^2}{z_n^2}-1\right)\mathrm{Re}\,A[z_n]\right],\\
A[iq_0]=A[-iq_0]^*, \quad \mathrm{Re}\,A[iq_0]=0.
\end{gathered}
\end{gather}
\end{corollary}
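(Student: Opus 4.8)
The plan is to derive all four identities by specializing the three residue symmetry relations established in the preceding Proposition --- namely $A[z_0]=A[z_0^*]^*$ (first), $A[z_0]=-A[-z_0^*]^*$ (second), and $A[z_0]=\frac{z_0^2}{q_0^2}A\left[\frac{q_0^2}{z_0}\right]$ (third) --- to the structured discrete set $Z$ in Eq.~(\ref{dpz}), exploiting the defining geometric constraint $\left|z_n\right|=q_0$. The single observation that drives everything is that $\left|z_n\right|=q_0$ forces $z_n z_n^*=q_0^2$, i.e. $\frac{q_0^2}{z_n}=z_n^*$; this is exactly what collapses the third relation (which generically links $A$ at $z_0$ and at $q_0^2/z_0$) into a relation among residues already present in $Z$, rather than introducing a new off-circle point.

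First I would establish the chain of equalities. Setting $z_0=z_n$ in the first relation gives $A[z_n]=A[z_n^*]^*$, and setting $z_0=z_n$ in the second relation gives $A[z_n]=-A[-z_n^*]^*$. To close the chain it remains to identify $-A[-z_n^*]^*$ with $-A[-z_n]$; applying the first relation instead at the point $z_0=-z_n$ yields $A[-z_n]=A[(-z_n)^*]^*=A[-z_n^*]^*$, which is exactly the required identity. Concatenating the three then gives $A[z_n]=A[z_n^*]^*=-A[-z_n^*]^*=-A[-z_n]$.

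Next I would derive the constraint on $\mathrm{Im}\,A[z_n]$. Putting $z_0=z_n$ in the third relation and using $q_0^2/z_n=z_n^*$ produces $A[z_n]=\frac{z_n^2}{q_0^2}A[z_n^*]$; substituting $A[z_n^*]=A[z_n]^*$ (the conjugate of the first relation) gives the self-referential equation $A[z_n]=\frac{z_n^2}{q_0^2}\,A[z_n]^*$. Because $\left|z_n\right|=q_0$, the prefactor has unit modulus, so $\frac{z_n^2}{q_0^2}=\overline{\left(q_0^2/z_n^2\right)}$; writing $q_0^2/z_n^2=c+di$ with $c=\mathrm{Re}\,(q_0^2/z_n^2)$, $d=\mathrm{Im}\,(q_0^2/z_n^2)$ and $A[z_n]=a+bi$, and separating real and imaginary parts of $a+bi=(c-di)(a-bi)$, the real part gives $a(1-c)=-db$, i.e. $b=\frac{(c-1)a}{d}$, which is precisely the stated formula $\mathrm{Im}\,A[z_n]=\frac{1}{\mathrm{Im}\,(q_0^2/z_n^2)}\left[\left(\mathrm{Re}\,(q_0^2/z_n^2)-1\right)\mathrm{Re}\,A[z_n]\right]$. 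The imaginary part gives $b(1+c)=-da$, which is automatically consistent because $c^2+d^2=1$ on the circle.

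Finally, for the points $\pm iq_0$ present when $\delta=1$, I note $(iq_0)^*=-iq_0$. The first relation at $z_0=iq_0$ gives directly $A[iq_0]=A[-iq_0]^*$. The second relation at $z_0=iq_0$ gives $A[iq_0]=-A[-(iq_0)^*]^*=-A[iq_0]^*$, whence $A[iq_0]+A[iq_0]^*=0$ and therefore $\mathrm{Re}\,A[iq_0]=0$. None of the steps is a genuine obstacle; the only points requiring care are recognizing that the circle condition $\left|z_n\right|=q_0$ makes the third symmetry degenerate into a self-relation, and rewriting the unit-modulus prefactor $z_n^2/q_0^2$ as a complex conjugate before splitting into real and imaginary parts.
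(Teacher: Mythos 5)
Your proof is correct and follows exactly the route the paper intends: the corollary is stated without proof as an immediate consequence of the three residue symmetry relations in the preceding Proposition, combined with the circle condition $|z_n|=q_0$ (so that $q_0^2/z_n=z_n^*$ collapses the third relation into the self-referential equation $A[z_n]=\tfrac{z_n^2}{q_0^2}A[z_n]^*$), which is precisely your argument. The real/imaginary split and the consistency check via $c^2+d^2=1$ are both handled correctly.
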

The asymptotic behaviors for the scattering matrix $S(z)$ as $z\to\infty$ and $z\to 0$ are needed to formulate the Riemann-Hilbert problem. We find that the asymptotic behaviors are same as Eqs.~(\ref{ien}, \ref{ien-1}).

\subsection{Inverse problem  with NZBCs}

\subsubsection{Riemann-Hilbert problem and reconstruction formula}

The Riemann-Hilbert problem for the defocusing mKdV equation with NZBCs also has the form Eqs.~(\ref{RHP-M}, \ref{RHP-Jump}, \ref{RHP-Asy}). To solve it, it is convenient to define that
\begin{gather}
\eta_n=\left\{
\begin{aligned}
&\quad z_n,  &&n=1, 2, \cdots, N,  \\
&-z_{n-N}^*,  &&n=N+1, N+2, \cdots, 2N,
\end{aligned}\right.
\qquad
\eta_{2N+\delta}=\left\{
\begin{aligned}
&\eta_{2N}, &  & \delta=0,  \\
&iq_0, &  & \delta=1.
\end{aligned}\right.
\end{gather}
Subtracting out the asymptotic and the pole contribution, the jump condition (\ref{RHP-Jump}) becomes
\begin{gather}
\begin{aligned}
&M^-(x, t; z)-I-\frac{i}{z}\,\sigma_3,Q_--\sum_{n=1}^{2N+\delta}\left[\frac{\mathop\mathrm{Res}\limits_{z=\eta_n}M^+(x, t; z)}{z-\eta_n}+\frac{\mathop\mathrm{Res}\limits_{z=\eta_n^*}M^-(x, t; z)}{z-\eta_n^*}\right]\\[0.05in]
&\quad =M^+(x, t; z)-I-\frac{i}{z}\,\sigma_3,Q_--\sum_{n=1}^{2N+\delta}\left[\frac{\mathop\mathrm{Res}\limits_{z=\eta_n}M^+(x, t; z)}{z-\eta_n}+\frac{\mathop\mathrm{Res}\limits_{z=\eta_n^*}M^-(x, t; z)}{z-\eta_n^*}\right]
-M^+(x, t; z)\,J(x, t; z).
\end{aligned}
\end{gather}

\begin{theorem}
The integral representation for the solution of the Riemann-Hilbert problem is
\begin{align}
M(x, t; z)=I+\frac{i}{z}\,\sigma_3\,Q_-+\sum_{n=1}^{2N+\delta}\left[\frac{\mathop\mathrm{Res}\limits_{z=\eta_n}M^+(z)}{z-\eta_n}+\frac{\mathop\mathrm{Res}\limits_{z=\eta_n^*}M^-(z)}{z-\eta_n^*}\right]+\frac{1}{2\pi i}\int_\Sigma\frac{\left(M^+J\right)(\zeta)}{\zeta-z}\,\mathrm{d}\zeta,\quad z\in\mathbb{C}\backslash\Sigma,
\end{align}
where $\int_\Sigma$ denotes the integral along the oriented contour shown in Fig. \ref{d}(right).
\end{theorem}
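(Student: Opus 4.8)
The plan is to mirror the argument already used for the simple-pole focusing case (the proof of Eq.~(\ref{RHP-jie})), since the analytic structure is identical once the defocusing discrete spectrum (\ref{dpz}) and the involution $z\mapsto z^*$ are accounted for. The first symmetry $S(z)=\sigma_1\,S(z^*)^*\,\sigma_1$ gives $s_{11}(z)=s_{22}(z^*)^*$, so the zeros of $s_{22}$ in $D_-$ are exactly the conjugates $\eta_n^*$ of the zeros $\eta_n$ of $s_{11}$ in $D_+$; this is why the pole contributions of $M^-$ are taken at $\eta_n^*$. I would begin from the regularized jump relation displayed immediately above the statement, in which the asymptotic term $I+\frac{i}{z}\sigma_3 Q_-$ and all the residue contributions at $\eta_n$ and $\eta_n^*$ have already been subtracted from both $M^-$ and $M^+$. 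The purpose of this rearrangement is that the left-hand side is analytic in $D_-$ while the right-hand side, apart from the inhomogeneous term $-M^+J$, is analytic in $D_+$, and the two analytic pieces share the common boundary $\Sigma=\mathbb{R}\setminus\{0\}$.

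Next I would record the asymptotics needed to apply the Cauchy projectors. From the proposition establishing the analyticity and asymptotics of $\mu_\pm$ together with the definition (\ref{RHP-M}) of $M^\pm$, both sides of the regularized relation are $O(1/z)$ as $z\to\infty$ and $O(1)$ as $z\to 0$ after the subtractions. The crucial input is the behaviour of $J$: using that the scattering matrix obeys the same asymptotics as in Eqs.~(\ref{ien}, \ref{ien-1}), namely $S(z)=I+O(1/z)$ as $z\to\infty$ and $S(z)=\frac{q_+}{q_-}I+O(z)$ as $z\to 0$, together with the definitions of $\rho$ and $\tilde\rho$, one obtains $J=O(1/z)$ as $z\to\infty$ and $J=O(z)$ as $z\to 0$. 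This decay is precisely what guarantees that the Cauchy integral $\frac{1}{2\pi i}\int_\Sigma\frac{(M^+J)(\zeta)}{\zeta-z}\,\mathrm{d}\zeta$ converges at both ends of the contour, so that the projectors $P_\pm[f](z)=\frac{1}{2\pi i}\int_\Sigma\frac{f(\zeta)}{\zeta-(z\pm i0)}\,\mathrm{d}\zeta$ are well-defined on $\Sigma$.

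Finally I would apply the Cauchy projectors to the regularized jump relation. Since the left-hand side extends analytically into $D_-$ and decays there, the projector eliminates it, and likewise the analytic-in-$D_+$ part of the right-hand side is eliminated by the opposite projector; by Plemelj's formulae, what survives is exactly the integral term. Re-adding the previously subtracted asymptotic term $I+\frac{i}{z}\sigma_3 Q_-$ and the residue sums over $\eta_n$ and $\eta_n^*$ then reproduces the claimed representation for $M(x,t;z)$ on $\mathbb{C}\setminus\Sigma$.

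I expect the main obstacle to be the endpoint $z=0$. Unlike a closed jump contour, the defocusing continuous spectrum $\Sigma=\mathbb{R}\setminus\{0\}$ has the origin as an accessible boundary point, so one must verify carefully that the $O(z)$ vanishing of $J$ near the origin genuinely renders the Cauchy operator well-behaved there, and that the term $\frac{i}{z}\sigma_3 Q_-$, which is singular at $z=0$, is treated consistently as part of the explicit non-integral piece rather than being folded into the projector. Everything else is the routine Plemelj bookkeeping already carried out in the simple-pole focusing proof.
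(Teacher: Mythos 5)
Your proposal is correct and follows essentially the same route as the paper: the paper states the regularized jump relation immediately above the theorem and (as in its explicit proof of the focusing simple-pole analogue, Eq.~(\ref{RHP-jie})) applies the Cauchy projectors and Plemelj's formulae, using the analyticity of the two sides in $D_\mp$ and the $O(1/z)$ decay at infinity together with the $O(z)$ vanishing of $J$ at the origin. Your additional remarks --- that the symmetry $s_{11}(z)=s_{22}(z^*)^*$ places the poles of $M^-$ at $\eta_n^*$, and that the endpoint $z=0$ of $\Sigma=\mathbb{R}\backslash\{0\}$ is the point requiring the $O(z)$ behaviour of $J$ --- are consistent with, and slightly more explicit than, what the paper records.
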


Next, we denote $M(x, t; z)$ as a closed system. Let
\begin{align}
C_n(z)=\frac{A[\eta_n]\,\mathrm{e}^{-2i\theta(x, t; \eta_n)}}{z-\eta_n}, \quad \widehat C_n(z)=\frac{A[\eta_n^*]\,\mathrm{e}^{2i\theta(x, t; \eta_n^*)}}{z-\eta_n^*}.
\end{align}
Then $M(x, t; z)$ can be written as
\begin{align}\label{ixbgs}
M(x, t; z)=I+\frac{i}{z}\,\sigma_3Q_-+\sum_{n=1}^{2N+\delta}\left[C_n(z)\,\mu_{-2}(\eta_n),\, \widehat C_n(z)\,\mu_{-1}(\eta_n^*)\right]+\frac{1}{2\pi i}\int_\Sigma\frac{\left(M^+J\right)(\zeta)}{\zeta-z}\,\mathrm{d}\zeta.
\end{align}
The remaining task is to evaluate $\mu_{-2}(\eta_n)$ and $\mu_{-1}(\eta_n^*)$. As $z\in D_+$, it follows from the second column of $M(x, t; z)$ in Eq.~(\ref{ixbgs}) that we obtain
\begin{align}\label{anei}
\mu_{-2}(z)=
\begin{bmatrix}
\frac{iq_-}{z}\\[0.05in]1
\end{bmatrix}
+\sum_{n=1}^{2N+\delta}\widehat C_n(z)\,\mu_{-1}(\eta_n^*)+\frac{1}{2\pi i}\int_\Sigma\frac{\left(M^+J\right)_2(\zeta)}{\zeta-z}\,\mathrm{d}\zeta.
\end{align}
The third symmetry for the Jost solutions (\ref{dhsi})  implies that
\begin{align}
\mu_{-2}(z)=\frac{iq_-}{z}\,\mu_{-1}\left(\frac{q_0^2}{z}\right).
\end{align}
With the aid of  $\eta_k^*=\frac{q_0^2}{\eta_k}$, let $z=\eta_k, k=1, 2, \cdots, N$ in Eq.~(\ref{anei}) and it becomes
\begin{align}\label{wdtn}
\begin{bmatrix}
\frac{iq_-}{\eta_k}\\[0.05in]1
\end{bmatrix}
+\sum_{n=1}^{2N+\delta}\left(\widehat C_n(\eta_k)-\frac{iq_-}{\eta_k}\,\delta_{k, n}\right)\mu_{-1}(\eta_n^*)+\frac{1}{2\pi i}\int_\Sigma\frac{\left(M^+J\right)_2(\zeta)}{\zeta-\eta_k}\,\mathrm{d}\zeta=0.
\end{align}
This is a linear system of $2N+\delta$  algebraic-integral equations with $2N+\delta$ unknowns $\mu_{-1}(\eta_n^*)$, by which $\mu_{-1}(\eta_n^*)$ can be determined. Then $\mu_{-2}(\eta_n)$ can be obtained by $\mu_{-2}(\eta_n)=\frac{iq_-}{\eta_n}\,\mu_{-1}\left(\eta_n^*\right)$. Substituting them to Eq.~(\ref{ixbgs}), the closed system for $M(x, t; z)$ is derived such that we have the asymptotic behavior for $M(x, t; z)$ as $z\to\infty$:
\begin{gather}
M(x, t; z)=I+\frac{1}{z}\,M^{(1)}(x, t; z)+O\left(\frac{1}{z^2}\right), \quad z\to\infty,
\end{gather}
where
\begin{align}
\begin{aligned}
M^{(1)}(x, t; z)=i\,\sigma_3\,Q_-+\sum_{n=1}^{2N+\delta}\left[A[\eta_n]\,\mathrm{e}^{-2i\theta(\eta_n)}\mu_{-2}(\eta_n), \,A[\eta_n^*]\,\mathrm{e}^{2i\theta(\eta_n^*)}\mu_{-1}(\eta_n^*)\right]-\frac{1}{2\pi i}\int_\Sigma \left(M^+J\right)(\zeta)\,\mathrm{d}\zeta,
\end{aligned}
\end{align}
which generates the following theorem:

\begin{theorem}
 The reconstruction formula for the potential of the defocusing mKdV equation with NZBCs is given by
\begin{align}\label{recons-d}
q(x, t)=q_--i\sum_{n=1}^{2N+\delta}A[\eta_n^*]\,\mathrm{e}^{2i\theta(\eta_n^*)}\mu_{-11}(\eta_n^*)+\frac{1}{2\pi}
\int_\Sigma\left(M^+J\right)_{12}(\zeta)\,\mathrm{d}\zeta.
\end{align}
\end{theorem}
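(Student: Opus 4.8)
The plan is to mimic the reconstruction argument already used for the focusing case, exploiting that $M(x,t;z)\,\mathrm{e}^{i\theta(x,t;z)\sigma_3}$ solves the scattering problem (\ref{lax-x}). First I would substitute $\varPhi=M\,\mathrm{e}^{i\theta\sigma_3}$ into (\ref{lax-x}), using $X=ik\sigma_3+Q$ together with the defocusing inverse map $k=\tfrac12(z+q_0^2/z)$, $\lambda=\tfrac12(z-q_0^2/z)$ and $\theta_x=\lambda$. Since the exponential factors cancel, this produces the differential relation
\[
M_x+\frac{i}{2}\left(z-\frac{q_0^2}{z}\right)M\sigma_3=\left(\frac{i\sigma_3}{2}z+\frac{iq_0^2\sigma_3}{2z}+Q\right)M,
\]
which is the exact analogue of (\ref{haha}) in the focusing section.

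Next I would insert the large-$z$ expansion $M=I+\tfrac1z M^{(1)}+O(1/z^2)$, valid by the asymptotics of $\mu_\pm$ established in the defocusing Proposition, and collect coefficients by powers of $z$. The terms of order $z$ cancel identically, and matching the $O(1)$ terms yields $Q=\tfrac{i}{2}\bigl[M^{(1)},\sigma_3\bigr]$. Computing the $(1,2)$ entry of this commutator gives $\bigl[M^{(1)},\sigma_3\bigr]_{12}=-2M^{(1)}_{12}$, so recalling $Q_{12}=q$ for the defocusing potential produces the compact identity $q(x,t)=-i\,M^{(1)}_{12}(x,t;z)$.

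Finally I would read off $M^{(1)}_{12}$ from the explicit expression for $M^{(1)}$ displayed just above the theorem (obtained from the integral representation (\ref{ixbgs})). In the $(1,2)$ slot the diagonal piece $i\sigma_3 Q_-$ contributes $iq_-$; the second column of the residue sum contributes $\sum_{n=1}^{2N+\delta}A[\eta_n^*]\,\mathrm{e}^{2i\theta(\eta_n^*)}\mu_{-11}(\eta_n^*)$; and the Cauchy integral contributes $-\tfrac{1}{2\pi i}\int_\Sigma(M^+J)_{12}(\zeta)\,\mathrm{d}\zeta$. Multiplying the resulting $M^{(1)}_{12}$ by $-i$ then reproduces exactly formula (\ref{recons-d}).

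I expect the only genuine subtlety, rather than a hard obstacle, to be the justification of the term-by-term asymptotic expansion and of the column-wise comparison of the $z^0$ coefficient: because $\mu_{+1},\mu_{-2}$ and $\mu_{-1},\mu_{+2}$ are analytic in different half-planes, the expansion of $M$ (equal to $M^+$ in $D_+$ and $M^-$ in $D_-$) must be understood in the region appropriate to each column, precisely as in the focusing computation. Granting the earlier propositions on analyticity and asymptotic behavior, the remaining manipulations are purely algebraic.
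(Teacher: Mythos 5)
Your proposal is correct and takes essentially the same route as the paper, which for the defocusing case simply carries over the focusing argument: substitute $M\,\mathrm{e}^{i\theta\sigma_3}$ into (\ref{lax-x}), expand $M=I+z^{-1}M^{(1)}+O(z^{-2})$, and match the $z^0$ coefficient to get $Q=\tfrac{i}{2}[M^{(1)},\sigma_3]$, whose $(1,2)$ entry read off from the displayed $M^{(1)}$ gives (\ref{recons-d}). Your sign bookkeeping for the defocusing maps $k=\tfrac12(z+q_0^2/z)$, $\lambda=\tfrac12(z-q_0^2/z)$ and the commutator computation are both correct.
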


\subsubsection{Trace formulae and theta condition}

In the same manner, one can give the trace formulae as
\begin{align}
s_{11}(z)&=\exp\left(-\frac{1}{2\pi i}\int_\Sigma\frac{\log\left[1-\rho(\zeta)\,\tilde\rho(\zeta)\right]}{\zeta-z}\,\mathrm{d}\zeta\right)\prod_{n=1}^{2N+\delta}\frac{z-\eta_n}{z-\eta_n^*},\\[0.05in]
s_{22}(z)&=\exp\left(\frac{1}{2\pi i}\int_\Sigma\frac{\log\left[1-\rho(\zeta)\,\tilde\rho(\zeta)\right]}{\zeta-z}\,\mathrm{d}\zeta\right)\prod_{n=1}^{2N+\delta}\frac{z-\eta_n^*}{z-\eta_n}.
\end{align}
Note that
\begin{gather}
\prod_{n=1}^{2N+\delta}\frac{z-\eta_n}{z-\eta_n^*}=\prod_{n=1}^N\frac{\left(z-z_n\right)\left(z+z_n^*\right)}{\left(z-z_n^*\right)\left(z+z_n\right)}\left(\frac{z-iq_0}{z+iq_0}\right)^\delta.
\end{gather}

Let $z\to 0$, one obtains the theta condition as
\begin{align}
\mathrm{arg}\,\frac{q_+}{q_-}=\frac{1}{2\pi}\int_\Sigma\frac{\log\left[1-\rho(\zeta)\,\tilde\rho(\zeta)\right]}{\zeta}\,\mathrm{d}\zeta+\delta\,\pi.
\end{align}
According to the symmetry in Eqs. (\ref{s1}, \ref{ss1}), one has $\rho(\zeta)\,\tilde\rho(\zeta)=\rho(-\zeta)\,\tilde\rho(-\zeta)$. Then
\begin{gather*}
\int_{-\infty}^0\frac{\log\left[1-\rho(\zeta)\,\tilde\rho(\zeta)\right]}{\zeta}\,\mathrm{d}\zeta=\int_{+\infty}^0\frac{\log\left[1-\rho(\zeta)\,\tilde\rho(\zeta)\right]}{\zeta}\,\mathrm{d}\zeta,
\end{gather*}
which infers that the theta condition is
$\mathrm{arg}\,\frac{q_+}{q_-}=\delta\,\pi,\, \delta=0, 1,$ which means that as $\delta=1$, one has the opposite boundary conditions $q_+=-q_-$ at infinity, whereas $\delta=0$, one has the same boundary conditions $q_+=q_-$ at infinity.

\begin{figure}[!t]
\centering
\includegraphics[scale=0.42]{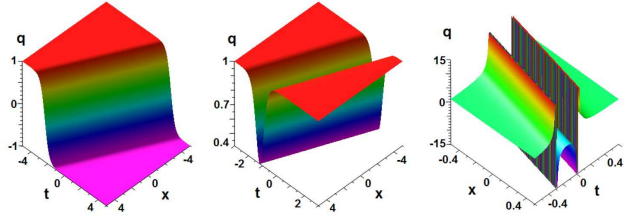}
\caption{Defousing mKdV equation with NZBCs. Left: kink solution with parameters: $N=0,\, \delta=1,\, q_{+}=-1,\, q_-=1,\, \mathrm{Im}\,A[i]=1$.  Middle: dark soliton with parameters: $N=1,\, \delta=0,\, q_-=1, \,z_1=\mathrm{e}^{\frac{\pi}{4}\,i}, \,\mathrm{Re}\,A[z_1]=1$. Right: singular solution with parameters: $N=1,\, \delta=0,\, q_-=1,\, z_1=\mathrm{e}^{\frac{\pi}{4}\,i},\, \mathrm{Re}\,A[z_1]=-\frac{3}{2}$.}
\label{d1}
\end{figure}

\subsubsection{Reflectionless potential with simple poles}

Eqs.~(\ref{recons-d}, \ref{wdtn}) yield the following theorem:
\begin{theorem}
The reflectionless potential of the defocusing mKdV equation with NZBCs (\ref{mKdV}) is given in terms of determinants
\begin{gather} \label{solu3}
q(x, t)=q_-+
\frac{\mathrm{det}
\begin{bmatrix}
H&\beta \vspace{0.06in} \\
\alpha^T&0
\end{bmatrix}}
{\mathrm{det}\,H}\,i,
\end{gather}
where $H=\left(h_{kn}\right)_{\left(2N+\delta\right)\times \left(2N+\delta\right)}$, $\alpha=\left(\alpha_n\right)_{\left(2N+\delta\right)\times 1}$, $\beta=\left(\beta_k\right)_{\left(2N+\delta\right)\times 1}$ wtih
\begin{gather}
h_{kn}=\widehat C_n(\eta_k)-\frac{iq_-}{\eta_k}\,\delta_{k, n},  \quad \alpha_n=A[\eta_n^*]\,\mathrm{e}^{2i\theta(x, t; \eta_n^*)}, \quad \beta_k=-\frac{iq_-}{\eta_k}
\end{gather}
\end{theorem}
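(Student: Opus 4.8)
The plan is to specialize the general reconstruction machinery of the preceding subsection to the reflectionless case and then repackage the resulting finite linear system as a ratio of determinants by means of a Schur-complement identity.

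First I would set $\rho(z)=\tilde\rho(z)=0$, so that the partial jump matrix vanishes, $J(x,t;z)=0$. This immediately annihilates every $\int_\Sigma$ term in both the reconstruction formula (\ref{recons-d}) and the closing system (\ref{wdtn}). The reconstruction formula then collapses to the finite sum
\[
q(x,t)=q_--i\sum_{n=1}^{2N+\delta}A[\eta_n^*]\,\mathrm{e}^{2i\theta(\eta_n^*)}\,\mu_{-11}(\eta_n^*)=q_--i\,\alpha^T\gamma,
\]
in which I read off $\alpha_n=A[\eta_n^*]\,\mathrm{e}^{2i\theta(\eta_n^*)}$ and $\gamma_n=\mu_{-11}(\eta_n^*)$, precisely the vectors named in the statement.

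Next I would extract the scalar system satisfied by $\gamma$. With $J=0$, equation (\ref{wdtn}) is a vector identity in which each coefficient $\widehat C_n(\eta_k)-\frac{iq_-}{\eta_k}\delta_{k,n}$ is a scalar, so the two components decouple. Taking the top (the $11$-) component gives
\[
\frac{iq_-}{\eta_k}+\sum_{n=1}^{2N+\delta}\left(\widehat C_n(\eta_k)-\frac{iq_-}{\eta_k}\,\delta_{k,n}\right)\mu_{-11}(\eta_n^*)=0,\qquad k=1,\dots,2N+\delta,
\]
which is exactly $H\gamma=\beta$ with $h_{kn}=\widehat C_n(\eta_k)-\frac{iq_-}{\eta_k}\delta_{k,n}$ and $\beta_k=-\frac{iq_-}{\eta_k}$. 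Under the standing assumption that $H$ is nonsingular, this yields $\gamma=H^{-1}\beta$, hence $q(x,t)=q_--i\,\alpha^T H^{-1}\beta$. Finally I would convert the bilinear form $\alpha^T H^{-1}\beta$ into the determinantal quotient: applying the block-matrix (Schur-complement) formula to the bordered matrix, whose $1\times1$ corner block is $0$, gives
\[
\mathrm{det}\begin{bmatrix} H&\beta\\ \alpha^T&0\end{bmatrix}=\mathrm{det}\,H\cdot\bigl(0-\alpha^T H^{-1}\beta\bigr)=-\,\mathrm{det}\,H\,\bigl(\alpha^T H^{-1}\beta\bigr),
\]
so that $-\alpha^T H^{-1}\beta=\mathrm{det}\begin{bmatrix}H&\beta\\\alpha^T&0\end{bmatrix}\big/\mathrm{det}\,H$. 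Substituting into $q=q_-+i\,(-\alpha^T H^{-1}\beta)$ produces (\ref{solu3}).

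I expect the only genuine subtlety to be the nonsingularity of $H$: the determinantal formula is meaningful exactly when $\mathrm{det}\,H\neq0$, which is also what guarantees a unique $\gamma$ and hence a well-defined potential. The componentwise decoupling and the block-determinant identity are purely algebraic; the symmetry reductions (in particular $\eta_k^*=q_0^2/\eta_k$ and the third symmetry of the Jost solutions) have already been invoked upstream to close (\ref{wdtn}), so no further spectral input is required at this stage.
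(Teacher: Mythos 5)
Your proposal is correct and follows essentially the same route as the paper: the paper likewise sets $\rho=\tilde\rho=0$ so that $J=0$, reduces (\ref{recons-d}) and (\ref{wdtn}) to the finite system $H\gamma=\beta$ with $q=q_--i\,\alpha^T\gamma$, and packages $\alpha^TH^{-1}\beta$ via the bordered-determinant identity, exactly as in the simple-pole focusing case. Your explicit sign check and the remark on the nonsingularity of $H$ merely fill in details the paper leaves implicit.
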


\begin{figure}[!t]
\centering
\includegraphics[scale=0.42]{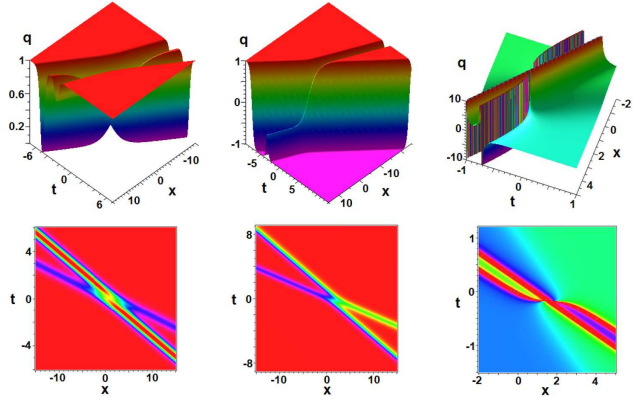}
\caption{Defousing mKdV equation with NZBCs. Left: $2$-dark-dark soliton with parameters: $N=2,\, \delta=0, \, q_-=1,\, z_1=\mathrm{e}^{\frac{\pi}{6}\,i},\, z_2=\mathrm{e}^{\frac{\pi}{3}\,i},\, \mathrm{Re}\,A[z_1]=\mathrm{Re}\,A[z_2]=1$. Middle: dark-soliton-kink solution with parameters $N=\delta=q_-=1,\, z_1=\mathrm{e}^{\frac{\pi}{4}\,i},\, \mathrm{Re}\,A[z_1]=\mathrm{Im}\,A[i]=1$. Right: interaction between the kink solution and singular solution with parameters $N=\delta=q_-=1,\, z_1=\mathrm{e}^{\frac{\pi}{4}\,i},\, \mathrm{Re}\, A[z_1]=-1, \, \mathrm{Im}\,A[i]=1$.}
\label{d2}
\end{figure}

The reflectionless potential contains parameters $N,\, \delta,\, q_-,\, z_n, \,\mathrm{Re}\,A[z_n],\, \mathrm{Im}\,A[iq_0],\, n=1, 2, \cdots, N$.
 The solution (\ref{solu3}) possesses the distinct wave structures for these different parameters:

\begin{itemize}

 \item {} As $\delta=0,\, N\not=0$, it exhibits an $N$-dark (or singular)  soliton solution of the defocusing mKdV equation with NZBCs $q_{+}=q_-$;

  \item {} As $\delta=1$, it stands for a kink-($N$-dark-soliton, or $N$-singular) solution of the defocusing mKdV equation with NZBCs $q_{+}=-q_-$;

   \end{itemize}

  In the following, we explicitly show some special wave structures for the double-pole soliton solution (\ref{solu3}) as follows:
 \begin{itemize}

 \item {} As $\delta=1,\, N=0$, the reflectionless one-kink soliton solution for the defocusing mKdV equation with NZBCs (\ref{mKdV}) reads
\begin{gather}
q(x,t)=q_--\frac{2cq_-\mathrm{e}^\varphi}{c\,\mathrm{e}^\varphi+2q_-},\quad c=\mathrm{Im}\,A[iq_0], \quad \varphi=2q_0\left(x+2q_0^2t\right),
\end{gather}
which is exhibited in Fig.~\ref{d1}(left) for the NZBCs $q_-=-q_+=1$;

 \item {} As $\delta=0,\, N=1$, the reflectionless $1$-soliton solution for the defocusing mKdV equation with NZBCs (\ref{mKdV}) is
\begin{gather}
q(x,t)=q_--\frac{8a\sin^2{c}\tan c\,q_0^2\,\mathrm{e}^\varphi}{\left(a\mathrm{e}^\varphi+2q_-\tan{c}\right)^2-4q_0^2\sin^2{c}\tan^2{c}},
\end{gather}
where
\begin{gather*}
c=\mathrm{arg}\,z_1\in\left(0,\frac{\pi}{2}\right),\quad a=\mathrm{Re}\,A[z_1],\quad \varphi=2q_0\sin{c}\left[x+2q_0^2\left(1+2\cos^2{c}\right)t\right].
\end{gather*}
As $aq_->0$, it displays a dark soliton (see Fig.~\ref{d1}(middle)), whereas $aq_-<0$, it is a singular solution (see Fig.~\ref{d1}(right)). This solution is blowup appearing at two lines of $x, t$, but the background of blowup solution is a non-zero constant, which means that this phenomenon may occur at the extreme conditions due to the effect of mutual repulsion of waves.

\begin{figure}[!t]
\centering
\includegraphics[scale=0.36]{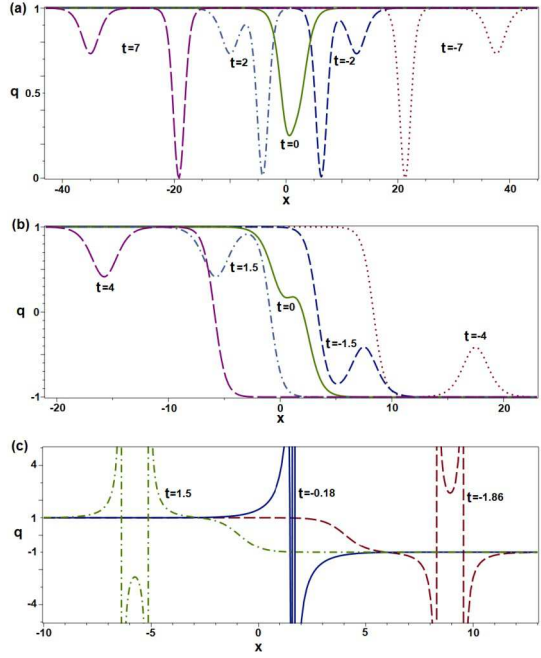}
\caption{Defousing mKdV equation with NZBCs. Profiles of the solutions given by Fig.~\ref{d2} for the distinct times. (a) Fig. \ref{d2} (left); (b) Fig. \ref{d2} (middle); (c) Fig. \ref{d2} (right).}
\label{d2p}
\end{figure}

  \item {} As $\delta=0,\, N=2$, Fig.~\ref{d2} (left) exhibits the elastic interaction of  two dark solitons with NZBCs $q_{\pm}=1$. Moreover, we  find that for $t=-7$, the low-amplitude dark soliton is located behind the high-amplitude one, after a short time, at near $t=0$ they consist of a dark soliton, and then the low-amplitude dark soliton is located before the high-amplitude one (see Fig.~\ref{d2p}(a)).

   \item {} As $\delta=1,\, N=1$, Fig.~\ref{d2} (middle) displays the interaction of one-dark soliton and one-kink soliton with NZBCs $q_-=-q_+=1$. Moreover, when $t=-4$, they display the interaction of a bright soliton and a kink soliton, and the background of the bright soliton is located on the right branch of the kink soliton (i.e., the kink approaches to $-1$ as $x\to \infty$). After a short time,
       the bright soliton becomes a dark soliton, and the dark soliton is located on the left branch of the kink soliton (i.e., the kink soliton approaches to $1$ as $x\to -\infty$) (see Fig.~\ref{d2p}(b)).

    \item {} As $\delta=1,\, N=1$, Fig.~\ref{d2} (right) illustrates  the interaction of one-kink soliton and one-singular solution with NZBCs $q_-=-q_+=1$. Moreover, when $t=-1.86$, they display the interaction of a singular solution and a kink soliton, and the singular points appear in the right branch of the kink soliton (i.e., the kink approaches to $-1$ as $x\to \infty$). After a short time,
       the singular points gradually appear in the left branch of the kink soliton (i.e., the kink soliton approaches to $1$ as $x\to -\infty$) (see Fig.~\ref{d2p}(c)).

  \end{itemize}

\section{Conclusions and discussions}

In conclusion, we have presented a rigorous theory of the IST for both focusing and defocusing mKdV equations with NZBCs such that we give their sing-pole and double-pole solutions by solving the corresponding Reimann-Hilbert problems. Moreover, we present the explicit expressions for the reflectionless potentials for the focusing and defocusing mKdV equations with NZBCs. Particularly, we exhibit the dynamical behaviors of some respective soliton structures including kink, dark, bright, breather solitons, and their interactions.

It should be pointed out that the difference between the mKdV equation with NZBCs and one with ZBCs is that the former needs to deal with a two-sheeted Riemann surface, which leads to different continuous  spectra, an additional symmetry, and more complicated discrete spectra. In contrast to the nonlinear Schr\"odinger equation with NZBCs (see,e.g., Refs.~\cite{Demontis2013,Biondini2014, Pichler2017}), the real potential for the mKdV equation with NZBCs in the ZS-AKNS scattering problem (Lax pair) adds more one reduction condition, which makes the corresponding direct and inverse problems be more complicated such as symmetries and discrete spectra. Moreover, the approach used in this paper can also be extended to study the ISTs for the mKdV equation with the asymmetric NZBCs, multi-component mKdV equations, nonlocal single or multi-component mKdV equations, and other nonlinear integrable systems with (asymmetric) NZBCs, which will be further discussed in other literatures.

\vspace{0.1in}
\baselineskip=15pt

\noindent {\bf Acknowledgements}

\vspace{0.05in}
The authors would like to thank Prof. G. Biondini for the valuable suggestions and discussions. This work was partially supported by the NSFC under grants Nos. 11731014 and 11571346, and CAS Interdisciplinary Innovation Team.

\end{document}